\newtheorem{Defi}[theorem]{Definition}
\newcommand{\QED}{\hspace*{\fill}\vrule height6pt width6pt depth0pt}
\newcommand{\Lin}{\mathbf{Lin}}
\newcommand{\Stable}{\longrightarrow_{st}}
\newcommand{\Linear}{\longrightarrow_{lin}}
\newcommand{\Dom}{\mathsf{dom}}
\newcommand{\Rep}[3]{#1 \stackrel{#2}{\longrightarrow} #3}
\newcommand{\Tot}{\mathsf{Tot}}
\newcommand{\STot}{\mathsf{STot}}
\newcommand{\coh}{\,\raisebox{-.2em}{\shortstack{$\frown$\\[-.3em]$\smile$}}\,}
\newcommand{\scoh}{\,\raisebox{0em}{\shortstack{$\frown$\\[-.3em]$$}}\,}
\newcommand{\icoh}{\,\raisebox{-.2em}{\shortstack{$\smile$\\[-.3em]$\frown$}}\,}
\newcommand{\BO}[1]{\langle #1 \rangle}
\newcommand{\Tr}{\mathsf{tr}}
\newcommand{\Den}{\mathsf{den}}
\newcommand{\LR}{\mathcal{LR}}
\newcommand{\Fin}{\mathsf{fin}}
\newcommand{\Max}{\mathsf{max}}
\newcommand{\with}{\,\&\,}
\def\llto{\mathbin{-\mkern-3mu\circ}}              
\newcommand{\X}{{\boldsymbol{X}}}
\newcommand{\Y}{{\boldsymbol{Y}}}
\newcommand{\Z}{{\boldsymbol{Z}}}
\newcommand{\U}{{\boldsymbol{U}}}
\newcommand{\B}{{\boldsymbol{B}}}
\newcommand{\RR}{{\boldsymbol{R}}}
\newcommand{\Nat}{\mathbb{N}}
\newcommand{\Dyad}{\mathbb{D}}
\newcommand{\Real}{\mathbb{R}}
\newcommand{\XX}{\mathbb{X}}
\newcommand{\YY}{\mathbb{Y}}
\newcommand{\ZZ}{\mathbb{Z}}
\newcommand{\Iff}{\quad \Longleftrightarrow\quad}
\begin{document}

\title{Coherence Spaces and Uniform Continuity}

\author{Kei Matsumoto}
\institute{RIMS, Kyoto University\\
\texttt{kmtmt@kurims.kyoto-u.ac.jp}}

\maketitle

\begin{abstract}
In this paper, we consider a model of classical linear logic based on 
coherence spaces endowed with a notion of totality. 
If we restrict ourselves to total objects, 
each coherence space can be regarded as a uniform space and 
each linear map as a uniformly continuous function. 
The linear exponential comonad then assigns to each uniform space $\X$ 
the finest uniform space $\,!\,\X$ compatible with $\X$. 
By a standard realizability construction, it is possible to consider 
a theory of representations in our model. 
Each (separable, metrizable) uniform space, 
such as the real line $\Real$, can then be represented by 
(a partial surjecive map from) a coherence space with totality. 
The following holds under certain mild conditions: 
a function between uniform spaces $\XX$ and $\YY$ 
is uniformly continuous if and only if it is realized by 
a total linear map between the coherence spaces representing $\XX$ and $\YY$. 
\end{abstract}

\section{Introduction}
Since the inception of Scott's domain theory in 1960's, 
\emph{topology} and \emph{continuity} have been playing a prominent role
in denotational understanding of logic and computation. 
On the other hand, \emph{uniformity} and \emph{uniform continuity}
have not yet been explored so much.
The purpose of this paper is to bring them into the setting of 
denotational semantics 
by relating them to another denotational model: 
\emph{coherence spaces} and \emph{linear maps}. Our principal idea is that 
linear maps should be uniformly continuous, not just in analysis,
but also in denotational semantics. The following situation, 
typical for computable real functions (in the sense of \cite{Ko91}), illustrates 
our idea.

\begin{example}
Imagine that each real number $x \in \Real$ is presented by a 
rational Cauchy sequence $(x_n)_{n\in\Nat}$ with $|x - x_n| \leq 2^{-n}$.
Let $f: \Real \to \Real$ be a computable function 
which is uniformly continuous.
Then there must be a function $\mu : \Nat \to \Nat$,
called a \emph{modulus of continuity}, such that 
an approximation of $f(x)$ with precision $2^{-m}$
can be computed from a single rational number $x_{\mu(m)}$, 
no matter where $x$ is located on the real line.
Thus one has to access the sequence $(x_n)$ (regarded as an oracle)
only once.

On the other hand, 
if  $f: \Real \to \Real$ is not uniformly continuous, 
it admits no uniform modulus of continuity. Hence one has to accsess 
$(x_n)$ at least twice to obtain an approximation of $f(x)$,
once for figuring out the location of $x$
and thus obtaining 
a local modulus of continuity $\mu$ around $x$, 
once for getting the approximate value $x_{\mu(m)}$. 
\end{example}

Thus there is a difference in \emph{query complexity} between
uniformly continuous and non-uniformly continuous functions.
This leads us to an inspiration that linear maps, whose query complexity
is 1, should be somehow related to uniformly continuous functions.
To materialize this inspiration, we work with 
coherence spaces with totality.


\emph{Coherence spaces}, introduced by Girard \cite{Gi87}, 
are domains which are simply presented as undirected reflexive graphs.
It was originally introduced as a denotational semantics for System F, 
and later led to the discovery of linear logic. 
One of the notable features of coherence spaces 
is that there are two kinds of morphisms coexisting: stable and linear maps. 

\emph{Totalities}, which originate in domain theory
(eg.\ \cite{Gi86,No90,Be93}), are often attached to coherence spaces
(eg.\ \cite{KN97}). Specifically,
a \emph{coherence space with totality} in our sense is 
a coherence space $\X$ equipped with a set $\mathcal{T}_\X$ of cliques called a \emph{totality}, 
so that for any $a\in \mathcal{T}_\X$ there exists $\mathfrak{c}\in\mathcal{T}_{\X^{\perp}}$ with $a\cap\mathfrak{c}\neq\emptyset$, and vice versa. 
Totalities are usually employed to restrict objects and morphisms to total ones,
while we use them to impose a uniform structure on $\X$:
when restricted to ``strict" ones (to be defined later), 
a totality
$\mathcal{T}_\X$ can be seen as a set of 
\emph{ideal points} of a uniform space $\XX$, 
while a co-totality $\mathcal{T}_{\X^\perp}$ as the \emph{uniform sub-basis} 
for $\XX$.
Moreover, this allows us to prove that every 
``total" linear map $F: \X \Linear \Y$ is uniformly continuous
(though not vice versa).

The category of coherence spaces with totality and total linear maps 
forms a model of classical linear logic. 
In this setting, 
the linear exponential comonad $!$ admits an interesting interpretation:
it assigns to each uniform space $\X$ 
the \emph{finest} uniform space $\,!\,\X$ compatible with $\X$. 

We then apply our framework to \emph{computable analysis},
where people study computability over
various continuous and analytic structures
(such as the real numbers, metric spaces and topological spaces).
An essential prerequisite for this is that 
each abstract space should be \emph{concretely} represented.
While traditional approaches employ Baire spaces 
\cite{KW85,We00,BHW08} or
Scott-Ershov domains \cite{Bl97,ES99,SHT08}, 
we here consider representations based on coherence spaces.

This program has been already launched by  \cite{MT16}, 
where we have suitably defined \emph{admissible} representations 
based on coherence spaces (by importing various results from
the type-two theory of effectivity). The principal result 
there is as follows. Let $\XX$ and $\YY$ be topological spaces
admissibly represented by (partial surjections from) 
coherence spaces $\X$ and $\Y$ (eg.\ 
the real line $\Real$ is admissibly represented by a
coherence space $\RR$ in Example \ref{ex-real}).
Then a function $f: \XX \to \YY$ is sequentially continuous if and only if 
$f$ is realized (i.e., tracked) by a stable map
$F: \X \Stable \Y$.

In passing, we have also observed in \cite{MT16} a curious phenomenon:
when restricted to $\Real$, 
a function $f: \Real \to \Real$ is \emph{uniformly} continuous if and only if 
$f$ is realized by a \emph{linear} map $F: \RR \Linear \RR$.
Thus linearity in coherence spaces corresponds 
to uniform continuity of real functions. 
While we did not have any rationale or generalization,
at that time, we now have a better understanding of uniform continuity
in terms of coherence spaces. As a result, we 
are able to systematically generalize
the above result to \emph{separable metrizable uniform spaces}. 

\paragraph{Plan of the paper.}
We quickly review uniform spaces in \S 2.1
and coherence spaces in \S 2.2.
We then introduce in \S 3.1 
the notion of coherence space with totality, 
total and strict cliques,
and study the categorical structure. 
In \S 3.2, we explore the uniformities induced by co-totalities. 
In \S 4, we give an application of our model to computable analysis. 
We conclude in \S 5 with some future work. 

\section{Preliminaries}
\subsection{Uniform Spaces}
We review some concepts regarding uniform spaces. See \cite{Is64,Wi70} for details. 

A \emph{cover} of a set $X$ is a family of subsets $\mathcal{U}\subseteq \mathcal{P}(X)$ 
such that $\bigcup \mathcal{U} = X$. 
Let $\mathcal{U}$ and $\mathcal{V}$ be covers of $X$. 
We say that $\mathcal{U}$ \emph{refines} $\mathcal{V}$, 
written $\mathcal{U}\preceq\mathcal{V}$, 
if for every $U\in\mathcal{U}$ there exists $V\in\mathcal{V}$ with $U\subseteq V$. 
We then have the \emph{meet} (greatest lower bound) of $\mathcal{U}$ and $\mathcal{V}$ defined as 
$
\{U\cap V : \mbox{$U\in\mathcal{U}$ and $V\in\mathcal{V}$}\}
$,
denoted by $\mathcal{U}\wedge \mathcal{V}$. 

When $\mathcal{U}$ is a cover and $A$ is a subset of the set $X$, 
the \emph{star} $\mathrm{st}(A; \mathcal{U})$ is defined as 
$\bigcup\{U\in\mathcal{U}: A\cap U\neq\emptyset\}$.
Given any cover $\mathcal{U}$ of $X$, 
its \emph{star closure} is defined as 
$\mathcal{U}^* := \{\mathrm{st}(U;\mathcal{U}): U\in\mathcal{U}\}$, 
which is also a cover of $X$ and is refined by $\mathcal{U}$. 
We say that a cover $\mathcal{U}$ \emph{star-refines} $\mathcal{V}$ 
if $\mathcal{U}^* \preceq \mathcal{V}$. 
\begin{Defi}\label{d-uniformity}
A family $\mu$ of covers of $X$ is called a \emph{Hausdorff uniformity} if 
it satisfies the following: 
\begin{itemize}
\item[{\bf (U1)}] If $\mathcal{U},\mathcal{V}\in\mu$, then $\mathcal{U}\wedge\mathcal{V} \in\mu$; 
\item[{\bf (U2)}] If $\mathcal{U}\in \mu$ and $\mathcal{U}\preceq \mathcal{V}$, then $\mathcal{V}\in\mu$; 
\item[{\bf (U3)}] For every $\mathcal{U}\in\mu$, there exists $\mathcal{V}\in\mu$ which star-refines $\mathcal{U}$; 
\item[{\bf (U4)}] Given any two distinct points $x,y\in X$, there exists $\mathcal{U}\in\mu$ such that 
no $U\in\mathcal{U}$ contains both $x$ and $y$ (the \emph{Hausdorff condition}). 
\end{itemize}
\end{Defi}

Throughout this paper we always assume the Hausdorff condition.
A \emph{(Hausdorff) uniform space} is a pair $\XX=(X,\mu_X)$, a set $X$ endowed with a (Hausdorff) uniformity. 
Given any cover $\mathcal{U} \in \mu_X$ and any points $x,y\in X$, 
we write $|x-y|<\mathcal{U}$ if $x,y\in U$ for some $U\in\mathcal{U}$. 
The condition {\bf (U4)} can be restated as follows: 
if $|x-y|<\mathcal{U}$ for every $\mathcal{U}\in\mu_X$ then $x=y$. 

Let $\XX=(X,\mu_X)$ and $\YY=(Y,\mu_Y)$ be uniform spaces. 
A \emph{uniformly continuous} function from $\XX$ to $\YY$ 
is a function $f:X\to Y$ satisfying that
for any $\mathcal{V}\in\mu_Y$ there exists $\mathcal{U}\in\mu_X$ with 
$|x-y|<\mathcal{U} \ \Longrightarrow \ |f(x)-f(y)|<\mathcal{V}$
for every $x,y\in X$. 
A function $f:\XX\to \YY$ is called \emph{uniform quotient} if 
it is surjective and 
for every function $g: \YY\to\ZZ$ to a uniform space $\ZZ$, 
$g$ is uniformly continuous iff $g\circ f:\XX\to \ZZ$ is. 

A \emph{(uniform) basis} of a uniformity $\mu$ is a subfamily $\beta\subseteq \mu$ 
such that for every $\mathcal{U}\in\mu$ there exists $\mathcal{V}\in\beta$ with $\mathcal{V} \preceq \mathcal{U}$. 
A \emph{(uniform) sub-basis} of a uniformity $\mu$ is a subfamily $\sigma\subseteq \mu$  
such that the finite meets of members of $\sigma$ form a basis: 
for every $\mathcal{U}\in\mu$ there exist finitely many $\mathcal{V}_1, \ldots ,\mathcal{V}_n \in\sigma$ 
with $\mathcal{V}_1 \wedge \cdots \wedge \mathcal{V}_n \preceq \mathcal{U}$. 
Notice that if a family of covers satisfies the conditions {\bf (U2)}-{\bf (U4)} (resp. {\bf (U3)}-{\bf (U4)}), 
it uniquely generates a uniformity as a basis (resp. sub-basis). 

For instance, every metric space is in fact a uniform space. 
A uniformity on a metric space $\XX$ is generated by 
a countable basis 
$\mathcal{U}_n :=\{\mathcal{B}(x;2^{-n}) : \mbox{$x\in \XX$}\}$ ($n=1,2,\ldots$), 
where $\mathcal{B}(x;2^{-n})$ is the open ball of center $x$ and radius $2^{-n}$. 

On the other hand, every uniform space $\XX=(X,\mu_X)$ 
can be equipped with a topological structure, called the \emph{uniform topology}. 
A set $O\subseteq X$ is open with respect to the uniform topology iff 
for every $p\in O$ there exists $\mathcal{U}\in\mu_X$ such that 
$\mathrm{st}(\{p\},\mathcal{U})\subseteq O$. 
We will denote by $\tau_{\mathsf{ut}}(\mu)$ the uniform topology induced by a uniformity $\mu$. 
Given any uniformity $\mu$ on $X$, one can choose a basis $\beta$ consisting of open covers. 

It is easy to see that uniform continuity implies topological continuity: 
if a function $f:(X,\mu_X) \to (Y,\mu_Y)$ is uniformly continuous, then 
it is continuous as a function $f:(X,\tau_{\mathsf{ut}}(\mu_X)) \to (Y,\tau_{\mathsf{ut}}(\mu_Y))$. 

We say that a uniformity $\mu$ on $X$ is \emph{compatible} with a topology $\tau$ if 
$\tau= \tau_{\mathsf{ut}}(\mu)$. 
A topological space $\XX=(X,\tau)$ is said to be \emph{uniformizable} if 
there exists a uniformity $\mu$ on $X$ compatible with $\tau$. 
It is known that 
a topological space is uniformizable if and only it is Tychonoff.
For a metrizable space, the induced uniformity defined above is indeed compatible with 
the metric topology. 
In general, a uniformity is induced by a metric if and only if 
it has a countable basis.

Every Tychonoff (i.e. uniformizable) space $\XX=(X,\tau)$ 
can be equipped with the \emph{finest} uniformity $\mu_\mathrm{fine}$ which contains all of the uniformities compatible with $\tau$. 
A \emph{fine} uniform space is a uniform space endowed with the finest uniformity (compatible with its uniform topology). 
For a Tychonoff space $\XX=(X,\tau_X)$ 
we denote by $\XX_{\mathrm{fine}}=(X,\mu_{\mathrm{fine}})$ the fine uniform space 
compatible with $\tau_X$. 

The finest uniformity can be characterized as follows. 
Let $\mathbf{Tych}$ be the category of 
Tychonoff spaces and continuous maps, 
and $\mathbf{Unif}$ be the category of 
uniform spaces and uniformly continuous maps. 
The \emph{fine} functor $F:\mathbf{Tych} \longrightarrow \mathbf{Unif}$, 
which assigns to each Tychonoff space $\XX$ 
the fine uniform space $\XX_\mathrm{fine}$, 
is left adjoint to 
the \emph{topologizing} functor $G:\mathbf{Unif} \longrightarrow \mathbf{Tych}$, 
which assigns to each uniform space $\YY$ the topological space $\YY_{\mathsf{ut}}$ endowed with 
the uniform topology:
\begin{equation}
\xymatrix{
*+++{\mathbf{Tych}} \ar@/^10pt/[rr]^*++{F} & *+{\bot} & *+++{\mathbf{Unif}\ .} \ar@/^10pt/[ll]^*+{G} \\
}
\label{eq-tych-unif}
\end{equation}
Thus, for every Tychonoff space $\XX$ and uniform space $\YY$, 
$$
\mbox{$f:\XX \to \YY_{\mathsf{ut}}$ is continuous}\quad \Longleftrightarrow \quad \mbox{$f:\XX_\mathsf{fine}\to \YY$ is uniformly continuous}\ .
$$

\subsection{Coherence Spaces}
We here recall some basics of coherence spaces. 
See \cite{Gi87,Me09} for further information. 
\begin{Defi}
A \emph{coherence space} $\X=(X,\coh)$ consists of a set $X$ 
of tokens and a reflexive symmetric relation $\coh$ on $X$, 
called \emph{coherence}. 
\end{Defi}
Throughout this paper, we assume that every token set $X$ is countable. 
This assumption is quite reasonable in practice, 
since we would like to think of tokens as computational objects 
(see \cite{As90} for the study on computability over coherence spaces). 

A \emph{clique} of $\X$ is a set of pairwise coherent tokens in $X$.
By abuse of notation, we denote by $\X$ the set of all cliques of the coherence space $\X$. 
We also use the notations $\X_\Fin$ and $\X_\Max$ for the sets of 
all finite cliques and maximal cliques, respectively. 

Given tokens $x,y\in X$, we write $x\scoh y$ (\emph{strict coherence}) 
if $x\coh y$ and $x\neq y$. 
Notice that coherence and strict coherence are mutually definable from each other.
The coherence relation $\coh$ on the token set $X$ is naturally extended to $\X$ as: 
$a\coh b \iff a\cup b\in\X$ ($a,b\in\X$). 
This is equivalent to say that any token in $a$ is coherent with any token in $b$. 

An \emph{anti-clique} of $\X$ is a set of pairwise \emph{incoherent} tokens in X, 
that is, a subset $\mathfrak{a}\subseteq X$ such that $\neg (x\scoh y)$ for every $x,y\in \mathfrak{a}$. 
We will use the symbol $\icoh$ for incoherence: $x\icoh y \iff \neg (x\scoh y)$. 
Alternatively, an anti-clique of $\X$ is 
a clique of the \emph{dual} coherence space $\X^{\perp}:=(X,\icoh)$. 

It is known that the set $\X$ of cliques ordered by inclusion $\subseteq$ 
is in fact a Scott domain, whose compact elements are finite cliques of $\X$. 
Thus the \emph{Scott topology} on $\X$ is generated by $\{\BO{a}:a\in\X_\Fin\}$ as a basis, where 
$\BO{a}$ is an upper set defined by $\BO{a}:=\big\{b\in\X : a\subseteq b\big\}$.
We will denote by $\tau_{\mathrm{Sco}}$ this topology on $\X$. 

Given a subset $\mathcal{A} \subseteq \X$, we also write $\tau_{\mathrm{Sco}}$
for the induced subspace topology on $\mathcal{A}$.
Note that $\X$ is a $T_0$-space, and is countably-based due to the assumption that 
the token set $X$ is countable. 
Moreover, $(\X_\Max,\tau_{\mathrm{Sco}})$ is Hausdorff. 

Coherence spaces have a sufficiently rich structure to \emph{represent} abstract spaces. 
Let us begin with a coherence space for the real line $\Real$: 
\begin{example}[coherence space for real numbers]
\label{ex-real}
Let $\mathbb{D}:= \mathbb{Z}\times \mathbb{N}$, 
where each pair $(m,n)\in\Dyad$ is identified with 
a \emph{dyadic rational number} $m/2^{n}$. 
We use the following notations for $x=(m,n)\in\Dyad$: $\Den(x):=n$; 
$\mathbb{D}_n := \{x\in\Dyad \mid \Den(x)=n\}$ for each $n\in\Nat$; and 
$[x]:= [(m-1)/2^{n}; (m+1)/2^{n}]$. 

Hence $n=\Den(x)$ denotes the exponent of the denominator 
of $x$, and $[x]$ denotes the compact interval of $\Real$ 
with center $x$ and width $2^{-(n-1)}$. 

Let $\mathbf{R}$ be a coherence space $(\Dyad, \coh)$ defined by 
$x\scoh y$ iff $\Den(x)\neq\Den(y)$ and $[x]\cap [y]\neq \emptyset$.
The latter condition immediately implies the inequality $|x-y| \le 2^{-\Den(x)} + 2^{-\Den(y)}$, 
hence each maximal clique $a\in\mathbf{R}_\Max$ corresponds to 
a \emph{rapidly-converging} Cauchy sequence $\{x_n:n\in\Nat\}$ such that  
$x_n\in\Dyad_n$ for each $n\in\Nat$ and $|x_n -x_m| \le 2^{-n}+2^{-m}$ for every $n,m\in\Nat$. 

We then have a mapping $\rho_{\RR}:  \mathbf{R}_\Max\to \Real$ defined by 
$\rho_{\RR}(a):= \lim_{n\to \infty} x_n $. 
\end{example}

\begin{Defi}[stable and linear maps]
Let $\X$ and $\Y$ be coherence spaces. 
A function $F:\X\to \Y$ is said to be \emph{stable}, written $F:\X\Stable \Y$, 
if it is Scott-continuous and $a\coh b \Longrightarrow F(a\cap b) = F(a)\cap F(b)$ for any cliques $a,b\in\X$. 

A function $F:\X\to \Y$ is said to be \emph{linear}, written $F:\X\Linear \Y$, 
if it satisfies that $a = \sum_{i\in I} a_i \Longrightarrow F(a) = \sum_{i\in I}F(a_i)$, 
for any clique $a\in \X$ and any family of cliques $\{a_i\}_{i\in I}\subseteq \X$. 
Here $\sum$ means the disjoint union of cliques. 
\end{Defi}
It is easy to see that linearity implies stability.

There are alternative definitions. 
Given a function $F:\X\longrightarrow \Y$, call $(a,y)\in\X_\Fin \times Y$ 
a \emph{minimal pair} of $F$ if $F(a)\ni y$ and there is no proper subset 
$a'\subsetneq a$ such that $F(a')\ni y$. 
Denote by $\Tr(F)$ the set of all minimal pairs, called the \emph{trace} of $F$. 
Now, $F$ is a stable map iff it is $\subseteq$-monotone and satisfies that: 
if $F(a)\ni y$, there is a \emph{unique} $a_0 \subseteq a$ such that $(a_0, y)\in \Tr(F)$. 

If $F$ is furthermore linear, preservation of disjoint unions ensures that 
the finite part $a_0$ must be a singleton. 
Thus $F$ is a linear map iff it is $\subseteq$-monotone and satisfies that: 
if $F(a)\ni y$, there is a \emph{unique} $x \in a$ such that $(\{x\}, y)\in \Tr(F)$. 
By abuse of notation, we simply write $\Tr(F)$ for the set 
$\{(x,y)|(\{x\},y)\in\Tr(F)\}$ if $F$ is supposed to be linear. 

Below are some typical constructions of coherence spaces. 
Let $\X_i =(X_i ,\coh_i)$ be a coherence space for $i=1,2$. 
We define: 
\begin{itemize}
\item $\mathbf{1}:= \bot = (\{\bullet\},\{(\bullet,\bullet)\})$. 
\item $\X_1 \otimes \X_2 := (X_1 \times X_2 ,\coh)$, where 
$(z,x)\coh (w,y)$ holds iff both $z\coh_1 w$ and $x\coh_2 y$. 
\item $\X_1 \llto \X_2 := (X_1 \times X_2 ,\coh)$, where 
$(z,x)\scoh (w,y)$ holds iff $z\coh_1 w$ implies $x\scoh_2 y$. 
\item $!\,\X_1  := ((\X_1)_{\Fin} ,\coh)$, where 
$a \coh b$ holds iff $a\coh_1 b$. 
\end{itemize}
We omit the definitions of additives ($\with$ and $\top$). 
It easily follows that $\X^\perp \simeq \X\llto \perp$. 

A notable feature of coherence spaces 
is that they have two closed structures:
the category $\mathbf{Stab}$ of coherence spaces and stable maps 
is \emph{cartesian closed}; 
while the category $\Lin$ of 
coherence spaces and linear maps 
equipped with $(\mathbf{1},\otimes,\llto, \bot)$ 
is \emph{*-autonomous}. 
Moreover, the co-Kleisli category of the linear exponential comonad $!$ 
on $\Lin$ is isomorphic to $\mathbf{Stab}$ in such a way that 
a stable map $F:\X \Stable \Y$ can be
identified with a linear map $G:\,!\,\X\Linear \Y$
so that $\Tr(F)=\Tr(G) \subseteq \X_\Fin \times Y$. 
This leads to a 
linear-non-linear adjunction:
\begin{equation}
\xymatrix{
*+++{\mathbf{Stab}} \ar@/^10pt/[rr]^*++{K} &
 *+{\bot} &
  *+++{\Lin\ .} \ar@/^10pt/[ll]^*+{L} \\
}
\label{eq-stab-lin}
\end{equation}

The purpose of this paper is to establish a connection between 
the two adjunctions (\ref{eq-tych-unif}) and (\ref{eq-stab-lin}), 
which will be done in \S 3.2.

We do not describe the categorical structures in detail, but let us just mention the following. 
Given any linear map $F:\X\Linear\Y$, we have $\Tr(F)\in \X\llto \Y$. 
Conversely, given any clique $\kappa\in \X\llto\Y$, 
the induced linear map $\widehat{\kappa}:\X\Linear\Y$ is defined by
$\widehat{\kappa}(a):= \{y\in Y : \mbox{$(x,y)\in\kappa$ for some $x \in a$}\}$.

\section{Uniform Structures on Coherence Spaces}
In this section, we introduce a notion of (co-)totality 
on coherence spaces and observe that 
co-totality induces a uniform structure on the set of total cliques. 
\subsection{Coherence Spaces with Totality}
Let $\X$ be a coherence space. 
For any clique $a\in\X$ and any anti-clique $\mathfrak{c}\in\X^\perp$, 
$a\cap \mathfrak{c}$ is either empty or a singleton. If the latter is 
the case, we write $a\perp \mathfrak{c}$.

For any subset $\mathcal{A}\subseteq \X$, we write $\mathcal{A}^\perp$ for the set
$\{\mathfrak{c}\in\X^\perp :\mbox{$\forall a\in \mathcal{A}$. $a\perp \mathfrak{c}$}\}$ of anti-cliques of $\X$. 
One can immediately observe the following: 
(i) $\mathcal{A}\subseteq \mathcal{A}^{\perp\perp}\subseteq \X$; 
(ii) $\mathcal{B}\subseteq \mathcal{A}$ implies $\mathcal{A}^\perp \subseteq \mathcal{B}^\perp$; and
(iii) $\mathcal{A}^\perp = \mathcal{A}^{\perp\perp\perp}$. 
As a consequence, $\mathcal{A} = \mathcal{A}^{\perp\perp}$ iff
$\mathcal{A} = \mathcal{B}^\perp$ for some $\mathcal{B} \subseteq \X^\perp$.

\begin{Defi}[coherence spaces with totality]
A \emph{coherence space with totality} is 
a coherence space $\X$ endowed with a set $\mathcal{T}_\X \subseteq \X$
such that $\mathcal{T}_\X =\mathcal{T}_\X^{\perp\perp}$, called a \emph{totality}. 
Cliques in $\mathcal{T}_{\X}$ are said to be \emph{total}.
\end{Defi}

It is clear that a totality $\mathcal{T}_\X$ is 
upward-closed with respect to $\subseteq$, and is closed under 
compatible intersections: $a, b \in
\mathcal{T}_\X$ with $a \coh b$ implies $a\cap b \in \mathcal{T}_\X$.
As a consequence, every total clique $a \in \mathcal{T}_\X$ is associated 
with a unique minimal total clique 
$a^\circ := \bigcap \{b \in  \mathcal{T}_\X : b \subseteq a\} \in 
\mathcal{T}_\X$. Such a total clique is called 
\emph{strict} (or \emph{material} in the sense of ludics).
We write $\mathcal{T}_{\X}^{\circ}$ for the set of strict total cliques of $\X$. 
We have
$$
\mathcal{T}_\X = \{b \in \X : a \subseteq b \mbox{ for some } a \in \mathcal{T}_\X^\circ\}
= (\mathcal{T}_\X^\circ)^{\bot\bot}.
$$
Thus defining a totality is essentially equivalent to defining a strict totality. 
Notice that $a\in \mathcal{T}_\X$ iff 
for every $\mathfrak{c}\in\mathcal{T}_{\X}^{\perp}$,  $a\perp \mathfrak{c}$. 
In particular, $a\in \mathcal{T}_\X^\circ$ iff
for every $\mathfrak{c}\in(\mathcal{T}_\X^\perp)^\circ$ 
there exists $x\in a$ such that $x\in \mathfrak{c}$ and
dually, for every $x\in a$ there exists 
$\mathfrak{c}\in (\mathcal{T}_{\X}^\perp)^\circ$ such that $x\in \mathfrak{c}$.



Our use of totality is inspired by Kristiansen and Normann \cite{KN97}, 
although they use a set of anti-cliques of $!\X$ as totality
and they do not consider strictness and bi-orthogonality.
Similar constructions are abundant in the literature, eg.,
\emph{totality spaces} by Loader \cite{Loa94} and \emph{finiteness spaces} by Ehrhard \cite{Ehr05}. 

\begin{example}\label{ex-totality}
Consider the coherence space $\RR=(\Dyad,\coh)$ for real numbers defined in Example \ref{ex-real}. 
Then
$\mathcal{T}_\RR:= \RR_\Max$ is a totality on $\RR$: 
it is easy to see that $\mathcal{T}_{\RR}^{\perp} = \{\mathcal{D}_n :n\in\Nat\}$, 
hence $\mathcal{T}_{\RR}^{\perp\perp} = \RR_\Max = \mathcal{T}_\RR$. 
Moreover, $\mathcal{T}_{\RR}^\circ = \RR_\Max$ 
since $a^\circ \subseteq a$ and $a^\circ \in\RR_\Max$ imply $a^\circ =a$. 
\end{example}
\begin{example}\label{ex-standard-totality}
The idea of Example \ref{ex-real} can be generalized to a more general class. 
Let $\XX=(X,\mu)$ be a uniform space with a countable basis $\beta=\{\mathcal{U}_n\}_{n\in\Nat}$ 
consisting of countable covers. 
A \emph{metrization theorem} states that such a uniform space must be separable metrizable 
(see \cite{Ke75} for instance). 

Let $\B_\XX =(B,\coh)$ be a coherence space defined as $B=\coprod_{n\in\Nat}\mathcal{U}_n$ and 
$(n,U)\scoh (m,V)$ iff $n\neq m$ and $U\cap V\neq\emptyset$, 
where $\coprod_{n\in\Nat}\mathcal{U}_n$ means the \emph{coproduct} 
$\{(n,U):\mbox{$n\in\Nat$, $U\in\mathcal{U}_n$}\}$. 
Each $a\in (\B_\XX)_\Max$ corresponds to a sequence of members of uniform covers: 
$a=\{U_n\}_{n\in \Nat}$ such that $U_n \in\mathcal{U}_n$ for each $n\in\Nat$ and 
$U_n \cap U_m \neq\emptyset$ for every $n,m\in \Nat$. 
By the Hausdorff property, it indicates \emph{at most} one point in $\XX$. 

The separable metrizable space $\XX$ is represented by a partial map 
$\delta_\XX :\subseteq \B_\XX \to \XX$ defined by 
$\delta_\XX (a) := p\quad \Longleftrightarrow \quad p \in \bigcap_{n\in\Nat} U_n$, 
for every $p\in \XX$ and $a=\{U_n:n\in\Nat\}\in(\B_\XX)_\Max$. 
Let us define a totality by $\mathcal{T}_{\B_\XX} :=\Dom(\delta_\XX)^{\perp\perp}$. 

Notice that we do not have 
$\Dom(\delta_\XX) = \Dom(\delta_\XX)^{\perp\perp}$ in general, even though 
$\{\mathcal{U}_n :n\in\Nat\}\subseteq \mathcal{T}_{\B_\XX}^\perp$, 
since $\Dom(\delta_\XX )^{\perp\perp}=(\B_\XX)_\Max$. 
To make $\Dom(\delta_\XX)$ itself a totality, 
we have to assume that $\XX$ is \emph{complete} 
(every Cauchy sequence must be converging). 
\end{example}

All constructions of coherence spaces are extended with totality 
in a rather canonical way.
Let $\X=(X,\coh_\X)$ and $\Y=(Y,\coh_\Y)$ be coherence spaces, 
and $\mathcal{T}_\X \subseteq \X$ and 
$\mathcal{T}_\Y \subseteq \Y$ be totalities of $\X$ and $\Y$, respectively. 
Define: 
\begin{itemize}
\item $\mathcal{T}_{\X^\perp}:=\mathcal{T}_{\X}^{\perp}$; 
$\mathcal{T}_{\mathbf{1}}:= \mathbf{1}_\Max$. 
\item $\mathcal{T}_{\X\otimes \Y}:= (\mathcal{T}_{\X} \otimes \mathcal{T}_{\Y})^{\perp\perp}$,
where $a \otimes b := \{(x,y) : x\in a, y \in b\}$ for 
$a \in \X$ and $b\in \Y$, and 
$\mathcal{T}_{\X} \otimes \mathcal{T}_{\Y}$ is pointwise defined.
\item 
$
\mathcal{T}_{\X\llto \Y}:= \{\kappa \in (\X_1 \llto \X_2):\mbox{$\widehat{\kappa}[\mathcal{T}_\X]\subseteq \mathcal{T}_\Y$}$.
\item $\mathcal{T}_{\,!\,\X}:= (\,!\, \mathcal{T}_{\X})^{\perp\perp}$, where
$\,!\, a := \{a_0 \in \X: a_0 \subseteq_{\Fin} a\}$ for $a\in \X$,
and $!\, \mathcal{T}_{\X}$ is pointwise defined.
\end{itemize}

The connectives $\otimes$ and $!$ admit
``internal completeness'' in the following sense.

\begin{proposition}\label{prop-total-cliques}
$(\mathcal{T}_\X \otimes \mathcal{T}_\Y)^{\perp\perp\circ}= \mathcal{T}_{\X}^{\circ}\otimes \mathcal{T}_{\Y}^{\circ}$ holds whenever 
totalities $\mathcal{T}_{\X}$,
$\mathcal{T}_{\Y}$, 
$\mathcal{T}_{\X}^\perp$ and
$\mathcal{T}_{\Y}^\perp$ are all nonempty. 
$(\,!\,\mathcal{T}_{\Z})^{\perp\perp\circ}= \,!\, (\mathcal{T}_{\Z}^{\circ})$
holds 
for an arbitrary totality $\mathcal{T}_{\Z}$. 
\end{proposition}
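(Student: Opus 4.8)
The plan is to use the fact, established just above, that a total clique is strict exactly when it is $\subseteq$-minimal among total cliques; thus $(\mathcal{T}_\X\otimes\mathcal{T}_\Y)^{\perp\perp\circ}$ and $(\,!\,\mathcal{T}_\Z)^{\perp\perp\circ}$ are the minimal members of the respective totalities. In each case the generating family ($\mathcal{T}_\X\otimes\mathcal{T}_\Y$, resp.\ $\,!\,\mathcal{T}_\Z$) already contains the candidate minimal cliques, namely $a\otimes b$ with $a\in\mathcal{T}_\X^\circ,b\in\mathcal{T}_\Y^\circ$, resp.\ $\,!\,a$ with $a\in\mathcal{T}_\Z^\circ$, so the work is to certify that these are minimal and that no others occur. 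The common engine is to attach to each token an explicit co-total anti-clique isolating it, built from the strict co-total anti-cliques of the component spaces; recall that a nonempty totality always has strict elements, since every total clique dominates its strict core $a^\circ$, so the witnessing anti-cliques used below exist whenever the relevant co-totalities are inhabited.

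For the tensor, the key construction is $\mathfrak{a}\otimes\mathfrak{b}:=\{(x,y):x\in\mathfrak{a},\,y\in\mathfrak{b}\}$ for $\mathfrak{a}\in\mathcal{T}_\X^\perp$ and $\mathfrak{b}\in\mathcal{T}_\Y^\perp$. First I would check that this is genuinely an anti-clique of $\X\otimes\Y$ (a clique of $\X^\perp\otimes\Y^\perp$ is always an anti-clique of $\X\otimes\Y$) and that it lies in $(\mathcal{T}_\X\otimes\mathcal{T}_\Y)^\perp$, via $(a\otimes b)\cap(\mathfrak{a}\otimes\mathfrak{b})=(a\cap\mathfrak{a})\otimes(b\cap\mathfrak{b})$, a product of two singletons. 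For $\supseteq$, given $a\in\mathcal{T}_\X^\circ$, $b\in\mathcal{T}_\Y^\circ$ and any total $c\subseteq a\otimes b$, a token $(x,y)\in(a\otimes b)\setminus c$ would be separated by $\mathfrak{a}\otimes\mathfrak{b}$ with $x\in\mathfrak{a}\in(\mathcal{T}_\X^\perp)^\circ$, $y\in\mathfrak{b}\in(\mathcal{T}_\Y^\perp)^\circ$ (these exist by strictness of $a,b$): then $c\cap(\mathfrak{a}\otimes\mathfrak{b})$, being a singleton inside $\{(x,y)\}$, equals $\{(x,y)\}$, forcing $(x,y)\in c$, a contradiction. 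For the converse I would take a minimal total $c$, form its projections $\pi_1(c)=\{x:\exists y,(x,y)\in c\}$ and $\pi_2(c)$ (both cliques, with $c\subseteq\pi_1(c)\otimes\pi_2(c)$), prove each total by testing $c$ against $\mathfrak{a}\otimes\mathfrak{b}$ with one factor fixed, and then show $c=\pi_1(c)\otimes\pi_2(c)$ by the same separation: the unique element of $c\cap(\mathfrak{a}\otimes\mathfrak{b})$ has first coordinate $x$ (any candidate lies in both the clique $\pi_1(c)$ and the anti-clique $\mathfrak{a}$, hence equals $x$) and second coordinate $y$. Minimality of $c$ finally forces both projections to be strict.

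For $\,!\,$ the separators are indexed by finite families $\mathfrak{c}_1,\dots,\mathfrak{c}_k\in\mathcal{T}_\Z^\perp$: I would set $\mathfrak{d}:=\{\{y_1,\dots,y_k\}:y_i\in\mathfrak{c}_i,\ \{y_1,\dots,y_k\}\in\Z\}$, a set of finite cliques of $\Z$. One checks that distinct members have incoherent union (their $\mathfrak{c}_i$-components must differ and are therefore incoherent), so $\mathfrak{d}$ is an anti-clique of $\,!\,\Z$, and that for every total $a$ the intersection $(\,!\,a)\cap\mathfrak{d}$ is the singleton $\{\{z_1,\dots,z_k\}\}$ with $\{z_i\}=a\cap\mathfrak{c}_i$, whence $\mathfrak{d}\in(\,!\,\mathcal{T}_\Z)^\perp$; the case $k=1$ gives the singleton anti-cliques $\{\{x\}:x\in\mathfrak{c}\}$. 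The inclusion $\supseteq$ then runs as before: for $a\in\mathcal{T}_\Z^\circ$ and total $c\subseteq\,!\,a$, each finite $a_0=\{x_1,\dots,x_k\}\subseteq_{\Fin}a$ is isolated inside $\,!\,a$ by the $\mathfrak{d}$ built from $\mathfrak{c}_i\ni x_i$ (available since $a$ is strict), forcing $a_0\in c$ and hence $c=\,!\,a$. For the converse, given a minimal total $c$ I would put $a:=\bigcup c$ (a clique of $\Z$ with $c\subseteq\,!\,a$), prove $a$ total using the singleton anti-cliques, and then establish $\,!\,(a^\circ)\subseteq c$ by the $\mathfrak{d}$-argument applied to finite subcliques of the strict core $a^\circ$, whose tokens are genuinely witnessed; since $\,!\,(a^\circ)$ is itself minimal total by the inclusion just proved, minimality of $c$ yields $c=\,!\,(a^\circ)$ with $a^\circ\in\mathcal{T}_\Z^\circ$ (and incidentally $a=a^\circ$).

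I expect the main obstacle to be the verification that the separators $\mathfrak{a}\otimes\mathfrak{b}$ and especially the rank-$k$ families $\mathfrak{d}$ are honest co-total anti-cliques — in particular the bookkeeping when a single token lies in several of the chosen $\mathfrak{c}_i$, which must be arranged so that distinct selections still give incoherent unions. A secondary point is the precise role of the nonemptiness hypotheses in the tensor statement: they are exactly what makes $(\mathcal{T}_\X\otimes\mathcal{T}_\Y)^\perp$ inhabited (hence rules out the empty total clique, which would otherwise be uniquely minimal) and what supplies the fixed anti-cliques $\mathfrak{a},\mathfrak{b}$ needed to prove the projections total. The $\,!\,$ statement is phrased for an arbitrary totality; here the witnessing anti-cliques are available whenever $\mathcal{T}_\Z^\perp$ is inhabited, and the degenerate situations (empty co-totality, giving $\mathcal{T}_\Z=\Z$) must be dispatched by a direct check, as minimal total cliques then reduce to the empty clique.
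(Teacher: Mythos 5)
Your plan is correct and follows essentially the same route as the paper's appendix proof: the same separating anti-cliques $\mathfrak{a}\bullet\mathfrak{b}:=\{(x,y):x\in\mathfrak{a},\ y\in\mathfrak{b}\}$ and $\wedge_i\mathfrak{c}_i$, the same projections of total cliques, and the same appeal to minimality (equivalently strictness) to close both inclusions, with your $\bigcup c$ and its strict core playing the role of the paper's $\alpha^1$. The one ingredient left implicit is where the co-total anti-clique $\mathfrak{a}\ni x$ comes from when you prove $c=\pi_1(c)\otimes\pi_2(c)$ \emph{before} the projections are known to be strict; the paper supplies it by showing that the projections $\mathfrak{d}^1,\mathfrak{d}^2$ of a (strict) anti-clique $\mathfrak{d}\in(\mathcal{T}_\X\otimes\mathcal{T}_\Y)^\perp$ witnessing a token of the strict clique $c$ are again co-total, and your argument needs that same small lemma (or, alternatively, run the separation only over tokens of $\pi_1(c)^\circ\otimes\pi_2(c)^\circ$ and finish by minimality of $c$).
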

A  proof is given in Appendix. 

Let us now turn to the morphisms.
\begin{Defi}
A linear map $F:\X\Linear \Y$ is called \emph{total}
if $\Tr(F)\in \mathcal{T}_{\X\llto\Y}$, 
or equivalently if $F$ preserves totality: $F[\mathcal{T}_\X]\subseteq \mathcal{T}_\Y$. 
\end{Defi}
A stable map $F:\X\Stable \Y$ is \emph{total} 
if so is the corresponding linear map $G:\,!\,\X\Linear \Y$ given in \S 2.2.

Denote by $\mathbf{Lin}_\Tot$ 
the category of coherence spaces with totality and total linear maps. 
It turns out to be a model of \emph{classical linear logic (CLL)}: 
\begin{theorem}\label{thm-lin-stot}
The category $\Lin_\Tot$ is a model of classical linear logic 
(i.e., a $\ast$-autonomous category with finite (co)products and a linear exponential (co)monad). 
\end{theorem}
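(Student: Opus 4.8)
The plan is to exploit the fact, recalled in \S 2.2, that the base category $\Lin$ of coherence spaces and linear maps is already a model of classical linear logic, and to show that equipping objects with totalities and restricting to \emph{total} linear maps preserves every piece of this structure. Writing $U:\Lin_\Tot\to\Lin$ for the forgetful functor, all structural morphisms of $\Lin_\Tot$ will be the underlying linear maps of the corresponding structural morphisms of $\Lin$; consequently all coherence diagrams, naturality squares and the triangle/pentagon/Seely equations hold automatically in $\Lin_\Tot$, since they already hold in $\Lin$ and $U$ is faithful. Thus the whole proof reduces to two verifications: (a) each totality assignment ($\mathcal{T}_{\X\otimes\Y}$, $\mathcal{T}_{\X\llto\Y}$, $\mathcal{T}_{\X^\perp}$, $\mathcal{T}_{\,!\,\X}$, and the additive ones) is a genuine totality, i.e.\ is $(-)^{\perp\perp}$-closed; and (b) every structural isomorphism of $\Lin$, together with its inverse, is \emph{total}, and every structural (co)lax map of the (co)monad preserves totality. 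Throughout I will use the characterisation that a linear map $F$ is total iff $F[\mathcal{T}_\X]\subseteq\mathcal{T}_\Y$, equivalently $\Tr(F)\in\mathcal{T}_{\X\llto\Y}$.

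For the $\ast$-autonomous skeleton, well-definedness in (a) is immediate for $\otimes$, $!$ and $(-)^\perp$, since these totalities are literally defined as biorthogonal closures, and $\mathcal{T}_{\X^{\perp\perp}}=\mathcal{T}_\X^{\perp\perp}=\mathcal{T}_\X$ gives the involutivity of duality on the nose; for $\llto$ one uses $\X\llto\Y\simeq(\X\otimes\Y^\perp)^\perp$ to present $\mathcal{T}_{\X\llto\Y}$ as an orthogonal, hence closed. The involution $\X\cong\X^{\perp\perp}$ is then the identity on totalities, so it is trivially total both ways. For the associativity, symmetry and unit isomorphisms of $\otimes$ I would invoke Proposition \ref{prop-total-cliques}: it identifies $(\mathcal{T}_\X\otimes\mathcal{T}_\Y)^{\perp\perp\circ}$ with $\mathcal{T}_\X^\circ\otimes\mathcal{T}_\Y^\circ$, so that on strict generators the tensor totality is literally a product of totalities; since a total clique lies above a strict one and the isos are monotone linear bijections, totality transports across them. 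The crux of this part is the closed structure: one must show the evaluation $\ev:(\X\llto\Y)\otimes\X\Linear\Y$ is total. This is where the definition of $\mathcal{T}_{\X\llto\Y}$ as the set of totality-preservers meets the internal completeness of $\otimes$: a strict total clique of $(\X\llto\Y)\otimes\X$ is, up to the closure, $\Tr(\widehat\kappa)\otimes a$ with $\widehat\kappa$ total and $a\in\mathcal{T}_\X^\circ$, whence $\ev$ sends it to $\widehat\kappa(a)\in\mathcal{T}_\Y$, and the general case follows by upward closure of $\mathcal{T}_\Y$.

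For the finite (co)products I would set $\mathcal{T}_{\X\with\Y}$ to be the set of cliques whose two components are respectively total (with $\mathcal{T}_\top$ the single point and $\mathcal{T}_{\llzero}=\emptyset^{\perp\perp}$), which is componentwise and readily $(-)^{\perp\perp}$-closed, and then check that the projections, the pairing $\langle F,G\rangle$, and the terminal map are total; all of this is direct from the componentwise definition. The coproduct and its unit are handled by duality, $\mathcal{T}_{\X\oplus\Y}=\mathcal{T}_{(\X^\perp\with\Y^\perp)^\perp}$, so no separate work is needed. Finally, for the linear exponential comonad I would verify that the dereliction $\varepsilon_\X:\,!\,\X\Linear\X$, the digging $\delta_\X:\,!\,\X\Linear\,!\,!\,\X$, and the Seely isomorphisms $\,!(\X\with\Y)\cong\,!\X\otimes\,!\Y$ and $\,!\top\cong\mathbf{1}$ are total, again using Proposition \ref{prop-total-cliques}, whose second clause identifies $(\,!\,\mathcal{T}_\Z)^{\perp\perp\circ}$ with $\,!\,(\mathcal{T}_\Z^\circ)$; this lets one compute the action of each structural map on strict total cliques explicitly and conclude by upward closure, exactly as for $\otimes$.

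\textbf{The main obstacle} is precisely the totality of evaluation and of the comonad structure maps. Because $\mathcal{T}_{\X\otimes\Y}$ and $\mathcal{T}_{\,!\,\X}$ are defined by biorthogonal closure, their total cliques need not be of the form $a\otimes b$ or $\,!\,a$, so one cannot read off the behaviour of a structural map directly. Internal completeness (Proposition \ref{prop-total-cliques}) is what resolves this, by showing that the closure ``opens up'' at the level of strict cliques; verifying the hypotheses of that proposition (nonemptiness of the relevant totalities and co-totalities) and reducing each coherence map to its action on strict generators is the genuinely delicate part of the argument, while everything else is inherited from $\Lin$ through the faithful forgetful functor.
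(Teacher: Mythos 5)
Your proposal is essentially correct as a programme, but it takes a genuinely different route from the paper. The paper proves Theorem~\ref{thm-lin-stot} by a single appeal to the general theory of orthogonality categories: one checks that the relation $\perp$ between cliques and anti-cliques is a \emph{symmetric stable orthogonality} on the $\ast$-autonomous category $\Lin$, and then Theorem~5.14 of \cite{HS03} delivers the tight orthogonality category $\mathbf{T}(\Lin)\simeq\Lin_\Tot$ as a model of classical linear logic wholesale. Your direct verification is more elementary and self-contained, but it has to confront by hand exactly the points that the Hyland--Schalk axioms package up once and for all: the nonemptiness side conditions in Proposition~\ref{prop-total-cliques} are genuine restrictions (e.g.\ $\mathcal{T}_\X^\perp=\emptyset$ whenever $\emptyset\in\mathcal{T}_\X$), so degenerate totalities would need a separate case analysis in your treatment of the associativity and unit isomorphisms; likewise the identification of $\mathcal{T}_{(\X\otimes\Y)\otimes\Z}=((\mathcal{T}_\X\otimes\mathcal{T}_\Y)^{\perp\perp}\otimes\mathcal{T}_\Z)^{\perp\perp}$ with $((\mathcal{T}_\X\otimes\mathcal{T}_\Y)\otimes\mathcal{T}_\Z)^{\perp\perp}$ is precisely the ``stability'' of the orthogonality and is not automatic. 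Note also that for the totality of evaluation and of the comonad structure maps, the economical tool is Lemma~\ref{lem-tot-ext} (a map sending generators into $\mathcal{B}$ sends $\mathcal{A}^{\perp\perp}$ into $\mathcal{B}^{\perp\perp}$) rather than the full internal completeness of Proposition~\ref{prop-total-cliques}, which you only really need where you must \emph{decompose} a strict total clique of a compound type. In short: your approach is viable and instructive, the paper's is shorter and shifts the delicate bookkeeping onto a cited general theorem.
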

This is due to Theorem 5.14 in \cite{HS03}. 
In fact, our construction of $\Lin_\Tot$ is essentially following the idea of 
\emph{tight orthogonality category} $\mathbf{T}(\Lin)$ induced by 
the orthogonality relation $\perp$, which can be shown to be a 
\emph{symmetric stable orthogonality} in $\Lin$. 

The category $\mathbf{Stab}_\Tot$ 
of coherence spaces with totality and total stable maps, 
is trivially the co-Kleisli category of 
the linear exponential comonad $\,!\,$ and hence is cartesian closed. 
%

\subsection{Uniformities induced by co-Totality}\label{subsec-unif}
We shall next show that each coherence space with totality 
can be equipped with a uniform structure. 
Our claim can be summarized as follows. 
Given a coherence space $\X$ with totality $\mathcal{T}_\X$,
the set of strict total cliques $\mathcal{T}_\X^\circ$ is 
endowed with both a topology and a uniformity: 
\begin{center}
{\it the totality $\mathcal{T}_\X^\circ$ is a set of points endowed with 
a Hausdorff topology $\tau_{\mathrm{Sco}}$\ ,}
\end{center}
while 
\begin{center}
{\it the co-totality $(\mathcal{T}_\X^{\perp})^\circ$ is a uniform sub-basis.}
\end{center}
Moreover, the co-totality $(\mathcal{T}_{\,!\,\X}^{\perp})^\circ$ on $\,!\,\X$
is a uniform basis, which induces the \emph{finest} uniformity on $\mathcal{T}_\X^\circ$.

Recall that each finite clique $a\in \X_\Fin$ generates 
the upper set $\BO{a}:=\{b\in \X: b\supseteq a\}$ in such a way that 
incoherence corresponds to disjointness: 
$$
\neg (x\coh y) \ \Longleftrightarrow \ \BO{x}\cap\BO{y}=\emptyset; \qquad\quad
\neg (a\coh b) \ \Longleftrightarrow \ \BO{a}\cap\BO{b}=\emptyset
$$
for every $x,y \in X$ and $a,b\in \X_\Fin$, where $\BO{x}$ stands for $\BO{\{x\}}$ by abuse of notation. Let us write 
$\BO{x}^\circ:= \BO{x} \cap \mathcal{T}_{\X}^\circ$ and
$\BO{a}^\circ:= \BO{a} \cap \mathcal{T}_{\X}^\circ$.


We call each $\mathfrak{c} \in (\mathcal{T}_{\X}^\perp)^\circ$ a \emph{uni-cover} of $\mathcal{T}_{\X}^\circ$. It 
can be seen as a disjoint cover 
$\{ \BO{x}^\circ : x \in \mathfrak{c}\}$
of $\mathcal{T}_{\X}^\circ$, since $\mathfrak{c}$ being total
precisely means 
that every $a\in\mathcal{T}_{\X}^\circ$ is contained in 
$\BO{x}^\circ$ for some $x\in\mathfrak{c}$.
Thus $\mathcal{T}_{\X}^\circ = \sum_{x\in\mathfrak{c}} \BO{x}^\circ$.
Moreover,
$\mathfrak{c}$ being strict means that 
$\BO{x}^\circ$ is nonempty 
for every $x\in \mathfrak{c}$.
That is, restricting $\mathfrak{c}
\in \mathcal{T}_{\X}^\perp$ to $\mathfrak{c}^\circ
\in (\mathcal{T}_{\X}^\perp)^\circ$ amounts to 
removing all empty $\BO{x}^\circ$ from the disjoint cover 
$\{ \BO{x}^\circ : x \in \mathfrak{c}\}$.

On the other hand, each $\mathfrak{C} \in (\mathcal{T}_{!\X}^\perp)^\circ$ 
is called an \emph{unbounded-cover} of $\mathcal{T}_{\X}^\circ$.
It is also identified with a disjoint cover
$\{\BO{a}^\circ: a\in \mathfrak{C}\}$ of $\mathcal{T}_{\X}^\circ$,
consisting of nonempty upper sets, so that  
$\mathcal{T}_{\X}^\circ =\sum_{a\in \mathfrak{C}} \BO{a}^\circ$. 

To emphasize the uniformity aspect,
we will use the notations $\sigma_{\X}^{\mathrm{b}}:= (\mathcal{T}_{\X}^\perp)^\circ$ and 
$\beta_{\X}^{\mathrm{ub}}:= (\mathcal{T}_{\,!\,\X}^\perp)^\circ$. 
Each uni-cover can be considered as an unbounded-cover consisting of singletons: 
$\sigma_{\X}^{\mathrm{b}} \subseteq \beta_{\X}^{\mathrm{ub}}$  
by $\mathfrak{c}\in \sigma_{\X}^{\mathrm{b}} \mapsto \{\{x\}: x\in\mathfrak{c}\}\in \beta_{\X}^{\mathrm{ub}}$. 


The families $\sigma_{\X}^\mathrm{b}$ and 
$\beta_{\X}^\mathrm{ub}$ indeed generate uniformities on $\mathcal{T}_{\X}^\circ$: 

\begin{proposition}
$(\mathcal{T}_\X^\circ, \beta_{\X}^\mathrm{ub})$
satisfies axioms {\rm ({\bf U1})}, 
{\rm ({\bf U3})} and {\rm ({\bf U4})}, while 
$(\mathcal{T}_\X^\circ, \sigma_{\X}^\mathrm{b})$
satisfies
{\rm ({\bf U3})} and {\rm ({\bf U4})}
in Definition \ref{d-uniformity}.
\end{proposition}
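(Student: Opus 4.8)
The plan is to lean on the structural fact, already established in the paragraphs preceding the statement, that every member of $\sigma_{\X}^{\mathrm{b}}$ and of $\beta_{\X}^{\mathrm{ub}}$ is a \emph{disjoint} cover (a partition) of $\mathcal{T}_\X^\circ$ into nonempty upper sets $\BO{x}^\circ$, resp. $\BO{a}^\circ$. This single observation disposes of {\bf (U3)} at once: for any disjoint cover $\mathcal{U}$ and any member $U\in\mathcal{U}$ one has $\mathrm{st}(U;\mathcal{U})=U$, since distinct members are disjoint and so only $U$ itself meets $U$; hence $\mathcal{U}^*=\mathcal{U}\preceq\mathcal{U}$, i.e.\ every such cover star-refines itself and serves as its own witness. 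As this argument is insensitive to whether the indexing tokens are singletons or arbitrary finite cliques, {\bf (U3)} holds for both $\sigma_{\X}^{\mathrm{b}}$ and $\beta_{\X}^{\mathrm{ub}}$.

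For {\bf (U4)} I would first treat $\sigma_{\X}^{\mathrm{b}}$ and then transport the result to $\beta_{\X}^{\mathrm{ub}}$ along the inclusion $\sigma_{\X}^{\mathrm{b}}\subseteq\beta_{\X}^{\mathrm{ub}}$. Given distinct $a,b\in\mathcal{T}_\X^\circ$, choose a token in their symmetric difference, say $x\in a\setminus b$. By the strictness characterisation of $\mathcal{T}_\X^\circ$ recalled above (for every $x\in a$ there is $\mathfrak{c}\in(\mathcal{T}_\X^\perp)^\circ$ with $x\in\mathfrak{c}$), fix such a $\mathfrak{c}$. Since $a,b$ are total and $\mathfrak{c}$ co-total, both $a\cap\mathfrak{c}$ and $b\cap\mathfrak{c}$ are singletons; the former is $\{x\}$, while the latter is some $\{x'\}$ with $x'\neq x$ because $x\notin b$. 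Thus $a\in\BO{x}^\circ$ and $b\in\BO{x'}^\circ$ lie in distinct members of the cover induced by $\mathfrak{c}$, so no member contains both, giving {\bf (U4)} for $\sigma_{\X}^{\mathrm{b}}$. The same cover, viewed as the unbounded-cover $\{\{x\}:x\in\mathfrak{c}\}$ with the identical induced partition $\{\BO{x}^\circ\}$, witnesses {\bf (U4)} for $\beta_{\X}^{\mathrm{ub}}$.

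It remains to prove {\bf (U1)} for $\beta_{\X}^{\mathrm{ub}}$, which I expect to be the only substantial step. Given $\mathfrak{C},\mathfrak{D}\in(\mathcal{T}_{\,!\,\X}^\perp)^\circ$ inducing partitions $\mathcal{U},\mathcal{V}$, I would first record the elementary identity $\BO{a}^\circ\cap\BO{b}^\circ=\BO{a\cup b}^\circ$ when $a\coh b$ (and $=\emptyset$ otherwise), so that $\mathcal{U}\wedge\mathcal{V}$, after discarding empty members, is exactly the partition induced by $\mathfrak{E}:=\{a\cup b: a\in\mathfrak{C},\ b\in\mathfrak{D},\ \BO{a\cup b}^\circ\neq\emptyset\}$. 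The crux is to show $\mathfrak{E}\in(\mathcal{T}_{\,!\,\X}^\perp)^\circ$. To verify totality I would reduce orthogonality to strict totals: using $\mathcal{T}_{\,!\,\X}^\perp=(\mathcal{T}_{\,!\,\X}^\circ)^\perp$ together with $\mathcal{T}_{\,!\,\X}^\circ=\,!\,(\mathcal{T}_\X^\circ)$ from Proposition~\ref{prop-total-cliques}, membership amounts to $\mathfrak{E}$ being an anti-clique of $\,!\,\X$ together with $\,!\,c\perp\mathfrak{E}$ for every $c\in\mathcal{T}_\X^\circ$, i.e.\ exactly one element of $\mathfrak{E}$ contained in each strict total $c$. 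This last point follows because the unique $\mathfrak{C}$- and $\mathfrak{D}$-tokens below $c$, say $a$ and $b$, determine and are determined by the element $a\cup b\subseteq c$ of $\mathfrak{E}$, whose $\BO{a\cup b}^\circ$ is nonempty since it contains $c$ itself — so passing to strict totals makes the strictness bookkeeping automatic and removes the worry that $\BO{a\cup b}^\circ$ could be empty.

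The genuinely delicate part, and the main obstacle, is the anti-clique property of $\mathfrak{E}$. Here I would argue by contradiction: if two distinct $e=a\cup b$ and $e'=a'\cup b'$ in $\mathfrak{E}$ were coherent in $\,!\,\X$, their union $e\cup e'$ would be a clique of $\X$, whence $a\cup a'$ and $b\cup b'$ would be cliques; but $\mathfrak{C}$ and $\mathfrak{D}$ are anti-cliques, so $a\scoh a'$ and $b\scoh b'$ are impossible, forcing $a=a'$ and $b=b'$ and hence $e=e'$, a contradiction. Once $\mathfrak{E}\in(\mathcal{T}_{\,!\,\X}^\perp)^\circ$ is established, strictness being built into its definition, the induced cover is precisely $\mathcal{U}\wedge\mathcal{V}$ (modulo empty members), so $\mathcal{U}\wedge\mathcal{V}\in\beta_{\X}^{\mathrm{ub}}$ and {\bf (U1)} follows. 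The whole difficulty is thus bookkeeping: identifying the set-theoretic meet of two partitions with the partition induced by a bona fide strict total anti-clique of $\,!\,\X$, the incoherence argument above being the one creative ingredient.
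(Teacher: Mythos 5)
Your proof is correct and takes essentially the same route as the paper's: the self-star-refinement of disjoint covers for {\bf (U3)}, the strictness-of-$a$ witness for {\bf (U4)}, and for {\bf (U1)} the very same meet $\{a\cup b : a\in\mathfrak{C},\ b\in\mathfrak{D}\}^\circ$ verified by orthogonality against $!c$ for strict total $c$. The only difference is that you spell out the anti-clique property of the meet explicitly, which the paper leaves implicit.
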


\begin{proof}
{\rm ({\bf U1})} 
Given $\mathfrak{A},\mathfrak{B}\in \beta^{\mathrm{ub}}_\X$, 
let $\mathfrak{A}\wedge\mathfrak{B} :=
\{a\cup b\ :\ \mbox{$a\in\mathfrak{A}$, $b\in\mathfrak{B}$ and $a\coh b$}\}^\circ$. It is indeed the meet of  $\mathfrak{A}$ and $\mathfrak{B}$, and belongs to
$\beta^{\mathrm{ub}}_\X = (!\mathcal{T}_\X^{\circ\bot})^\circ$. 
In fact, given $!c \in !\mathcal{T}_{\X}^\circ$, there are 
$a \in \mathfrak{A}$ and $b \in \mathfrak{B}$ such that $a \in !c$ and 
$b \in !c$. Hence $a\cup b \in !c \cap (\mathfrak{A} \wedge \mathfrak{B})$.\\
{\rm ({\bf U3})} 
In general, we have 
$\mathrm{st}(U,{\mathfrak{C}})=\bigcup \{V \in\mathfrak{C}: U\cap V\neq\emptyset\}= U$ 
for any disjoint cover ${\mathfrak{C}}$ of $\mathcal{T}_\X^\circ$ and
$U\in \mathfrak{C}$. Hence 
each $\mathfrak{A}\in \beta^{\mathrm{ub}}_\X$, which is disjoint,
star-refines itself. \\
{\rm ({\bf U4})} Assume that $a,b\in\mathcal{T}_{\X}^\circ$ with $a\neq b$. 
Then there are $x \in a \backslash b$ and $\mathfrak{c} \in 
\sigma_{\X}^\mathrm{b} = 
(\mathcal{T}_{\X}^\perp)^\circ$ 
such that $x \in \mathfrak{c}$ by strictness of $a$.
As  $a \in \BO{x}^\circ$,
$b \not\in \BO{x}^\circ$ and $\mathfrak{c}$ is a disjoint cover,
this witnesses the Hausdorff property for $\sigma_{\X}^\mathrm{b}$
(so for 
$\beta_{\X}^\mathrm{ub}$ too).
\QED
\end{proof}

Consequently, 
$\beta_{\X}^\mathrm{ub}$, as basis, 
generates a uniformity $\mu_{\X}^\mathrm{ub}$,
called the \emph{unbounded uniformity}, while
$\sigma_{\X}^\mathrm{b}$, as sub-basis,
generates another uniformity 
$\mu_{\X}^\mathrm{b} \subseteq \mu_{\X}^\mathrm{ub}$,
called the \emph{bounded uniformity}.
The index $\X$ will be often dropped if it is obvious from the context. 

As one may have noticed, the uniformities satisfy 
axiom {\rm ({\bf U3})} for a rather trivial reason. Nevertheless,
viewing coherence spaces with totality as uniform spaces 
will be \emph{essential} to establish our main theorem (Theorem \ref{thm-real-unif}).

Unlike $\beta^{\mathrm{ub}}_\X$, the set
$\sigma^{\mathrm{b}}_\X$ is not closed under finite meets. 
To make it closed, we have to extend it to 
another set $\beta^{\mathrm{b}}_\X \subseteq \beta^{\mathrm{ub}}_\X$ which consists of 
all finite meets of uni-covers: 
$\mathfrak{c}_1 \wedge \cdots \wedge \mathfrak{c}_m
:= \{ \{x_1, \dots, x_m\} \in \X : x_i \in \mathfrak{c}_i\ (1\leq i\leq m)\}^\circ$.
Notice that 
$\mathfrak{c}_1 \wedge \cdots \wedge \mathfrak{c}_m$
consists of cliques of size at most $m$. That is why 
$\mu_{\X}^\mathrm{b}$ is called bounded.

Although $\mu_{\X}^\mathrm{b}$ and $\mu_{\X}^\mathrm{ub}$ are 
different as uniformities, they do induce the same uniform topology.

\begin{proposition}
The (un)bounded uniformity on $\mathcal{T}_{\X}^\circ$ is compatible with
the Scott topology restricted to $\mathcal{T}_{\X}^\circ$. That is, 
$\tau_{\mathsf{ut}}(\mu^{\mathrm{b}})=\tau_{\mathsf{ut}}(\mu^{\mathrm{ub}}) =\tau_{\mathrm{Sco}}$.
\end{proposition}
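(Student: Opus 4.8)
The plan is to establish the two inclusions $\tau_{\mathrm{Sco}} \subseteq \tau_{\mathsf{ut}}(\mu^{\mathrm{b}})$ and $\tau_{\mathsf{ut}}(\mu^{\mathrm{ub}}) \subseteq \tau_{\mathrm{Sco}}$ separately, and then to close the loop using the already-noted containment $\mu^{\mathrm{b}} \subseteq \mu^{\mathrm{ub}}$, which immediately gives $\tau_{\mathsf{ut}}(\mu^{\mathrm{b}}) \subseteq \tau_{\mathsf{ut}}(\mu^{\mathrm{ub}})$ (fewer uniform covers can only produce fewer uniform-open sets). Chaining yields $\tau_{\mathrm{Sco}} \subseteq \tau_{\mathsf{ut}}(\mu^{\mathrm{b}}) \subseteq \tau_{\mathsf{ut}}(\mu^{\mathrm{ub}}) \subseteq \tau_{\mathrm{Sco}}$, forcing all three topologies to coincide. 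The workhorse throughout is the observation, already used for axiom {\rm ({\bf U3})}, that every basic cover in play is \emph{disjoint}: for such a cover $\mathfrak{A}$ and any $p \in \mathcal{T}_\X^\circ$, the star $\mathrm{st}(\{p\},\mathfrak{A})$ is simply the unique block of $\mathfrak{A}$ containing $p$, a set of the shape $\BO{a_0}^\circ$ with $a_0$ finite.

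For $\tau_{\mathsf{ut}}(\mu^{\mathrm{ub}}) \subseteq \tau_{\mathrm{Sco}}$ I would show that every $\mu^{\mathrm{ub}}$-open set is Scott-open. Given such an $O$ and a point $p \in O$, choose $\mathcal{U} \in \mu^{\mathrm{ub}}$ with $\mathrm{st}(\{p\},\mathcal{U}) \subseteq O$ and refine it by a basic $\mathfrak{A} \in \beta^{\mathrm{ub}}_\X$ with $\mathfrak{A} \preceq \mathcal{U}$; then $\mathrm{st}(\{p\},\mathfrak{A}) \subseteq \mathrm{st}(\{p\},\mathcal{U}) \subseteq O$. By the disjointness remark, $\mathrm{st}(\{p\},\mathfrak{A}) = \BO{a_0}^\circ$ for the unique $a_0 \in \mathfrak{A}$ with $a_0 \subseteq p$, which is exactly a basic Scott-open set containing $p$. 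Hence $O$ is a union of basic Scott opens, i.e. Scott-open.

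The main step, and the only place where strictness is genuinely used, is $\tau_{\mathrm{Sco}} \subseteq \tau_{\mathsf{ut}}(\mu^{\mathrm{b}})$. It suffices to show that each basic Scott-open $\BO{a}^\circ$, with $a = \{x_1, \dots, x_m\} \in \X_\Fin$, is $\mu^{\mathrm{b}}$-open. Fix $p \in \BO{a}^\circ$, so $a \subseteq p$ and $p \in \mathcal{T}_\X^\circ$. Since $p$ is strict, the characterization of $\mathcal{T}_\X^\circ$ recalled earlier supplies, for each generator $x_i \in p$, a uni-cover $\mathfrak{c}_i \in \sigma^{\mathrm{b}}_\X = (\mathcal{T}_\X^\perp)^\circ$ with $x_i \in \mathfrak{c}_i$. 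The finite meet $\mathfrak{c}_1 \wedge \cdots \wedge \mathfrak{c}_m \in \beta^{\mathrm{b}}_\X \subseteq \mu^{\mathrm{b}}$ is again a disjoint cover, and its unique block containing $p$ is generated by the tuple whose $i$-th entry is the single token $y_i$ of $\mathfrak{c}_i$ lying in $p$ (recall $p \perp \mathfrak{c}_i$). As $x_i \in p \cap \mathfrak{c}_i$ and this intersection is a singleton, $y_i = x_i$, so the block is precisely $\BO{\{x_1, \dots, x_m\}}^\circ = \BO{a}^\circ$. Thus $\mathrm{st}(\{p\}, \mathfrak{c}_1 \wedge \cdots \wedge \mathfrak{c}_m) = \BO{a}^\circ$, witnessing that $\BO{a}^\circ$ is $\mu^{\mathrm{b}}$-open.

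I expect the only delicate point to be verifying that the meet cover is disjoint and identifying its block at $p$ correctly. Two distinct selections differ in some coordinate $j$, where the chosen tokens lie in the anti-clique $\mathfrak{c}_j$ and are distinct, hence incoherent, so no clique — in particular no point of $\mathcal{T}_\X^\circ$ — can belong to both blocks. This disjointness both licenses the star-equals-block simplification used everywhere and pins the block at $p$ down to $\BO{a}^\circ$ rather than some strictly larger upper set; getting this matching exactly right is where the argument must be careful.
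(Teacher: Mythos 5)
Your proof is correct and follows essentially the same route as the paper's: both arguments rest on the observation that disjointness makes $\mathrm{st}(\{p\},\mathfrak{A})$ equal to the unique block $\BO{a_0}^\circ$, and on the strictness-based construction of a finite meet of uni-covers containing any finite clique $a$ with $\BO{a}^\circ\neq\emptyset$ (which is exactly Lemma~\ref{lemma-upper-extension} in the appendix). Your splitting into two one-sided inclusions chained via $\mu^{\mathrm{b}}\subseteq\mu^{\mathrm{ub}}$ is only an organizational variant of the paper's single biconditional chain.
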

\begin{proof}
By definition a set $U \subseteq \mathcal{T}_\X^\circ$ is open with respect
to $\tau_{\mathsf{ut}}(\mu^{\mathrm{ub}})$ iff for every $a\in U$
there exists $\mathfrak{A} \in \beta_{\X}^\mathrm{ub}$ such that 
$\mathrm{st}(\{a\};\mathfrak{A}) \subseteq U$ (see \S 2.1).
Due to disjointness of $\mathfrak{A}$, 
however, $\mathrm{st}(\{a\};\mathfrak{A})$ just amounts to 
$\BO{a_0}^\circ$, where 
$a_0$ is the unique clique in $\mathfrak{A}$ such that $a \in \BO{a_0}^\circ$.
Moreover, any $a_0\in \X_\Fin$ with 
$\BO{a_0}^\circ \neq \emptyset$ is contained in some 
$\mathfrak{A} \in \beta_{\X}^\mathrm{ub}$
by Lemma \ref{lemma-upper-extension} in \S \ref{A-thm-ubd-fine}. All together, 
$U$ is open iff for every $a\in U$
there exists $a_0 \in \X_\Fin$ such that $a \in \BO{a_0}^\circ$ 
iff $U$ is open with respect to $\tau_{\mathrm{Sco}}$.

The same reasoning works for $\tau_{\mathsf{ut}}(\mu^{\mathrm{b}})$ too.\QED
\end{proof}

The unbounded uniformity $\mu^{\mathrm{ub}}$ is hence 
compatible with, and finer than the bounded uniformity $\mu^{\mathrm{b}}$. 
We can furthermore show that it is the finest uniformity on $\mathcal{T}_{\X}^\circ$. 
The omitted proofs are found in \S \ref{A-thm-ubd-fine}. 
\begin{theorem}\label{thm-ubd-fine}
$(\mathcal{T}_{\X}^\circ, \mu^{\mathrm{ub}}_\X)$ is a fine uniform space. 
\end{theorem}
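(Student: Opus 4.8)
The plan is to show that $\mu^{\mathrm{ub}}_\X$ \emph{coincides} with the finest uniformity compatible with $\tau_{\mathrm{Sco}}$ on $\mathcal{T}_\X^\circ$. Since the previous proposition already tells us that $\mu^{\mathrm{ub}}_\X$ is itself compatible with $\tau_{\mathrm{Sco}}$, it suffices to prove that $\mu^{\mathrm{ub}}_\X$ \emph{contains every} uniformity $\mu'$ on $\mathcal{T}_\X^\circ$ compatible with $\tau_{\mathrm{Sco}}$; this makes $\mu^{\mathrm{ub}}_\X$ the largest such uniformity, i.e.\ the fine one (recall $\mu_{\mathrm{fine}}$ was defined in \S 2.1 as the uniformity containing all compatible ones). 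So I fix a compatible $\mu'$ and a cover $\mathcal{C}\in\mu'$. By \S 2.1 I may pick a basis of $\mu'$ consisting of open covers, so there is a $\tau_{\mathsf{ut}}(\mu')$-open cover $\mathcal{O}\in\mu'$ with $\mathcal{O}\preceq\mathcal{C}$; by compatibility $\tau_{\mathsf{ut}}(\mu')=\tau_{\mathrm{Sco}}$, hence $\mathcal{O}$ is $\tau_{\mathrm{Sco}}$-open. Thus everything reduces to the following refinement claim, whose proof occupies \S \ref{A-thm-ubd-fine}: every $\tau_{\mathrm{Sco}}$-open cover $\mathcal{O}$ of $\mathcal{T}_\X^\circ$ is refined by some $\mathfrak{A}\in\beta_\X^{\mathrm{ub}}$. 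Granting it, $\mathfrak{A}\preceq\mathcal{O}\preceq\mathcal{C}$ with $\mathfrak{A}\in\beta_\X^{\mathrm{ub}}\subseteq\mu^{\mathrm{ub}}_\X$, so $\mathcal{C}\in\mu^{\mathrm{ub}}_\X$ by axiom {\rm ({\bf U2})} of Definition \ref{d-uniformity}, and we are done.

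To prove the claim I would exploit that $\mathcal{T}_\X^\circ$ is countably based (the token set is countable, so $\X_\Fin$ and hence the clopen base $\{\BO{a}^\circ : a\in\X_\Fin\}$ are countable) and zero-dimensional (each $\BO{a}^\circ$ is clopen). First, since the nonempty basic clopens contained in members of $\mathcal{O}$ already cover $\mathcal{T}_\X^\circ$ and refine $\mathcal{O}$, Lindel\"ofness lets me extract a countable family $\BO{a_0}^\circ,\BO{a_1}^\circ,\dots$ covering $\mathcal{T}_\X^\circ$, with each $\BO{a_k}^\circ$ contained in some member of $\mathcal{O}$. It remains to \emph{disjointify} this family \emph{inside} the base. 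The engine for this is the sub-lemma that every set difference $\BO{a}^\circ\setminus\BO{b}^\circ$ is a countable disjoint union of nonempty basic clopens. When $\neg(a\coh b)$ the difference is just $\BO{a}^\circ$; otherwise $\BO{a}^\circ\cap\BO{b}^\circ=\BO{a\cup b}^\circ$, so $\BO{a}^\circ\setminus\BO{b}^\circ=\BO{a}^\circ\setminus\BO{a\cup b}^\circ$ and one reduces to subtracting a clique $\supseteq a$, say $b=a\cup\{y_1,\dots,y_m\}$. Here I would pick a uni-cover $\mathfrak{c}_1\in(\mathcal{T}_\X^\perp)^\circ$ through $y_1$ (available because $\BO{b}^\circ\neq\emptyset$ gives a strict total clique containing $y_1$, whence the dual strictness characterisation of \S 3.1 yields such a $\mathfrak{c}_1$) and split $\BO{a}^\circ=\bigsqcup_{z\in\mathfrak{c}_1,\,a\coh z}\BO{a\cup\{z\}}^\circ$; every piece with $z\neq y_1$ lies entirely in the complement of $\BO{b}^\circ$ because $z\icoh y_1$, while the single piece $\BO{a\cup\{y_1\}}^\circ$ is handled by induction on $m$. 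Applying this sub-lemma repeatedly to the differences $\BO{a_k}^\circ\setminus(\BO{a_0}^\circ\cup\dots\cup\BO{a_{k-1}}^\circ)$ expresses each as a disjoint union of basic clopens, and collecting all pieces over all $k$ yields a countable disjoint cover $\mathfrak{A}$ of $\mathcal{T}_\X^\circ$ by nonempty basic clopens, each contained in a member of $\mathcal{O}$. A disjoint cover by nonempty $\BO{a}^\circ$'s is exactly an element of $(\mathcal{T}_{!\X}^\perp)^\circ=\beta_\X^{\mathrm{ub}}$, so $\mathfrak{A}$ is the desired refinement. (This also subsumes Lemma \ref{lemma-upper-extension}, the special case where a single nonempty $\BO{a_0}^\circ$ is completed to a partition.)

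I expect the disjointification sub-lemma to be the main obstacle: the real content is that the clopen base $\{\BO{a}^\circ\}$ is not merely closed under finite intersection but that base-element \emph{differences} stay within countable disjoint unions of base elements, which is precisely what keeps the construction inside $\beta_\X^{\mathrm{ub}}$ rather than in some larger family of clopen sets. It is exactly the disjointness of the members of $\beta_\X^{\mathrm{ub}}$ --- the same feature that made axiom {\rm ({\bf U3})} hold trivially --- that both forces us to work with partitions and makes this decomposition into singleton-indexed pieces along uni-covers possible. The remaining parts (the Lindel\"of extraction and the bookkeeping of indices) are routine.
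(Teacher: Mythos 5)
Your overall architecture matches the paper's: show that $\mu^{\mathrm{ub}}_\X$ contains every $\tau_{\mathrm{Sco}}$-open cover of $\mathcal{T}_\X^\circ$ by (i) reducing an arbitrary open cover to a countable one of the form $\{\BO{a_n}^\circ\}_{n\in\Nat}$, (ii) disjointifying the differences $\mathcal{D}_n=\BO{a_n}^\circ\setminus\bigcup_{i<n}\BO{a_i}^\circ$ into members of $\beta^{\mathrm{ub}}_\X$, and (iii) concluding via axiom {\rm ({\bf U2})}. Your framing step (every compatible uniformity has a basis of open covers, so containing all open covers suffices for fineness) is left implicit in the paper, and making it explicit is fine. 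Where you genuinely differ is the disjointification engine: the paper's Lemma \ref{lem_finite_divide} first embeds each $a_i$ into an unbounded cover $\mathfrak{A}_i$ via Lemma \ref{lemma-upper-extension} and then carves out $\mathcal{D}_n$ in one shot as the meet of the $\mathfrak{A}_i$ with $a_i$ deleted, meeted with $\{a_n\}$; you instead split recursively along one uni-cover at a time through the individual tokens of $b\setminus a$, inducting on $|b\setminus a|$. Both work; the paper's version makes it transparent that the result is an anti-clique of $!\X$ (it is a meet of anti-cliques), while yours is more token-level and, as you note, subsumes what Lemma \ref{lemma-upper-extension} is used for.

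The one step you should repair is the assertion that ``a disjoint cover by nonempty $\BO{a}^\circ$'s is exactly an element of $(\mathcal{T}_{!\X}^\perp)^\circ$.'' As a general statement this is false: membership in $\mathcal{T}_{!\X}^\perp=(!\,\mathcal{T}_\X)^\perp$ requires the family to be an anti-clique of $!\X$, i.e.\ its members must be pairwise \emph{incoherent} finite cliques ($\neg(b\coh b')$, equivalently $\BO{b}\cap\BO{b'}=\emptyset$), whereas disjointness of the restricted sets $\BO{b}^\circ\cap\BO{b'}^\circ=\emptyset$ only says that no \emph{strict total} clique extends both $b$ and $b'$; two coherent cliques can satisfy the latter without the former, and without incoherence one also loses totality against non-strict $c\in\mathcal{T}_\X$. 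Your particular construction does produce pairwise incoherent cliques --- within one splitting the pieces $a\cup\{z\}$, $a\cup\{z'\}$ for distinct $z,z'\in\mathfrak{c}_1$ are incoherent, and across different $k$ each final piece $b'$ obtained while removing $\BO{a_k}^\circ$ either already satisfies $\neg(b'\coh a_k)$ or contains some $z\in\mathfrak{c}_j$ with $z\neq y_j$ and $y_j\in a_k$, hence is incoherent with every piece $b\supseteq a_k$ of $\mathcal{D}_k$ --- but this has to be verified rather than read off from disjointness of the clopens; it is the analogue of the paper's (also terse) claim that $\bigcup_n\mathfrak{B}_n$ still belongs to $(!\X)^\perp$. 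With that check added, your proof is complete.
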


Due to the internal completeness (Proposition \ref{prop-total-cliques}), we have a bijection $\mathcal{T}_{\X}^\circ \simeq \mathcal{T}_{\,!\,\X}^\circ$ 
defined by $a\in \mathcal{T}_{\X}^\circ \leftrightarrow \,!\,a \in \mathcal{T}_{\,!\,\X}^\circ$. Notice also that 
$\beta^\mathrm{ub}_{\X} = (\mathcal{T}_{!\X}^\perp)^\circ
= \sigma^\mathrm{b}_{!\X}$ and 
fine uniformity is preserved under uniform homeomorphisms.
These facts together allow us to prove:

\begin{corollary}\label{prop-unif-hom}
There is a uniform homeomorphism 
$(\mathcal{T}_{\X}^\circ, \mu_{\X}^{\mathrm{ub}})\simeq (\mathcal{T}_{\,!\,\X}^\circ,\mu_{\,!\,\X}^{\mathrm{b}})$. As a consequence,
$(\mathcal{T}_{\,!\,\X}^\circ, \mu^\mathrm{b}_{\,!\,\X})$ is a fine uniform space. 
\end{corollary}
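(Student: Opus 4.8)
The plan is to exhibit the internal-completeness bijection $\Phi : \mathcal{T}_{\X}^\circ \to \mathcal{T}_{\,!\,\X}^\circ$, $a \mapsto \,!\,a$ (Proposition \ref{prop-total-cliques}), as a uniform homeomorphism carrying the unbounded uniformity $\mu^{\mathrm{ub}}_\X$ to the bounded uniformity $\mu^{\mathrm{b}}_{\,!\,\X}$. Once this is established the asserted fineness follows at once, since $(\mathcal{T}_{\X}^\circ,\mu^{\mathrm{ub}}_\X)$ is fine by Theorem \ref{thm-ubd-fine} and fineness is invariant under uniform homeomorphism.

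First I would record the basic correspondence of cover pieces under $\Phi$. A token of $\,!\,\X$ is precisely a finite clique $a_0 \in \X_\Fin$, and for any $b \in \mathcal{T}_{\X}^\circ$ one has $a_0 \subseteq b$ iff $a_0 \in \,!\,b = \Phi(b)$; thus $\Phi$ carries the piece $\BO{a_0}^\circ \subseteq \mathcal{T}_{\X}^\circ$ onto the piece of $\mathcal{T}_{\,!\,\X}^\circ$ cut out by the token $a_0$ (that is, $\{c : a_0 \in c\}$). Since $\beta^{\mathrm{ub}}_\X = (\mathcal{T}_{\,!\,\X}^\perp)^\circ = \sigma^{\mathrm{b}}_{\,!\,\X}$, as already noted before the statement, every unbounded-cover $\mathfrak{C}\in\beta^{\mathrm{ub}}_\X$ of $\mathcal{T}_{\X}^\circ$ is carried by $\Phi$ exactly onto the uni-cover of $\mathcal{T}_{\,!\,\X}^\circ$ determined by the same anti-clique $\mathfrak{C}$. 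Hence the basis $\beta^{\mathrm{ub}}_\X$ and the sub-basis $\sigma^{\mathrm{b}}_{\,!\,\X}$ correspond cover-for-cover under $\Phi$; note in particular that a single $\,!\,\X$-token is already an arbitrary finite clique of $\X$, so that ``bounded'' covers on $\,!\,\X$ are by no means size-restricted when read back on $\X$.

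The one point needing genuine care, and the step I expect to be the main obstacle, is reconciling the two meet operations, because $\mu^{\mathrm{b}}_{\,!\,\X}$ is generated from $\sigma^{\mathrm{b}}_{\,!\,\X}$ as a \emph{sub-basis} (so one must close under finite meets), whereas $\mu^{\mathrm{ub}}_\X$ is generated from $\beta^{\mathrm{ub}}_\X$ as a \emph{basis}, already closed under meets by {\rm ({\bf U1})}. A uni-cover meet $\mathfrak{c}_1 \wedge \cdots \wedge \mathfrak{c}_m$ on $\,!\,\X$ has pieces indexed by $m$-element cliques $\{x_1,\dots,x_m\}$ of $\,!\,\X$, while the {\rm ({\bf U1})} meet on $\X$ has pieces indexed by single unions $x_1 \cup \cdots \cup x_m \in \X_\Fin$. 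These match under $\Phi$: for $b \in \mathcal{T}_{\X}^\circ$ one has $\{x_1,\dots,x_m\}\subseteq \,!\,b$ iff every $x_i \subseteq b$ iff $x_1\cup\cdots\cup x_m \subseteq b$, so $\Phi$ sends $\BO{x_1\cup\cdots\cup x_m}^\circ$ onto the piece of $\mathcal{T}_{\,!\,\X}^\circ$ cut out by $\{x_1,\dots,x_m\}$. Consequently the finite-meet basis $\beta^{\mathrm{b}}_{\,!\,\X}$ of $\mu^{\mathrm{b}}_{\,!\,\X}$ corresponds under $\Phi$ to the family of finite meets of members of $\beta^{\mathrm{ub}}_\X$, which by {\rm ({\bf U1})}-closure is again $\beta^{\mathrm{ub}}_\X$ itself. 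Therefore a cover lies in $\mu^{\mathrm{b}}_{\,!\,\X}$ iff its $\Phi$-preimage is refined by a member of $\beta^{\mathrm{ub}}_\X$, i.e.\ iff that preimage lies in $\mu^{\mathrm{ub}}_\X$, so $\Phi$ is a uniform homeomorphism and the fineness of $(\mathcal{T}_{\,!\,\X}^\circ,\mu^{\mathrm{b}}_{\,!\,\X})$ follows as in the plan.
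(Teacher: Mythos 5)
Your proposal is correct and follows exactly the route the paper intends: the internal-completeness bijection $a \mapsto \,!\,a$, the identification $\beta^{\mathrm{ub}}_{\X} = (\mathcal{T}_{\,!\,\X}^\perp)^\circ = \sigma^{\mathrm{b}}_{\,!\,\X}$, and invariance of fineness under uniform homeomorphism. In fact you supply more detail than the paper does (which leaves the proof to the two sentences preceding the corollary), in particular the reconciliation of the sub-basis-generated meets on $\,!\,\X$ with the {\rm ({\bf U1})}-closed basis on $\X$ via $\BO{x_1\cup\cdots\cup x_m}^\circ$, which is the right way to close the only gap the paper's sketch leaves open.
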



We are now ready to establish uniform continuity of linear maps.

\begin{theorem}\label{theo-strongly-unif}
A total linear map $F:\X\Linear\Y$ is \emph{strongly uniformly continuous}:  
for any $\mathfrak{b}\in \sigma_{\Y}^{\mathrm{b}}$ 
there exists $\mathfrak{a}\in\sigma_{\X}^{\mathrm{b}}$ 
such that $|a-b|<\mathfrak{a}\ \Rightarrow\ |F(a)-F(b)|<\mathfrak{b}$ 
for every $a,b\in\mathcal{T}_{\X}^\circ$. 
As a consequence:
\begin{itemize}
\item[(i)] Every total linear map $F:\X\Linear \Y$ is uniformly continuous 
w.r.t. the bounded uniformities.
\item[(ii)] Every total stable map $F:\X\Stable\Y$ is topologically continuous 
w.r.t. the uniform topologies. 
\end{itemize}
\end{theorem}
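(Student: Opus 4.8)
The plan is to obtain the sub-basic cover $\mathfrak{a}$ witnessing strong uniform continuity by \emph{pulling $\mathfrak{b}$ back along $F$}. Writing $\kappa := \Tr(F) \subseteq X \times Y$, I would set $\mathfrak{a}' := \{x \in X : (x,y) \in \kappa \text{ for some } y \in \mathfrak{b}\}$ --- the image of $\mathfrak{b}$ under the transpose $F^\perp : \Y^\perp \Linear \X^\perp$ --- and take $\mathfrak{a} := (\mathfrak{a}')^\circ$ to be its strict core, so that $\mathfrak{a} \in (\mathcal{T}_\X^\perp)^\circ = \sigma_\X^\mathrm{b}$ as demanded.

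The first task is to check that $\mathfrak{a}'$ really is a \emph{total} anti-clique. That it is an anti-clique should follow directly from the definition of $\llto$: if two distinct $x_1, x_2 \in \mathfrak{a}'$, arising from $y_1, y_2 \in \mathfrak{b}$, were coherent, then $(x_1,y_1)$ and $(x_2,y_2)$ would be distinct, hence strictly coherent, tokens of the clique $\kappa$, forcing $y_1 \scoh y_2$ and contradicting that $\mathfrak{b}$ is an anti-clique. For totality I would argue directly: for any $a \in \mathcal{T}_\X$, totality of $F$ gives $F(a) \in \mathcal{T}_\Y$, so $F(a) \perp \mathfrak{b}$ produces a token $y^\ast \in F(a) \cap \mathfrak{b}$ whose witness $x^\ast \in a$ lands in $a \cap \mathfrak{a}'$; since a clique meets an anti-clique in at most one token, this shows $a \perp \mathfrak{a}'$, i.e. $\mathfrak{a}' \in \mathcal{T}_\X^\perp$. (This step can instead be read off from the $\ast$-autonomy of $\Lin_\Tot$ in Theorem \ref{thm-lin-stot}.)

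The heart of the proof is the implication, and this is where I expect the real work to lie. Assuming $a, b \in \mathcal{T}_\X^\circ$ with $|a - b| < \mathfrak{a}$, the disjoint-cover reading of $\mathfrak{a}$ supplies a single token $x_0 \in \mathfrak{a} \subseteq \mathfrak{a}'$ with $x_0 \in a \cap b$. Let $y_a, y_b$ be the unique tokens of $F(a) \cap \mathfrak{b}$ and $F(b) \cap \mathfrak{b}$ (available because $F(a), F(b) \in \mathcal{T}_\Y$ and $\mathfrak{b} \in \mathcal{T}_\Y^\perp$). The key move is to trace $y_a, y_b$ back: any witness $x_a \in a$ of $y_a$ lies in $a \cap \mathfrak{a}'$, which is the singleton already containing $x_0$, so $x_a = x_0$, and likewise $x_b = x_0$. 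Hence $(x_0, y_a), (x_0, y_b) \in \kappa$; were $y_a \neq y_b$, these distinct tokens of the clique $\kappa$ would be strictly coherent, and since $x_0 \coh x_0$ the definition of $\llto$ would force $y_a \scoh y_b$, impossible inside the anti-clique $\mathfrak{b}$. Thus $y_a = y_b \in F(a) \cap F(b) \cap \mathfrak{b}$, giving $|F(a) - F(b)| < \mathfrak{b}$. The main obstacle, and the place where linearity is essential, is exactly this last inference: because each output token is justified by a \emph{single} input token, agreement of $a$ and $b$ on one token of $\mathfrak{a}$ already determines the output modulo $\mathfrak{b}$; for a merely stable map a whole finite subclique would have to agree, so a single uni-cover would no longer suffice.

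Finally I would derive the two consequences. For (i), since $\sigma_\Y^\mathrm{b}$ is a sub-basis of $\mu_\Y^\mathrm{b}$, any $\mathcal{V} \in \mu_\Y^\mathrm{b}$ is refined by a finite meet $\mathfrak{b}_1 \wedge \cdots \wedge \mathfrak{b}_n$ of uni-covers; applying the strong statement to each $\mathfrak{b}_i$ and taking the meet of the resulting $\mathfrak{a}_i$ gives a cover in $\mu_\X^\mathrm{b}$ witnessing ordinary uniform continuity. For (ii), a total stable map $F : \X \Stable \Y$ is carried to a total linear $G : \,!\,\X \Linear \Y$, which by (i) is uniformly continuous from $\mu_{\,!\,\X}^\mathrm{b}$ to $\mu_\Y^\mathrm{b}$; transporting along the uniform homeomorphism of Corollary \ref{prop-unif-hom} and using that uniform continuity implies continuity of the underlying (here Scott) topologies yields topological continuity of $F$.
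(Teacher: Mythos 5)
Your proposal is correct and follows essentially the same route as the paper: the witness $\mathfrak{a}$ is the (strict core of the) image of $\mathfrak{b}$ under the transpose $F^\perp$, its totality comes from $\ast$-autonomy of $\Lin_\Tot$ (you also give a direct check), and linearity supplies the unique input token that makes a single uni-cover suffice. You additionally spell out the derivations of consequences (i) and (ii), which the paper leaves implicit; these are carried out correctly.
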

\begin{proof}
Note that the transpose $F^\bot : \Y^\bot \Linear \X^\bot$, 
defined by $x \in F^\bot(\{y\}) \Leftrightarrow F(\{x\})\ni y$
for every $x\in X$ and $y\in Y$,
is also total linear since 
$\Lin_\Tot$ is *-autonomous.
By linearity, any $x \in \mathfrak{a}$ is uniquely associated with
$y \in \mathfrak{b}$ such that $x \in F^\bot(\{y\})$
(i.e., $F(\{x\})\ni y$). From this, one can immediately observe
that $a,b \in \BO{x}^\circ$ with $x \in \mathfrak{a}$ 
implies $F(a), F(b) \in \BO{y}^\circ$ with $y \in \mathfrak{b}$.\QED
\end{proof}

We thus obtain a functor 
$J:\Lin_\Tot \to \mathbf{Unif}$ which sends
a coherence space with totality $(\X, \mathcal{T}_\X)$ 
to the uniform space $(\mathcal{T}_\X^\circ, \mu^\mathrm{b})$ 
and a total linear map to the corresponding uniformly continuous map 
which is shown in the above theorem.
There is also a functor 
$I:\mathbf{Stab}_\Tot \to \mathbf{Tych}$ sending 
$(\X, \mathcal{T}_\X)$ to the Tychonoff space 
$(\mathcal{T}_\X^\circ, \tau_\mathrm{Sco})$ and a total stable map 
to the corresponding continuous map.
We now have the following diagram, in which the two squares commute
(up to natural isomorphisms):
\begin{equation}
\xymatrix{
*+++{\mathbf{Tych}} \ar@/^10pt/[rr]^*++{F} & *+{\bot} & 
*+++{\mathbf{Unif}} \ar@/^10pt/[ll]^*+{G} \\
&&\\
*+++{\mathbf{Stab}_\Tot} \ar@/^10pt/[rr]^*++{K} 
\ar[uu]^*++{I} & *+{\bot} & 
*+++{\mathbf{\Lin}_{\Tot}\ .} \ar@/^10pt/[ll]^*+{L} \ar[uu]_*++{J} \ , 
}
\label{fig-pseudo}
\end{equation}
In addition, the pair of functors $\BO{I,J}$ 
preserves an adjunction: it is a \emph{pseudo-map of adjunctions} in the sense of Jacobs \cite{Ja99} 
(see Appendix in \S \ref{A-pm}).

This combines
 (\ref{eq-tych-unif}) and (\ref{eq-stab-lin}), as we have planned.

\section{Coherent Representations}
In this section, 
we exhibit 
a representation model based on coherence spaces 
and show that 
there exist good representations based on which linear maps well express uniformly continuous functions. 

\subsection{Representations as a Realizability Model}
We represent abstract spaces, 
largely following the mainstreams of \emph{computable analysis}: 
Baire-space representations in type-two theory of effectivity (TTE) \cite{KW85,We00,BHW08}, 
and domain representations \cite{Bl97,ES99,SHT08}. 
In both theories, computations are tracked by continuous maps 
over their base spaces (the Baire space $\mathbb{B}=\Nat^\omega$ for TTE or Scott domains for domain representations). 
Similarly we assign ``coherent'' representations to topological spaces, and 
track computations by stable maps, 
just as in Examples \ref{ex-real} and \ref{ex-standard-totality}. 

Let us formally give a definition: 
\begin{Defi}
Let $S$ be an arbitrary set. 
A tuple $(\X,\rho,S)$ is called a \emph{representation} of $S$ 
if $\X$ is a coherence space and $\rho:\subseteq \X\to S$ 
is a partial surjective function. 
Below, we write $\Rep{\X}{\rho}{S}$, 
or simply $\rho$ for $(\X,\rho,S)$. 
\end{Defi}
Representations enable us to express abstract functions as stable maps: 
\begin{Defi}[stable realizability]
Let $\Rep{\X}{\rho_\X}{S}$ and $\Rep{\Y}{\rho_\Y}{T}$ be representations. 
A function $f: S\to T$ is \emph{stably realizable}
with respect to $\rho_\X$ and $\rho_\Y$ 
if it is \emph{tracked} by a stable map $F:\X\Stable\Y$. 
That is, $F$ makes the following diagram commute: 
\begin{equation}
\xymatrix{
*+{\X} \ar[r]^*{F} \ar[d]_*{\rho_\X} & *+{\Y} \ar[d]^*{\rho_\Y} \\
 *+{S} \ar[r]^*{f} & *+{T}
}
\label{fig-realize}
\end{equation}
\end{Defi}
We denote by $\mathbf{StabRep}$
the category of coherent representations and stably realizable functions. 

With the help of Longley's theory of \emph{applicative morphisms} \cite{Lon94}, 
one can compare $\mathbf{StabRep}$ with other models of representations. 
By simply mimicking Bauer's approach \cite{Ba00,Ba02}, 
we obtain an \emph{applicative retraction} 
between coherent representations and TTE-representations. 
As a consequence, we can \emph{embed} TTE into the theory of coherent representations:
\begin{theorem}\label{theorem-retraction}
Let $\mathbf{TTERep}$ be a category which embodies TTE: 
the category of TTE-representations and \emph{continuously} realizable functions. 
Then $\mathbf{TTERep}$ is equivalent to a full coreflexive subcategory of $\mathbf{StabRep}$. 
\end{theorem}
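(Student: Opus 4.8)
\proofsketch
The plan is to exhibit both categories as categories of assemblies over combinatory structures and then to relate these structures by an \emph{applicative retraction}, following Longley \cite{Lon94} and mimicking Bauer \cite{Ba00,Ba02}. On the TTE side, $\mathbf{TTERep}$ is (equivalent to) the category of assemblies over the Baire space $\mathbb{B}=\Nat^\omega$, where application is given by continuous functionals and realizers are continuous maps. On the coherent side, the countability assumption on token sets lets me fix a reflexive coherence space $U$, i.e.\ one for which $!\,U\llto U$ is a retract of $U$ in $\Lin$; its cliques form a combinatory structure $\Coh$ whose application is the induced stable application, and every countable coherence space is a retract of $U$. The first step is to verify that $\mathbf{StabRep}$ is equivalent to the category of assemblies over $\Coh$: a representation $\Rep{\X}{\rho}{S}$ is transported along a retraction of $\X$ into $U$, and a stable map $\X\Stable\Y$ corresponds, via the internal representability of the reflexive object $U$, to application by an element of $\Coh$; hence stably realizable functions coincide with assembly morphisms over $\Coh$. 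With this reduction the statement becomes a comparison of assemblies over $\mathbb{B}$ and over $\Coh$.

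Next I would construct applicative morphisms $\gamma:\mathbb{B}\to\Coh$ and $\delta:\Coh\to\mathbb{B}$. For $\gamma$ I encode a sequence $\alpha\in\Nat^\omega$ as the maximal clique $\{(n,\alpha(n)):n\in\Nat\}$ of the ``sequence'' coherence space (tokens $(n,k)$ with $(n,k)\coh(m,j)$ iff $n\neq m$ or $k=j$), which sits inside $U$ by universality; for $\delta$ I read a clique of $U$ back off as a Baire sequence. The applicative-morphism axioms --- uniform realizers, inside $\Coh$, for the images of application and of the combinators --- are obtained as in Bauer's constructions, the crux being that a continuous functional inspecting a finite prefix of its oracle is simulated by a stable map inspecting a finite subclique, and conversely.

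I would then check that $(\gamma,\delta)$ is an applicative retraction: $\gamma$ is left adjoint to $\delta$ and the unit $\delta\circ\gamma\simeq\mathrm{id}_{\mathbb{B}}$ is an isomorphism, because a round trip $\alpha\mapsto\gamma(\alpha)\mapsto\delta(\gamma(\alpha))$ recovers $\alpha$ by a stable (hence continuous) map. By the standard transfer theorem for applicative retractions, the induced functor $E:\mathbf{TTERep}\to\mathbf{StabRep}$ is full and faithful and the functor $R$ induced by $\delta$ is right adjoint to $E$; thus $E$ exhibits $\mathbf{TTERep}$ as a full coreflexive subcategory of $\mathbf{StabRep}$, with coreflector $R$.

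The main obstacle is the \emph{stability} constraint. In contrast with the purely topological (Scott-continuous) situation, the maps witnessing the applicative-morphism axioms and, above all, the retraction identity $\delta\circ\gamma\simeq\mathrm{id}_{\mathbb{B}}$ must be realized by \emph{stable} maps; I must check that the encode/decode combinators can be chosen to preserve compatible intersections, i.e.\ that the minimal data supporting each output is unique. Fullness of $E$ --- every continuously realizable function between TTE-spaces becomes, after transport, stably realizable --- I expect to deduce from the paper's imported correspondence between sequential continuity and stable realizability \cite{MT16}; faithfulness and the coreflection are then formal consequences of the retraction.
\QED
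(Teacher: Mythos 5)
Your proposal is correct and follows essentially the same route the paper intends: the paper's justification for this theorem is precisely the Longley/Bauer applicative-retraction argument between the Baire space and a PCA of cliques built from a universal (reflexive) coherence space, which you have spelled out, including the adjoint retraction $\delta\circ\gamma\simeq\mathrm{id}_{\mathbb{B}}$ and the induced full and faithful embedding with right-adjoint coreflector. The only point you flag as an obstacle---choosing the encode/decode combinators to be stable rather than merely Scott-continuous---is indeed the place where Bauer's domain-theoretic construction must be adapted, but it goes through since the relevant realizers can be taken linear, as the paper notes when it embeds coherence spaces into the universal space by linear maps.
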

For details on the realizability theory, we refer to \cite{Lon94}. 
We also refer to the Ph.D thesis of Bauer \cite{Ba00}, 
in which the relationship between the theory of (TTE and domain) representations and 
realizability theory is deeply studied. 

In \cite{MT16}, 
we have defined a full subcategory $\mathbf{SpStabRep}$ 
of $\mathbf{StabRep}$ 
which is equivalent to $\mathbf{TTERep}$, 
and introduced a concept of \emph{admissibility} of representations in $\mathbf{SpStabRep}$. 
The main result of \cite{MT16} is as follows: 
\begin{theorem}[\cite{MT16}]
\label{t-seqcont}
Let $\XX$ and $\YY$ be topological spaces represented by
admissible representations $\Rep{\X}{\rho_\X}{\XX}$ and $\Rep{\Y}{\rho_\Y}{\YY}$ 
in $\mathbf{SpStabRep}$. 
A function $f:\XX\longrightarrow \YY$ is stably realizable 
if and only if it is \emph{sequentially continuous}, that is, 
it preserves the limit of any convergent sequence:
$x_n \rightarrow x \ \Rightarrow\ 
f(x_n) \rightarrow f(x)$. 
\end{theorem}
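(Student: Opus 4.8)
The plan is to prove the two implications separately, exploiting that stable maps are Scott-continuous (by Definition of stable maps) together with the two defining features of admissibility: that an admissible $\rho_\X$ \emph{lifts convergent sequences}, and that an admissible $\rho_\Y$ is \emph{final} among the stable representations inducing the topology of $\YY$, so that every continuous map into $\YY$ factors through $\rho_\Y$ via a stable realizer. Throughout I use that the token set is countable, so that $\X$ and every subspace of $(\X,\tau_{\mathrm{Sco}})$ is countably based, hence sequential; on such spaces sequential continuity and continuity coincide.

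For the forward direction, suppose $f$ is tracked by a stable map $F:\X\Stable\Y$, so that $\rho_\Y\circ F = f\circ\rho_\X$ on $\Dom(\rho_\X)$. Let $x_n\to x$ in $\XX$. By admissibility of $\rho_\X$ I would lift this to names: choose $p_n\in\Dom(\rho_\X)$ with $\rho_\X(p_n)=x_n$ and $p\in\Dom(\rho_\X)$ with $\rho_\X(p)=x$ such that $p_n\to p$ in $\tau_{\mathrm{Sco}}$. Since $F$ is Scott-continuous it is sequentially continuous for $\tau_{\mathrm{Sco}}$, so $F(p_n)\to F(p)$; and since $\rho_\Y$ is (sequentially) continuous, $f(x_n)=\rho_\Y(F(p_n))\to\rho_\Y(F(p))=f(x)$. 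Thus $f$ is sequentially continuous.

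For the converse, assume $f$ is sequentially continuous and set $g:=f\circ\rho_\X:\subseteq\X\to\YY$. As $\rho_\X$ is continuous and $f$ sequentially continuous, $g$ is sequentially continuous on $\Dom(\rho_\X)$; because this domain is a countably based, hence sequential, subspace of $(\X,\tau_{\mathrm{Sco}})$, the map $g$ is in fact continuous. Admissibility of $\rho_\Y$ then supplies a stable realizer: every continuous map into $\YY$ factors through the final representation $\rho_\Y$, yielding a stable $F:\X\Stable\Y$ with $\rho_\Y\circ F = g = f\circ\rho_\X$. Hence $F$ tracks $f$, so $f$ is stably realizable.

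The main obstacle is the backward direction, and specifically the requirement that the factorizing realizer be \emph{stable} rather than merely Scott-continuous: ordinary TTE only furnishes a continuous realizer on Baire space. This is exactly where the definitions of $\mathbf{SpStabRep}$ and of admissibility from \cite{MT16} do the work, guaranteeing that the canonical factorization lands among stable maps; making the sequence-lifting property of an admissible $\rho_\X$ precise carries comparable weight in the forward direction. An alternative, lighter route would transport the classical Kreitz--Weihrauch characterization across the equivalence $\mathbf{SpStabRep}\simeq\mathbf{TTERep}$ of \cite{MT16} (see also Theorem \ref{theorem-retraction}), under which stable realizability corresponds to continuous realizability and the classical main theorem applies verbatim.
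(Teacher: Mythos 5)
First, a caveat about the comparison itself: the paper does not prove Theorem~\ref{t-seqcont} --- it is imported verbatim from \cite{MT16} --- so there is no in-paper argument to measure your proposal against. Judged on its own terms, your sketch follows the classical Kreitz--Weihrauch--Schr\"oder pattern, and the forward direction is essentially fine modulo one unproved ingredient: the sequence-lifting property of admissible representations. In TTE this is a theorem of Schr\"oder, not part of the definition of admissibility (which is the universal/finality property), so stating it as a ``defining feature'' and using it without derivation leaves real work undone; in the coherence-space setting one would have to re-prove it or transport it.

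The backward direction contains the genuine gap, and you have correctly located it yourself: admissibility hands you a \emph{continuous} factorization of $g=f\circ\rho_\X$ through $\rho_\Y$, but Scott-continuity of a map between coherence spaces does not imply stability (stability additionally requires preservation of compatible intersections, equivalently the unique-minimal-data property of the trace). Saying ``this is exactly where the definitions of $\mathbf{SpStabRep}$ and admissibility do the work'' names the obstacle without removing it; as written, the proof establishes only continuous realizability. The route that actually closes this gap --- and the one consistent with how the present paper positions the result --- is your ``alternative'': transport the classical TTE theorem across the applicative retraction/equivalence of Theorem~\ref{theorem-retraction}, under which continuous realizability over the Baire space corresponds by construction to stable realizability over coherence spaces, and admissibility in $\mathbf{SpStabRep}$ is \emph{defined} as the transport of TTE-admissibility. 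I would promote that paragraph from a footnote to the main argument and demote the direct argument to motivation, since only the former is airtight given what is available.
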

For instance, the coherent representation $\Rep{\RR}{\rho_\RR}{\Real}$ defined in Example \ref{ex-real} 
belongs to $\mathbf{SpStabRep}$ and is admissible. 
Consequently, a function $f:\Real\to\Real$ is stably realizable w.r.t. $\rho_\RR$ 
iff it is continuous. 
This equivalence can be generalized 
to any countably-based $T_0$-space (and more generally, any \emph{qcb-space} in the sense of \cite{Si03}) 
as shown in \cite{We00,Sc02}. 

Notice that given any topological space $\XX$, its admissible representations are 
``interchangeable": if $\Rep{\X_0}{\rho_0}{\XX}$ and $\Rep{\X_1}{\rho_1}{\XX}$ are adimissible, 
then the identity map $\mathsf{id} : \XX
\longrightarrow \XX$ is realized by stable maps $F: \X_0 \Stable \X_1$
and $G: \X_1 \Stable \X_0$ 
which reduce each representation to another one. 

\subsection{Linear Realizability for Separable Metrizable Spaces}
On the other hand, we have found in \cite{MT16} a linear variant of the above equivalence 
between stable realizability and continuity: 
a function $f:\Real \to\Real$ is \emph{linearly realizable} iff it is \emph{uniformly continuous}. 
We below try to generalize this correspondence to 
a class of separable metrizable spaces, based on standard representations defined in Example \ref{ex-standard-totality}. 
\begin{Defi}[linear realizability]
Let $\Rep{\X}{\rho_\X}{S}$ and $\Rep{\Y}{\rho_\Y}{T}$ be representations. 
A function $f: S\to T$ is \emph{linearly realizable}
with respect to $\rho_\X$ and $\rho_\Y$ 
if it is \emph{tracked} by a linear map $F:\X\Linear\Y$. 
That is, $F$ makes the diagram (\ref{fig-realize}) commute. 
\end{Defi}
We denote by $\mathbf{LinRep}$ the category of coherent representations and 
linearly realizable functions. 

Given suitable totalities, a linear map $F$ which tracks $f$ turns out to be
uniformly continuous. 
First recall that for any set $\mathcal{A}\subseteq \X$ of a coherence space $\X$,
the set $\mathcal{A}^{\perp\perp}$ is a totality on $\X$, hence is endowed with 
a bounded uniformity (observed in \S 3). 
Here is an extension lemma for the double negation totalities:
\begin{lemma}\label{lem-tot-ext}
Let $\mathcal{A}\subseteq \X$ and $\mathcal{B}\subseteq \Y$ 
be arbitrary (non-empty) sets of cliques. 
If $F:\X\Linear \Y$ satisfies $F[\mathcal{A}]\subseteq \mathcal{B}$ 
then $F$ is indeed total: $F[\mathcal{A}^{\perp\perp}]\subseteq\mathcal{B}^{\perp\perp}$. 
\end{lemma}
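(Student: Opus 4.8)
The plan is to route everything through the transpose $F^\perp : \Y^\perp \Linear \X^\perp$, already introduced in the proof of Theorem~\ref{theo-strongly-unif}, which is again a linear map (it is the image of $F$ under the de Morgan iso $\X \llto \Y \cong \Y^\perp \llto \X^\perp$) and whose trace is the converse relation $\{(y,x) : (x,y)\in\Tr(F)\}$. The entire argument rests on one orthogonality duality: for any clique $a\in\X$ and any anti-clique $\mathfrak{d}\in\Y^\perp$,
\[
F(a) \perp \mathfrak{d} \quad\Longleftrightarrow\quad a \perp F^\perp(\mathfrak{d}).
\]
First I would record the elementary fact that a clique and an anti-clique always meet in at most one token, so that $a\perp\mathfrak{c}$ is equivalent to $a\cap\mathfrak{c}\neq\emptyset$. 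The duality then becomes bookkeeping about traces: a witness $y\in F(a)\cap\mathfrak{d}$ says exactly that there is $x\in a$ with $(x,y)\in\Tr(F)$ and $y\in\mathfrak{d}$, i.e.\ that $x\in a\cap F^\perp(\mathfrak{d})$; both directions follow by reading $\Tr(F^\perp)$ as the converse of $\Tr(F)$.

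With the duality in hand, the proof is two applications of it sandwiching the hypothesis. I would first establish $F^\perp[\mathcal{B}^\perp]\subseteq\mathcal{A}^\perp$: given $\mathfrak{d}\in\mathcal{B}^\perp$ and any $a'\in\mathcal{A}$, the hypothesis $F[\mathcal{A}]\subseteq\mathcal{B}$ gives $F(a')\in\mathcal{B}$, hence $F(a')\perp\mathfrak{d}$; by the duality this is $a'\perp F^\perp(\mathfrak{d})$, and since $a'$ ranges over all of $\mathcal{A}$ this says precisely $F^\perp(\mathfrak{d})\in\mathcal{A}^\perp$. Next, take any $a\in\mathcal{A}^{\perp\perp}$ and any $\mathfrak{d}\in\mathcal{B}^\perp$. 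Since $F^\perp(\mathfrak{d})\in\mathcal{A}^\perp$ and $a\in\mathcal{A}^{\perp\perp}$, the very definition of bi-orthogonal closure yields $a\perp F^\perp(\mathfrak{d})$, and one more application of the duality gives $F(a)\perp\mathfrak{d}$. As $\mathfrak{d}\in\mathcal{B}^\perp$ was arbitrary, $F(a)\in(\mathcal{B}^\perp)^\perp=\mathcal{B}^{\perp\perp}$, which is exactly the claim $F[\mathcal{A}^{\perp\perp}]\subseteq\mathcal{B}^{\perp\perp}$.

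The step I would treat as the crux is the duality itself, and in particular the verification that a witness $y\in F(a)\cap\mathfrak{d}$ transposes to a witness $x\in a\cap F^\perp(\mathfrak{d})$ through a single connecting pair $(x,y)\in\Tr(F)$. This hinges entirely on linearity: the trace of a linear map is carried by single tokens rather than by finite cliques, so the existential witness transposes symmetrically between the two sides (the anti-clique $F^\perp(\mathfrak{d})$ itself being automatic, as $F^\perp$ is a linear map into $\X^\perp$). For a merely stable $F$ the analogous statement would break, since a minimal pair $(a_0,y)$ with $|a_0|>1$ has no symmetric dual witness; this is precisely why the hypothesis is restricted to linear maps.
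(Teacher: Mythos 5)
Your proof is correct. The paper itself states Lemma~\ref{lem-tot-ext} without proof, but your argument is exactly the orthogonality-adjunction reasoning the paper uses for the analogous appendix lemma that $\mathcal{T}_{\X\llto\Y}=\mathcal{T}_{\X\llto\Y}^{\perp\perp}$: there the key identity is $\widehat{\kappa}(a)\perp\mathfrak{c}\iff\kappa\perp a\otimes\mathfrak{c}$, and your duality $F(a)\perp\mathfrak{d}\iff a\perp F^{\perp}(\mathfrak{d})$ is just that adjunction packaged through the transpose; the two-step sandwich ($F^{\perp}[\mathcal{B}^{\perp}]\subseteq\mathcal{A}^{\perp}$, then one more application on $\mathcal{A}^{\perp\perp}$) is the intended argument. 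Your closing remark correctly identifies why linearity is essential: the trace of $F$ being carried by single tokens is what makes the witness transpose symmetrically, which fails for merely stable maps.
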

Given any coherent representation $\Rep{\X}{\delta_\X}{S}$, 
let us endow $\X$ with a totality $\mathcal{T}_\X := \Dom(\delta_\X)^{\perp\perp}$. 
From the above lemma, we obtain that $f:S\to T$ is linearly realizable if and only if 
it is tracked by a total linear map $F:\X\Linear \Y$. 
So one can say that a linearly realizable function is in fact 
a ``totally linearly realizable'' function. 
Recall that $\Dom(\rho_{\RR})^{\perp\perp}=\Dom(\rho_{\RR})^{\perp\perp\circ}= \RR_\Max$ and 
$\Dom(\delta_{\XX})^{\perp\perp}= \Dom(\delta_{\XX})^{\perp\perp\circ}=(\B_{\XX})_\Max$. 

\begin{theorem}
$\mathbf{LinRep}$ is a linear category (i.e., a symmetric monoidal closed category with a linear exponential comonad). 
\end{theorem}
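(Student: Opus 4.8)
The plan is to transfer the intuitionistic-linear-logic structure already present on $\Lin$—the symmetric monoidal closed structure $(\mathbf{1},\otimes,\llto)$ and the linear exponential comonad $\,!\,$ recalled in \S 2.2—up to $\mathbf{LinRep}$ by a realizability lifting. First note that $\mathbf{LinRep}$ is indeed a category: identities are realized by identity linear maps and composites by composites, since linear maps compose to linear maps. The decisive simplification is that the assignment $U:\mathbf{LinRep}\to\mathbf{Set}$, sending $(\X,\rho_\X,S)\mapsto S$ and a linearly realizable $f$ to its underlying function, is \emph{faithful}: a $\mathbf{LinRep}$-morphism \emph{is} a function between represented sets. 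Hence, once each structural transformation has been produced as a bona fide $\mathbf{LinRep}$-morphism, every coherence diagram commutes in $\mathbf{LinRep}$ as soon as it commutes in $\mathbf{Set}$, which it does because its $U$-image consists of the canonical set-theoretic bijections, diagonals and projections. Only two things then require attention: (a) presenting the lifted objects as honest representations, and (b) checking that the structure maps inherited from $\Lin$ are linearly realizable and track the intended set-functions.

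For the monoidal part I put $I:=(\mathbf{1},\rho_I,\{\ast\})$ and, writing $A=(\X,\rho_\X,S)$ and $B=(\Y,\rho_\Y,T)$, define $A\otimes B:=(\X\otimes\Y,\rho_{A\otimes B},S\times T)$ with realizers $\Dom(\rho_{A\otimes B}):=\{a\otimes b:a\in\Dom(\rho_\X),\,b\in\Dom(\rho_\Y)\}$ and $\rho_{A\otimes B}(a\otimes b):=(\rho_\X(a),\rho_\Y(b))$; this is a well-defined partial surjection because nonempty $a,b$ are recovered from $a\otimes b$ by the two projections. Bifunctoriality is witnessed by $F\otimes G$ via $(F\otimes G)(a\otimes b)=F(a)\otimes G(b)$, and the associator, symmetry and unitors are the corresponding linear isomorphisms of $\Lin$, which track the canonical bijections of the product sets; by faithfulness of $U$ all monoidal coherence is then automatic.

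For the closed structure I give $A\llto B$ the carrier $\mathbf{LinRep}(A,B)$, realized by $\X\llto\Y$, where $\kappa$ realizes $g$ iff $\widehat{\kappa}$ tracks $g$; this $\rho_{A\llto B}$ is a partial surjection because surjectivity of $\rho_\X$ forces $g$ to be determined by $\widehat{\kappa}$, and every morphism is realized by some such $\kappa$ by definition. Evaluation is the linear map $\ev:(\X\llto\Y)\otimes\X\Linear\Y$, which tracks set-level application, and the transpose of $h:A\otimes B\to C$ is the curry $\Lambda(H)$ of any realizer $H$ of $h$; the identity $\widehat{\Lambda(H)(a)}(b)=H(a\otimes b)$ shows that $\Lambda(H)(a)$ realizes $t\mapsto h(s,t)$ whenever $a$ realizes $s$. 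Since the set-theoretic currying bijection $\mathbf{Set}(S\times T,V)\cong\mathbf{Set}(S,V^{T})$ restricts in both directions to realizable functions, it yields the desired isomorphism $\mathbf{LinRep}(A\otimes B,C)\cong\mathbf{LinRep}(A,B\llto C)$, natural by faithfulness of $U$.

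Finally I lift the comonad by setting $!A:=(\,!\,\X,\rho_{!A},S)$ with $\rho_{!A}(\,!\,a):=\rho_\X(a)$ for $a\in\Dom(\rho_\X)$, which is legitimate since $a\mapsto\,!\,a$ is injective (cf.\ the bijection $\mathcal{T}_\X^\circ\simeq\mathcal{T}_{!\X}^\circ$). The functorial action $!F$, the counit and comultiplication, the monoidality maps $!A\otimes!B\to\,!\,(A\otimes B)$ and $I\to\,!\,I$, and the comonoid structure—contraction $!A\to\,!\,A\otimes\,!\,A$ and weakening $!A\to I$—are all inherited from the linear exponential comonad on $\Lin$; on carriers they realize, respectively, $f$, the identity on $S$, the canonical maps, the diagonal and the terminal map, so each is realizable, and the comonad and comonoid coherence laws are again inherited from $\Lin$ through faithfulness of $U$. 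I expect the real obstacle to lie precisely here, in steps (a)–(b) for $\,!\,$: one must verify by an explicit trace computation that dereliction, digging, contraction and weakening send the chosen principal realizers $\,!\,a$ to $a$, $!!a$, $\,!\,a\otimes\,!\,a$ and $\{\bullet\}$ respectively, and hence preserve the realizer domains $\{\,!\,a\}$ while tracking the intended duplication and erasure maps. By contrast the monoidal-closed verifications reduce cleanly to $\mathbf{Set}$, the only delicate point there being the decomposition issue for tensor realizers, which is handled by nonemptiness of point-realizers.
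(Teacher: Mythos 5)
Your proof is essentially correct, but it takes a genuinely different route from the paper. The paper does not verify the structure by hand: it invokes the general theorem of Abramsky and Lenisa that the PER category over a \emph{linear combinatory algebra} is a linear category, and reduces the claim to constructing an untyped LCA $\mathbb{C}oh$ from a universal coherence space in $\Lin$ (via a linear variant of the Lietz--Streicher theorem) together with an equivalence $\mathbf{LinRep}\simeq\mathbf{PER}(\mathbb{C}oh)$. Your argument instead lifts the symmetric monoidal closed structure and the comonad $\,!\,$ directly from $\Lin$, using faithfulness of the forgetful functor to $\mathbf{Set}$ to dispose of all coherence; the explicit representations you build for $\otimes$, $\llto$ and $\,!\,$ are precisely the ones the paper records in Appendix \S\ref{A-classical-rep}, so your route is certainly in the spirit of the author's constructions even though it is not the official proof. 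What each approach buys: the paper's is short and modular, and the LCA/PER presentation is what the author needs anyway for the realizability-theoretic discussion in \S 5, but it hides the combinatorics inside the universal object and the cited theorem; yours is elementary and self-contained, at the cost of the case-by-case tracking checks for dereliction, digging, contraction, weakening and the monoidality maps, which you correctly identify as the only place where real work happens. One caveat applies to both treatments: the tensor representation $\rho_{A\otimes B}(a\otimes b):=(\rho_\X(a),\rho_\Y(b))$ is ill-defined if the empty clique is allowed as a realizer, since $a\otimes b=\emptyset$ whenever $a$ or $b$ is empty; you flag this and it is easily repaired (e.g.\ by excluding $\emptyset$ from domains of representations), but it should be stated as a hypothesis rather than left implicit.
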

\begin{proofsketch}
Recall that a linear combinatory algebra (LCA) \cite{AHS02}
is a linear variant of well-known partial combinatory algebras (PCA). 
It is shown in Theorem 2.1 of \cite{AL05} that 
the PER category $\mathbf{PER}(\mathbb{A})$ over an LCA $\mathbb{A}$ 
is a linear category. 

We can naturally define an LCA $\mathbb{C}oh$ such that 
$\mathbf{LinRep}\simeq \mathbf{PER}(\mathbb{C}oh)$. 
Indeed, coherence spaces have linear type structures and 
there also exists a \emph{universal type}, from which 
we obtain an untyped LCA $\mathbb{C}oh$ 
by a linear variant of the Lietz-Streicher theorem \cite{LS02}. 

Consequently, the category $\mathbf{LinRep}\simeq\mathbf{PER}(\mathbb{C}oh)$ 
is a linear category. \QED
\end{proofsketch}
From the categorical structure of $\mathbf{PER}(\mathbb{C}oh)$, 
one can naturally construct various coherent representations, which are explicitly given in \S \ref{A-classical-rep}. 
We leave to future work to relate the co-Kleisli category $\mathbf{LinRep}_{\,!\,}$ 
and $\mathbf{StabRep}$. 

Then one can see that a standard representation $\Rep{\B_\XX}{\delta_\XX}{\XX}$ of 
a separable metrizable space $\XX$ 
is topologically ``good" for linear realizability, like admissible representations for stable realizability. 
It is shown in \S \ref{A-uni-realize} that a standard representation of $\XX$ 
does not depend on the chocie of a uniform basis, up to linear isomorphisms. 
Moreover, we can show that: 
\begin{theorem}\label{thm-unif-realize}
Let $\XX$ and $\YY$ be separable metrizable spaces with 
standard representations $\Rep{\B_\XX}{\delta_\XX}{\XX}$ and $\Rep{\B_\YY}{\delta_\YY}{\YY}$. 
Then every uniformly continuous function $f:\XX\to\YY$ is linearly realizable. 
\end{theorem}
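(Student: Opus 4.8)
The plan is to construct an explicit linear map $F:\B_\XX\Linear\B_\YY$ that tracks $f$, by directly specifying its trace. Recall that a maximal clique $a=\{U_n\}_{n\in\Nat}\in\Dom(\delta_\XX)$ (with $U_n\in\mathcal{U}_n$) indicates the point $p=\delta_\XX(a)\in\bigcap_n U_n$, and that a linear map acts by $F(a)=\{(m,V): ((n,U),(m,V))\in\Tr(F)\text{ for some }(n,U)\in a\}$. Writing $\{\mathcal{V}_m\}_{m\in\Nat}$ for the chosen basis of $\YY$, it therefore suffices to arrange that $F(a)$ selects, for each output level $m$, a \emph{single} $V_m\in\mathcal{V}_m$ with $f(p)\in V_m$, depending only on the token of $a$ at a suitable input level.

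First I would fix a modulus of continuity. The naive attempt reads off from uniform continuity an input level $n$ such that membership of $p$ in a given $U\in\mathcal{U}_n$ pins $f(p)$ down within $\mathcal{V}_m$; but this does \emph{not} yet give a rule independent of $p$, because uniform continuity only guarantees that $f[U]$ is \emph{pairwise} $\mathcal{V}_m$-close, not that $f[U]$ lies in one member of $\mathcal{V}_m$. This gap is exactly where axiom {\bf (U3)} enters. I would choose, by {\bf (U3)} and the basis property, a level $k$ such that $\mathcal{V}_k$ star-refines $\mathcal{V}_m$, and then, by uniform continuity together with the basis property on $\XX$, a level $\mu(m)$ with $\mathcal{U}_{\mu(m)}$ fine enough that $|p-q|<\mathcal{U}_{\mu(m)}$ implies $|f(p)-f(q)|<\mathcal{V}_k$. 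For any $U\in\mathcal{U}_{\mu(m)}$, fixing $x_0\in U$ gives $f[U]\subseteq \mathrm{st}(\{f(x_0)\},\mathcal{V}_k)\subseteq V$ for some $V\in\mathcal{V}_m$, since $\mathcal{V}_k$ star-refines $\mathcal{V}_m$. Using a fixed enumeration of the countable cover $\mathcal{V}_m$, let $V(m,U)$ be the first $V\in\mathcal{V}_m$ with $f[U]\subseteq V$; this is well defined because $U\neq\emptyset$ forces $f[U]\neq\emptyset$.

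I then set $\kappa:=\{((\mu(m),U),(m,V(m,U))): m\in\Nat,\ U\in\mathcal{U}_{\mu(m)}\}$ and let $F:=\widehat{\kappa}$, so that $\Tr(F)=\kappa$. The crux is that $\kappa$ is a clique of $\B_\XX\llto\B_\YY$. Take two distinct elements $((\mu(m),U),(m,V(m,U)))$ and $((\mu(k),U'),(k,V(k,U')))$. If the input tokens are incoherent, the coherence condition for $\llto$ holds vacuously. If they coincide, then $\mu(m)=\mu(k)$ and $U=U'$, so distinctness forces $m\neq k$, and $V(m,U)\cap V(k,U)\neq\emptyset$ since both contain the nonempty set $f[U]$; the outputs are thus coherent. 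If the inputs are distinct but coherent, then by definition of $\B_\XX$ we have $\mu(m)\neq\mu(k)$ (whence $m\neq k$, as $m=k$ would give $\mu(m)=\mu(k)$) and $U\cap U'\neq\emptyset$; choosing $p\in U\cap U'$ gives $f(p)\in f[U]\cap f[U']\subseteq V(m,U)\cap V(k,U')$, so the outputs are strictly coherent. Hence $\kappa$ is a clique and $F$ is a genuine linear map.

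Finally I would verify tracking. For $a=\{U_n\}\in\Dom(\delta_\XX)$ with $p=\delta_\XX(a)$, the unique token of $a$ at level $\mu(m)$ is $(\mu(m),U_{\mu(m)})$, so $F(a)$ contains exactly one token at each output level $m$, namely $(m,V(m,U_{\mu(m)}))$; thus $F(a)$ is a maximal clique. Writing $V_m:=V(m,U_{\mu(m)})$ and using $p\in U_{\mu(m)}$, we get $f(p)\in f[U_{\mu(m)}]\subseteq V_m$ for every $m$, hence $f(p)\in\bigcap_m V_m$ and $\delta_\YY(F(a))=f(p)=f(\delta_\XX(a))$. So the realizability square commutes, and by Lemma \ref{lem-tot-ext} the map $F$ is moreover total. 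The main obstacle is the one isolated in the second step—upgrading the pairwise $\mathcal{V}_m$-closeness supplied by uniform continuity to genuine containment in a single cover member—and it is resolved precisely by the star-refinement axiom {\bf (U3)}, in agreement with the earlier remark that the uniform, rather than merely topological, viewpoint is essential here.
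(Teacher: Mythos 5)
Your proof is correct and is essentially the construction the paper itself uses: the paper packages the argument as a ``linearish'' condition on $f\circ\delta_\XX$ plus a factorization lemma (Lemma \ref{prop-admissible}), but the resulting trace --- a modulus $\mu$ together with a token-wise assignment sending $(\mu(m),U)$ to some $(m,V)$ with $f[U]\subseteq V$, checked to be a clique of $\B_\XX\llto\B_\YY$ and to track $f$ on $\Dom(\delta_\XX)$ --- is the same as yours. If anything your write-up is more careful: the passage from the pairwise $\mathcal{V}_m$-closeness of $f[U]$ (which is all that uniform continuity, or the paper's linearish condition (ii), directly gives) to containment of $f[U]$ in a \emph{single} member of $\mathcal{V}_m$ genuinely requires the star-refinement axiom {\bf (U3)} applied to an intermediate cover $\mathcal{V}_k$, a step the paper's proof of Lemma \ref{prop-admissible} asserts implicitly when it declares $\gamma[\BO{x}]\subseteq U$.
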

See \S \ref{A-uni-realize} for a proof. 

For the other direction, we need a kind of connectedness in addition 
so that uni-covers of the coherence space 
exactly generates the uniformity on the represented space. 
A uniform space $\XX=(X,\mu_X)$ is \emph{chain-connected} 
(or sometimes called \emph{uniformly connected}) 
if for any two points $p,q\in X$ and every uniform cover $\mathcal{U}\in\mu_X$, 
there exist finitely many $U_1 ,\ldots U_n \in\mathcal{U}$ such that 
$p\in U_1$, $U_i \cap U_{i+1} \neq\emptyset$ for every $i<n$, and $U_n \ni q$. 
%

\begin{theorem}\label{thm-real-unif}
Let $\XX$ and $\YY$ be separable metrizable spaces represented by the standard representations. 
Provided that $\XX$ is chain-connected, 
a function $f:\XX\to\YY$ is linearly realizable iff it is uniformly continuous. 
\end{theorem}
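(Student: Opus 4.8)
The forward direction is already available: if $f$ is uniformly continuous it is linearly realizable by Theorem~\ref{thm-unif-realize}, so the whole content lies in the converse, and this is exactly where chain-connectedness will be used. The plan is to run the converse through the strong uniform continuity of the realizing map, project the resulting uni-covers down to $\XX$ via $\delta_\XX$, and then show that chain-connectedness forces those projected covers to be genuine uniform covers of $\XX$.

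Concretely, suppose $f$ is tracked by $F:\B_\XX\Linear\B_\YY$. By Lemma~\ref{lem-tot-ext}, $F$ is total for the totalities $\Dom(\delta_\XX)^{\perp\perp}$ and $\Dom(\delta_\YY)^{\perp\perp}$, so Theorem~\ref{theo-strongly-unif} applies: for the uni-cover $\mathfrak{b}_m\in\sigma_{\B_\YY}^{\mathrm{b}}$ induced by a basic cover $\mathcal{V}_m\in\mu_\YY$, there is a uni-cover $\mathfrak{a}\in\sigma_{\B_\XX}^{\mathrm{b}}$ with $|a-b|<\mathfrak{a}\Rightarrow|F(a)-F(b)|<\mathfrak{b}_m$. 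Reading this through $\delta_\XX,\delta_\YY$, and using that any $p\in U$ extends to a representative of $p$ with $n$-th token $U$, one obtains that the projected cover $\mathcal{W}:=\{U:(n,U)\in\mathfrak{a}\}$ of $\XX$ satisfies $p,q\in U\in\mathcal{W}\Rightarrow|f(p)-f(q)|<\mathcal{V}_m$. Hence it suffices to produce a uniform cover of $\XX$ refining $\mathcal{W}$, i.e. to show $\mathcal{W}\in\mu_\XX$; since the $\mathcal{V}_m$ form a basis of $\mu_\YY$, this yields uniform continuity of $f$.

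The key structural claim I would establish is that, for chain-connected $\XX$, \emph{every uni-cover $\mathfrak{a}$ of $\B_\XX$ is single-level}, whence $\mathcal{W}=\mathcal{U}_{n_0}$ for some $n_0$ and is manifestly uniform. The starting point is that distinct levels force disjointness: $(n,U),(n',V)\in\mathfrak{a}$ with $n\neq n'$ are incoherent, so $U\cap V=\emptyset$. From this one shows that each $p\in\XX$ has a well-defined level $n_p$ (all tokens of $\mathfrak{a}$ met by representatives of $p$ share this level), that \emph{every} level-$n_p$ cover element containing $p$ lies in $\mathfrak{a}$, and that no level-$m$ element ($m\neq n_p$) containing $p$ does. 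The space then splits as $\XX=\bigsqcup_n X_n$ with $X_n=\{p:n_p=n\}$.

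The main obstacle, and the place where chain-connectedness is indispensable, is ruling out more than one nonempty $X_n$. I would argue by contradiction: given $p\in X_n$, $q\in X_{n'}$ with $n\neq n'$, chain-connectedness provides, for a sufficiently fine basic cover $\mathcal{U}_M$, a $\mathcal{U}_M$-chain from $p$ to $q$; choosing overlap points yields consecutive $r,r'$ with $d(r,r')<2^{-M+1}$ but differing levels, say $r\in X_m$ and $r'\in X_{m'}$ with $m\neq m'$. Taking $M$ large there is a single level-$m$ ball $U\in\mathcal{U}_m$ containing both $r$ and $r'$; but $r\in X_m$ forces $(m,U)\in\mathfrak{a}$, while $r'\notin X_m$ forces $(m,U)\notin\mathfrak{a}$, a contradiction. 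Thus $\mathfrak{a}$ is single-level, $\mathcal{W}$ is uniform, and feeding this back gives $|p-q|<\mathcal{U}_{n_0}\Rightarrow|f(p)-f(q)|<\mathcal{V}_m$. The delicate points on which I expect to spend most care are the use of metric balls (legitimate for separable metrizable $\XX$, and harmless up to linear isomorphism of standard representations) to guarantee the common ball $U$, and the verification that every cover element containing a point of $X_n$ genuinely extends to a convergent representative, so that it must belong to $\mathfrak{a}$.
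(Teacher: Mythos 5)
Your overall architecture is the same as the paper's: the forward direction is delegated to Theorem~\ref{thm-unif-realize}, and the converse is reduced, via totality (Lemma~\ref{lem-tot-ext}) and strong uniform continuity of the realizer (Theorem~\ref{theo-strongly-unif}), to the structural claim that every uni-cover of $\B_\XX$ is single-level and projects under $\delta_\XX$ to a basic uniform cover $\mathcal{U}_{n_0}$ of $\XX$. This is exactly the content of the paper's Lemma~\ref{lem-quot} ($\delta_\XX$ is a uniform quotient when $\XX$ is chain-connected), and your preparatory observations (cross-level tokens of an anti-clique have disjoint carriers; each point $p$ has a well-defined level $n_p$; every level-$n_p$ element containing $p$ belongs to $\mathfrak{a}$; hence $\XX=\bigsqcup_n X_n$) are all correct and provable by the two-representative argument you sketch.

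However, the step where you actually invoke chain-connectedness has a genuine gap. You fix a fine cover $\mathcal{U}_M$, extract from a $\mathcal{U}_M$-chain a consecutive pair $r,r'$ with $d(r,r')<2^{-M+1}$ and $n_r=m\neq m'=n_{r'}$, and then claim that ``taking $M$ large'' there is a single element of $\mathcal{U}_m$ containing both. But $m=n_r$ depends on the pair $(r,r')$, which depends on the chain, which depends on $M$: as you increase $M$ the witnessing pair moves and its level $m$ may grow without bound, so there is no way to choose $M$ in advance with $2^{-M+1}\leq 2^{-m}$. The argument is circular as stated, and the appeal to metric balls does not repair it. The fix --- which is what the paper's proof of Lemma~\ref{lem-quot} does --- is to run the chain \emph{at the fixed level $n$ itself} rather than at a large auxiliary level: if $(n,U)\in\mathfrak{a}$ and $U\cap U'\neq\emptyset$ with $U'\in\mathcal{U}_n$, pick $p\in U\cap U'$ and two representatives of $p$ agreeing everywhere except at level $n$, where they carry $U$ and $U'$ respectively; the singleton-intersection property of the uni-cover then forces $(n,U')\in\mathfrak{a}$. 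Chain-connectedness applied to the cover $\mathcal{U}_n$ propagates this from one element to all of $\mathcal{U}_n$ (equivalently, in your decomposition: each $X_n$ is a union of level-$n$ elements of $\mathfrak{a}$, so a $\mathcal{U}_n$-chain starting in $X_n$ never leaves $X_n$, killing any second nonempty $X_{n'}$). With that replacement, and the maximality argument showing $\mathfrak{a}$ contains no further tokens, your proof goes through and coincides with the paper's.
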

\begin{proof}
The ``if''-direction is due to Theorem \ref{thm-unif-realize}. 
We shall show the ``only-if'' direction. 
As noted above, if $f$ is linearly realizable, there exists a total linear map $F:\B_\XX \Linear \B_\YY$ 
which tracks $f$, hence $F$ is uniformly continuous w.r.t. the bounded uniformities by Theorem 
\ref{theo-strongly-unif}. 
Any standard representation $\delta_\YY$ is also uniformly continuous 
as a partial map $\delta_\YY: \subseteq \B_\YY\to \YY$, 
so is the composition $\delta_\YY \circ F: \Dom(\delta_\XX) \to \YY$. 
Since $\delta_\XX$ is a uniform quotient by Lemma \ref{lem-quot} in \S \ref{A-lem-quot}, 
uniform continuity of $f\circ \delta_\XX = \delta_\YY \circ F$ 
implies that of $f$. \QED
\end{proof}
This result substantially and systematically 
generalizes the already mentioned result in \cite{MT16}: 
a function $f:\Real\to\Real$ is linearly realizable w.r.t. $\rho_\RR$ iff it is uniformly continuous. 

\section{Related and Future Work}

\paragraph{Type theory.}
In this paper, we have proposed 
coherence spaces with totality as an extension of 
ordinary coherence spaces, following the idea of Kristiansen and Normann. 
Originally in the domain theory, \emph{domains with totality}, are introduced by Berger \cite{Be93}
to interpret Martin-L\"{o}f type theory (i.e., intuitionistic type theory), using ``total'' domain elements. 
Since our model of coherence spaces with totality 
is a linear version of this model, 
one can expect that it could model \emph{intuitionistic linear type theory}.

Our theory also includes a natural representation of 
(separable, metrizable) uniform spaces and uniformly continuous maps
between them. Hence it might lead to a denotational model
of real functional programming languages 
(e.g., \cite{Es96,ES14}) extended with other uniform spaces,
where one can deal with uniformly continuous functions based on 
linear types.

\paragraph{Realizability theory.}
In the traditional setting, giving representations roughly amounts to 
constructing modest sets over a partial combinatory algebra (PCA) 
in the theory of realizability. 
Our model of coherent representations and stable realizability 
is in fact considered as a modest set model over 
a PCA $\mathbb{C}oh$, constructed 
from the \emph{universal coherence space} $\U$ in 
$\mathbf{Stab}$, to which one can embed any coherence spaces by \emph{linear}
(hence stable) maps. 
A modest set model turns out to be a model of intuitionistic logic \cite{Lon94,Ba00}. 
Bauer then gave an attractive paradigm \cite{Ba05}: 
\begin{center}
{\it Computable mathematics is a realizability interpretation\\ of 
constructive mathematics.}
\end{center}

On the other hand, less is known about the relationship between 
computable mathematics and linear realizability theory over 
a \emph{linear combinatory algebra (LCA)} \cite{AL00}, 
which is a linear analogue of PCA, and 
for which we can build a PER model of \emph{intuitionitstic linear logic}. 
Since the above 
universal coherence space $\U$ in fact resides in $\Lin$, it is in principle
possible to develop such a theory based on our framework.
We believe that exploring this direction, already mentioned 
in \cite{Ba00}, will be an interesting avenue for future work. 

\section*{Acknowledgement}
The author is greatful to 
Naohiko Hoshino and Kazushige Terui (RIMS) for useful comments.

\appendix
\section{Miscellaneous Proofs}
\subsection {Construction of Totalities}
\begin{lemma}
The functional totality is well-defined: $\mathcal{T}_{\X\llto\Y} = \mathcal{T}_{\X\llto\Y}^{\perp\perp}$.
\end{lemma}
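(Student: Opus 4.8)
The plan is to present $\mathcal{T}_{\X\llto\Y}$ as an orthogonal $\mathcal{B}^\perp$ and then invoke the standard fact, recorded just before the definition of totality, that $\mathcal{A}=\mathcal{A}^{\perp\perp}$ holds precisely when $\mathcal{A}=\mathcal{B}^\perp$ for some $\mathcal{B}\subseteq\mathcal{Z}^\perp$. Working inside the coherence space $\mathcal{Z}:=\X\llto\Y$, the test set I would use is
\[
\mathcal{B}:=\{\,a\otimes\mathfrak{c}\ :\ a\in\mathcal{T}_\X,\ \mathfrak{c}\in\mathcal{T}_\Y^\perp\,\},\qquad a\otimes\mathfrak{c}:=\{(x,y):x\in a,\ y\in\mathfrak{c}\}.
\]
The goal is then to prove the single identity $\mathcal{T}_{\X\llto\Y}=\mathcal{B}^\perp$, after which biorthogonal closure is automatic.

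First I would verify that each $a\otimes\mathfrak{c}$ really is an anti-clique of $\X\llto\Y$, i.e.\ that $\mathcal{B}\subseteq(\X\llto\Y)^\perp$. If $(x,y)$ and $(x',y')$ are distinct elements of $a\otimes\mathfrak{c}$, then coherence of the clique $a$ gives $x\coh x'$, whereas $\mathfrak{c}$ being an anti-clique of $\Y$ gives $\neg(y\scoh y')$; by the defining clause of $\llto$ this means $(x,y)$ and $(x',y')$ are not strictly coherent, so $a\otimes\mathfrak{c}$ is indeed a clique of $(\X\llto\Y)^\perp$.

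The crux is the pointwise equivalence
\[
\kappa\perp(a\otimes\mathfrak{c})\quad\Longleftrightarrow\quad\widehat{\kappa}(a)\perp\mathfrak{c},
\]
holding for every clique $\kappa\in\X\llto\Y$, every $a\in\X$ and every $\mathfrak{c}\in\Y^\perp$. Here both $\kappa\cap(a\otimes\mathfrak{c})$ and $\widehat{\kappa}(a)\cap\mathfrak{c}$ are of the form (clique)$\cap$(anti-clique), hence each is empty or a singleton by the basic observation at the head of \S 3.1; so it suffices to show that one intersection is inhabited iff the other is. This is immediate from $\widehat{\kappa}(a)=\{y:(x,y)\in\kappa\text{ for some }x\in a\}$: a witness $(x_0,y_0)\in\kappa\cap(a\otimes\mathfrak{c})$ gives $y_0\in\widehat{\kappa}(a)\cap\mathfrak{c}$, and conversely $y_0\in\widehat{\kappa}(a)\cap\mathfrak{c}$ comes from some $x_0\in a$ with $(x_0,y_0)\in\kappa$, whence $(x_0,y_0)\in\kappa\cap(a\otimes\mathfrak{c})$.

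Finally I would assemble the chain of equivalences. By the characterisation of totality, $\widehat{\kappa}(a)\in\mathcal{T}_\Y$ iff $\widehat{\kappa}(a)\perp\mathfrak{c}$ for all $\mathfrak{c}\in\mathcal{T}_\Y^\perp$; so $\widehat{\kappa}[\mathcal{T}_\X]\subseteq\mathcal{T}_\Y$ unfolds to ``$\widehat{\kappa}(a)\perp\mathfrak{c}$ for all $a\in\mathcal{T}_\X$ and $\mathfrak{c}\in\mathcal{T}_\Y^\perp$'', which by the crux becomes ``$\kappa\perp(a\otimes\mathfrak{c})$ for all such $a,\mathfrak{c}$'', i.e.\ $\kappa\in\mathcal{B}^\perp$. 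This gives $\mathcal{T}_{\X\llto\Y}=\mathcal{B}^\perp$ and hence $\mathcal{T}_{\X\llto\Y}=\mathcal{T}_{\X\llto\Y}^{\perp\perp}$. I expect the only genuinely delicate step to be the crux equivalence --- in particular the bookkeeping that keeps the ``empty or singleton'' dichotomy aligned on the two sides --- while the anti-clique check and the final unfolding are routine.
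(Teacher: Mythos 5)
Your proof is correct and follows essentially the same route as the paper's: both exhibit $\mathcal{T}_{\X\llto\Y}$ as the orthogonal of $\mathcal{T}_\X\otimes\mathcal{T}_\Y^\perp$ inside $(\X\llto\Y)^\perp=\X\otimes\Y^\perp$, via the key equivalence $\kappa\perp(a\otimes\mathfrak{c})\Leftrightarrow\widehat{\kappa}(a)\perp\mathfrak{c}$, and then conclude by the fact that orthogonals are biorthogonally closed. You merely spell out the anti-clique check and the ``empty or singleton'' bookkeeping that the paper leaves implicit.
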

\begin{proof}
Notice that $(\X\llto\Y)^\perp = \X\otimes \Y^\perp$ 
and 
$\widehat{\kappa}(a) \perp \mathfrak{c}$ iff
$\kappa \perp a \otimes \mathfrak{c}$ for 
any $\kappa \in \X\llto\Y$, $a \in \X$ and $\mathfrak{c} \in \Y^\bot$.
It follows that $\kappa \in \mathcal{T}_{\X \llto \Y}$ iff 
$\widehat{\kappa} (a) \perp \mathfrak{c}$ for any 
$a \in \X$ and $\mathfrak{c} \in \mathcal{T}_\Y^\bot$ iff $\kappa \in 
(\mathcal{T}_{\X} \otimes \mathcal{T}^\perp_{\Y})^\perp$.\QED
\end{proof}

The following lemmas prove the internal completeness of $\otimes$ and $\,!\,$ 
(Proposition \ref{prop-total-cliques}). 

\begin{lemma}\label{l-1}
Given $\mathfrak{a} \in \mathcal{T}_\X^\perp$ and
$\mathfrak{b} \in \mathcal{T}_\Y^\perp$, let 
$\mathfrak{a} \bullet \mathfrak{b} := \{ (x,y) : x\in 
\mathfrak{a},\ y \in \mathfrak{b}\}$. 
Then  
$\mathfrak{a} \bullet \mathfrak{b} \in
(\mathcal{T}_{\X} \otimes \mathcal{T}_{\Y})^\perp$.
\end{lemma}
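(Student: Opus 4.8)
The plan is to unfold the definitions directly, since the statement really packages two separate claims: that $\mathfrak{a}\bullet\mathfrak{b}$ is an anti-clique of $\X\otimes\Y$ at all, and that it is orthogonal to every element of $\mathcal{T}_\X\otimes\mathcal{T}_\Y$. I would establish them in turn.

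First I would verify that $\mathfrak{a}\bullet\mathfrak{b}$ is genuinely an anti-clique of $\X\otimes\Y$ (equivalently, a clique of $(\X\otimes\Y)^\perp$), which is needed before the containment in $(\mathcal{T}_\X\otimes\mathcal{T}_\Y)^\perp$ even makes sense. Recall that in $\X\otimes\Y$ one has $(x_1,y_1)\coh(x_2,y_2)$ iff $x_1\coh x_2$ and $y_1\coh y_2$. So I take two distinct tokens $(x_1,y_1),(x_2,y_2)\in\mathfrak{a}\bullet\mathfrak{b}$ and argue by cases: if $x_1\neq x_2$ then $\neg(x_1\coh x_2)$ because $\mathfrak{a}$ is an anti-clique of $\X$, so $\neg((x_1,y_1)\coh(x_2,y_2))$; otherwise $x_1=x_2$ forces $y_1\neq y_2$, whence $\neg(y_1\coh y_2)$ since $\mathfrak{b}$ is an anti-clique of $\Y$, again breaking coherence of the tensor pair. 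Either way the two tokens are incoherent, so $\mathfrak{a}\bullet\mathfrak{b}$ is an anti-clique.

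For the orthogonality itself I would fix an arbitrary $a\otimes b\in\mathcal{T}_\X\otimes\mathcal{T}_\Y$ with $a\in\mathcal{T}_\X$, $b\in\mathcal{T}_\Y$ and compute the intersection coordinatewise: $(a\otimes b)\cap(\mathfrak{a}\bullet\mathfrak{b})=(a\cap\mathfrak{a})\times(b\cap\mathfrak{b})$. Since $\mathfrak{a}\in\mathcal{T}_\X^\perp$ and $a\in\mathcal{T}_\X$ we have $a\perp\mathfrak{a}$, so $a\cap\mathfrak{a}$ is a singleton $\{x_0\}$; likewise $b\cap\mathfrak{b}=\{y_0\}$. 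Hence the intersection is $\{(x_0,y_0)\}$, a singleton, which is precisely $a\otimes b\perp\mathfrak{a}\bullet\mathfrak{b}$. As $a\otimes b$ ranged over all of $\mathcal{T}_\X\otimes\mathcal{T}_\Y$, this gives $\mathfrak{a}\bullet\mathfrak{b}\in(\mathcal{T}_\X\otimes\mathcal{T}_\Y)^\perp$.

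There is no deep obstacle here; the proof is a double unfolding of definitions. The only point demanding care is the case split in the anti-clique verification, where one must remember that two tokens of $\mathfrak{a}\bullet\mathfrak{b}$ may coincide in one coordinate while differing in the other, and that incoherence in \emph{either} coordinate already destroys coherence of the tensor pair. The orthogonality step is immediate once one observes that intersecting with a product set factors coordinatewise, reducing everything to the two singleton conditions $a\perp\mathfrak{a}$ and $b\perp\mathfrak{b}$ supplied for free by $\mathfrak{a}\in\mathcal{T}_\X^\perp$ and $\mathfrak{b}\in\mathcal{T}_\Y^\perp$.
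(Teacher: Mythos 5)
Your proof is correct and takes essentially the same route as the paper's: first verify that $\mathfrak{a}\bullet\mathfrak{b}$ is an anti-clique of $\X\otimes\Y$ via a case split on which coordinate differs, then check orthogonality against each $a\otimes b$ by reducing to the two singleton conditions $a\perp\mathfrak{a}$ and $b\perp\mathfrak{b}$. The only difference is presentational: the paper writes out just the case $x\neq x'$ (leaving the symmetric case implicit) and asserts the orthogonality conclusion directly, whereas you make the second case and the coordinatewise factorization of the intersection explicit.
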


\begin{proof}
First of all, $\mathfrak{a} \bullet \mathfrak{b}$ is an anti-clique 
of $\X \otimes \Y$.
Indeed, given $(x,y), (x', y') \in
\mathfrak{a} \bullet \mathfrak{b}$ with $(x,y) \neq (x', y')$, 
either $x\neq x'$ or $y\neq y'$. Assume that $x\neq x'$.
We then have $\neg (x \coh x')$, so 
$\neg((x,y) \coh (x', y'))$.

Now given $c \in \mathcal{T}_\X$ and $d \in \mathcal{T}_\Y$, we have
$c \perp \mathfrak{a}$ and $d \perp \mathfrak{b}$, from which we 
conclude 
$c \otimes d \perp \mathfrak{a} \bullet \mathfrak{b}$.
\QED
\end{proof}

\begin{lemma}\label{l-11}
$\mathcal{T}_{\X}^{\circ}\otimes \mathcal{T}_{\Y}^{\circ} \subseteq
(\mathcal{T}_\X \otimes \mathcal{T}_\Y)^{\perp\perp\circ}$.
\end{lemma}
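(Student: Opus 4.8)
The plan is to take an arbitrary element of the left-hand side, namely $a \otimes b$ with $a \in \mathcal{T}_\X^\circ$ and $b \in \mathcal{T}_\Y^\circ$, and show it is a strict total clique of $\X \otimes \Y$. I would split this into two parts: that $a \otimes b$ is total, and that it is strict. Totality is essentially free: since $a \in \mathcal{T}_\X$ and $b \in \mathcal{T}_\Y$ we have $a \otimes b \in \mathcal{T}_\X \otimes \mathcal{T}_\Y$, and because every set of cliques is contained in its bi-orthogonal, $a \otimes b \in (\mathcal{T}_\X \otimes \mathcal{T}_\Y)^{\perp\perp}$. So the genuine content lies entirely in strictness, i.e.\ showing that $a \otimes b$ has no proper total subclique, equivalently $(a \otimes b)^\circ = a \otimes b$.

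For strictness I would argue token by token, producing for each token a co-total anti-clique that isolates it. Fix $(x_0, y_0) \in a \otimes b$. By the strictness characterization of $a$ (every token of a strict total clique lies in some strict co-totality element), there is $\mathfrak{a} \in (\mathcal{T}_\X^\perp)^\circ$ with $x_0 \in \mathfrak{a}$; since $a$ is total, $a \cap \mathfrak{a}$ is a singleton, and as $x_0$ belongs to it, $a \cap \mathfrak{a} = \{x_0\}$. Symmetrically I obtain $\mathfrak{b} \in (\mathcal{T}_\Y^\perp)^\circ$ with $b \cap \mathfrak{b} = \{y_0\}$. Applying Lemma \ref{l-1}, the product anti-clique $\mathfrak{a} \bullet \mathfrak{b}$ lies in $(\mathcal{T}_\X \otimes \mathcal{T}_\Y)^\perp$, and a direct computation yields $(a \otimes b) \cap (\mathfrak{a} \bullet \mathfrak{b}) = \{(x,y) : x \in a \cap \mathfrak{a},\ y \in b \cap \mathfrak{b}\} = \{(x_0, y_0)\}$.

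To finish, I would note that any total subclique $c \subseteq a \otimes b$ must satisfy $c \perp (\mathfrak{a} \bullet \mathfrak{b})$, so $c \cap (\mathfrak{a} \bullet \mathfrak{b})$ is a singleton contained in $\{(x_0, y_0)\}$, forcing $(x_0, y_0) \in c$. Since $(x_0, y_0)$ was arbitrary, every total subclique of $a \otimes b$ contains all of $a \otimes b$, hence equals it; thus $a \otimes b$ is strict, and $a \otimes b \in (\mathcal{T}_\X \otimes \mathcal{T}_\Y)^{\perp\perp\circ}$. The main obstacle is choosing the right separating anti-cliques, and the key idea is precisely that the product anti-cliques $\mathfrak{a} \bullet \mathfrak{b}$ provided by Lemma \ref{l-1} slice $a \otimes b$ into single tokens, which works exactly because the strictness of $a$ and of $b$ lets us pin each coordinate down to a singleton.
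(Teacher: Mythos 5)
Your proof is correct and follows essentially the same route as the paper's: totality of $a\otimes b$ is immediate from $\mathcal{A}\subseteq\mathcal{A}^{\perp\perp}$, and strictness is obtained token by token via the product anti-cliques $\mathfrak{a}\bullet\mathfrak{b}$ of Lemma \ref{l-1}. The only difference is that you unfold the token-wise characterization of strictness (each token lies in a co-total anti-clique, which pins it inside every total subclique) where the paper simply invokes it, which is a matter of presentation rather than substance.
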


\begin{proof}
Let $a \in \mathcal{T}_{\X}^{\circ}$ and 
$b \in \mathcal{T}_{\Y}^{\circ}$. It is clear that 
$a\otimes b \in 
(\mathcal{T}_\X \otimes \mathcal{T}_\Y)^{\perp\perp}$. Too see strictness,
let $(x,y) \in a \otimes b$. Then there are 
$\mathfrak{c} \in \mathcal{T}_\X^\perp$ and 
$\mathfrak{d} \in \mathcal{T}_\Y^\perp$ such that 
$x \in \mathfrak{c}$ and $y \in \mathfrak{d}$.
Hence 
$(x,y) \in \mathfrak{c}\bullet\mathfrak{d}$ and 
$\mathfrak{c}\bullet\mathfrak{d} \in 
(\mathcal{T}_{\X} \otimes \mathcal{T}_{\Y})^\perp$ by Lemma \ref{l-1}.
\QED
\end{proof}

\begin{lemma}\label{l-12}
Assume that $\mathcal{T}_{\X}^\perp$ and 
$\mathcal{T}_{\Y}^\perp$ are not empty.
Given $c \in (\mathcal{T}_{\X} \otimes 
\mathcal{T}_{\Y})^{\perp\perp}$, let 
$c^1 := \{x: (x,y) \in c \mbox{ for some } y\}$
and 
$c^2 := \{y: (x,y) \in c \mbox{ for some } x\}$.
Then 
$c^1 \in \mathcal{T}_\X$ and 
$c^2 \in \mathcal{T}_\Y$.
\end{lemma}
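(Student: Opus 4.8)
The plan is to verify totality membership directly through the orthogonality characterization recalled just before the statement: $a \in \mathcal{T}_\X$ iff $a \perp \mathfrak{c}$ for every $\mathfrak{c} \in \mathcal{T}_\X^\perp$, where $a \perp \mathfrak{c}$ means $a \cap \mathfrak{c}$ is a singleton. By symmetry I would argue only $c^1 \in \mathcal{T}_\X$ in detail; the claim $c^2 \in \mathcal{T}_\Y$ follows by swapping the roles of $\X$ and $\Y$, this time invoking nonemptiness of $\mathcal{T}_\X^\perp$ rather than of $\mathcal{T}_\Y^\perp$. So both nonemptiness hypotheses get used, one per projection.

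First I would record that $c^1$ is a clique of $\X$. Since $c \in (\mathcal{T}_\X \otimes \mathcal{T}_\Y)^{\perp\perp} \subseteq \X \otimes \Y$ is a clique, any two of its tokens $(x,y),(x',y')$ are coherent, which by the definition of $\otimes$ forces $x \coh_\X x'$; hence the first projections collected in $c^1$ are pairwise coherent, and dually for $c^2$.

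Next, I would fix an arbitrary $\mathfrak{c} \in \mathcal{T}_\X^\perp$ and show $c^1 \cap \mathfrak{c}$ is a singleton. Uniqueness needs no totality: if $x, x' \in c^1 \cap \mathfrak{c}$, then $x \coh_\X x'$ because both lie in the clique $c^1$, while $x \icoh x'$ because both lie in the anti-clique $\mathfrak{c}$, and these together force $x = x'$. The substantive point is nonemptiness, and this is exactly where the hypothesis $\mathcal{T}_\Y^\perp \neq \emptyset$ enters. I would pick any $\mathfrak{d} \in \mathcal{T}_\Y^\perp$ and form $\mathfrak{c} \bullet \mathfrak{d}$; by Lemma \ref{l-1} this is an anti-clique in $(\mathcal{T}_\X \otimes \mathcal{T}_\Y)^\perp$. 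Since $c \in (\mathcal{T}_\X \otimes \mathcal{T}_\Y)^{\perp\perp}$, orthogonality yields that $c \cap (\mathfrak{c} \bullet \mathfrak{d})$ is a singleton, hence nonempty; any witness $(x_0,y_0)$ in it satisfies $x_0 \in \mathfrak{c}$ and $(x_0,y_0) \in c$, so $x_0 \in c^1 \cap \mathfrak{c}$.

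Combining the two halves, $c^1 \cap \mathfrak{c}$ is a singleton for every $\mathfrak{c} \in \mathcal{T}_\X^\perp$, that is, $c^1 \in \mathcal{T}_\X$. I do not expect a genuine obstacle here; the one subtlety — and the reason the nonemptiness assumptions on $\mathcal{T}_\X^\perp$ and $\mathcal{T}_\Y^\perp$ are imposed — is that producing the witness $x_0$ requires a ``dummy'' orthogonal $\mathfrak{d}$ on the opposite factor in order to build the product anti-clique $\mathfrak{c} \bullet \mathfrak{d}$ against which Lemma \ref{l-1} and the bi-orthogonal closure of $c$ can be applied. Everything else is routine bookkeeping of cliques against anti-cliques.
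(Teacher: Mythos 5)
Your proposal is correct and follows essentially the same route as the paper's proof: for each $\mathfrak{c}\in\mathcal{T}_\X^\perp$ you pick some $\mathfrak{d}\in\mathcal{T}_\Y^\perp$, apply Lemma \ref{l-1} to get $\mathfrak{c}\bullet\mathfrak{d}\in(\mathcal{T}_\X\otimes\mathcal{T}_\Y)^\perp$, and project the unique witness of $c\perp\mathfrak{c}\bullet\mathfrak{d}$ down to the first coordinate. You merely spell out the clique and singleton checks that the paper leaves implicit.
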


\begin{proof}
Let $\mathfrak{a} \in \mathcal{T}_\X^\perp$.
Since we suppose that $\mathcal{T}_\Y\neq\emptyset$ 
there is $\mathfrak{b} \in \mathcal{T}_\Y^\perp$ 
and $\mathfrak{a}\bullet\mathfrak{b} \in 
(\mathcal{T}_\X \otimes \mathcal{T}_\Y)^\perp$ by Lemma \ref{l-1}.
Hence $c \perp \mathfrak{a}\bullet\mathfrak{b}$ and 
we conclude that $c^1 \perp \mathfrak{a}$.\QED
\end{proof}

\begin{lemma}\label{l-13}
Assume that $\mathcal{T}_{\X}$ and 
$\mathcal{T}_{\Y}$ are not empty.
Given $\mathfrak{c} \in 
(\mathcal{T}_\X \otimes \mathcal{T}_\Y)^\perp$, let 
$\mathfrak{c}^1 := \{x: (x,y) \in \mathfrak{c} \mbox{ for some } y\}$
and 
$\mathfrak{c}^2 := \{y: (x,y) \in \mathfrak{c} \mbox{ for some } x\}$.
Then 
$\mathfrak{c}^1 \in \mathcal{T}_\X^\perp$ and 
$\mathfrak{c}^2 \in \mathcal{T}_\Y^\perp$.
\end{lemma}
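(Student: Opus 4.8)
The plan is to mirror the proof of Lemma~\ref{l-12}, exploiting that the membership $\mathfrak{c}\in(\mathcal{T}_\X\otimes\mathcal{T}_\Y)^\perp$ unfolds, by the definition of $(\cdot)^\perp$ and of the pointwise product $\mathcal{T}_\X\otimes\mathcal{T}_\Y$, into a single combinatorial fact: $\mathfrak{c}\cap(a\otimes b)$ is a singleton for every $a\in\mathcal{T}_\X$ and every $b\in\mathcal{T}_\Y$. The two nonemptiness hypotheses are there precisely to keep a probing clique available on the side we are not analysing (witnesses $b\in\mathcal{T}_\Y$ when studying $\mathfrak{c}^1$, and witnesses $a\in\mathcal{T}_\X$ when studying $\mathfrak{c}^2$), so I would prove $\mathfrak{c}^1\in\mathcal{T}_\X^\perp$ and obtain $\mathfrak{c}^2\in\mathcal{T}_\Y^\perp$ by interchanging $\X$ and $\Y$. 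By the biorthogonal description of co-totality, proving $\mathfrak{c}^1\in\mathcal{T}_\X^\perp$ splits into two tasks: (i) $\mathfrak{c}^1$ is an anti-clique of $\X$; and (ii) $a\cap\mathfrak{c}^1$ is a singleton for every $a\in\mathcal{T}_\X$. As an aside, via $(\X\otimes\Y)^\perp\simeq\X\llto\Y^\perp$ one may read $\mathfrak{c}$ as a total linear map $\X\Linear\Y^\perp$ (the hypothesis says exactly $\widehat{\mathfrak{c}}[\mathcal{T}_\X]\subseteq\mathcal{T}_\Y^\perp$); but this viewpoint does not obviously constrain the first projection $\mathfrak{c}^1$, so I would proceed combinatorially.

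For the easy half of (ii), fix $a\in\mathcal{T}_\X$ and, using $\mathcal{T}_\Y\neq\emptyset$, pick any $b\in\mathcal{T}_\Y$. Then $a\otimes b\in\mathcal{T}_\X\otimes\mathcal{T}_\Y$, so $\mathfrak{c}\cap(a\otimes b)$ is a nonempty singleton $\{(x_0,y_0)\}$, whose first coordinate gives $x_0\in a\cap\mathfrak{c}^1$; hence the intersection is nonempty. The remaining content — that $a\cap\mathfrak{c}^1$ contains at most one point, together with the anti-clique property (i) — I would attack together. Suppose $x,x'$ are distinct elements of $\mathfrak{c}^1$ with witnessing pairs $(x,y),(x',y')\in\mathfrak{c}$. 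Since $\mathfrak{c}$ is an anti-clique of $\X\otimes\Y$ and $(x,y)\neq(x',y')$, unfolding incoherence in the tensor yields $\neg(x\coh x')$ or $\neg(y\coh y')$. If the first disjunct always held we would be finished: $\mathfrak{c}^1$ would be an anti-clique, and then a clique $a\in\mathcal{T}_\X$ could not contain two incoherent points $x,x'$, forcing uniqueness in (ii).

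The main obstacle is exactly the bad case $x\coh x'$ together with $\neg(y\coh y')$, which the incoherence of $\mathfrak{c}$ by itself does not exclude; here the argument must genuinely use totality, not mere incoherence. The natural attempt is to find $a\in\mathcal{T}_\X$ with $x,x'\in a$ and $b\in\mathcal{T}_\Y$ with $y,y'\in b$, for then $(x,y),(x',y')\in\mathfrak{c}\cap(a\otimes b)$ would contradict the singleton property. Both requirements are problematic: $\neg(y\coh y')$ forbids $y$ and $y'$ from sharing any clique of $\Y$, and a priori $x,x'$ need not lie in a common total clique of $\X$ either. Circumventing this is where I expect the real work to be, and where the nonemptiness of the totalities must be used in an essential (and, I suspect, rather delicate) way; I would regard establishing the anti-clique property (i) as the crux on which the whole lemma turns.

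Once (i) and (ii) are secured, $\mathfrak{c}^1\in\mathcal{T}_\X^\perp$ is immediate from the biorthogonal definition, and $\mathfrak{c}^2\in\mathcal{T}_\Y^\perp$ follows by running the same argument with $\X$ and $\Y$ interchanged, now invoking $\mathcal{T}_\X\neq\emptyset$ to supply the probing cliques $a$ (and Lemma~\ref{l-1} if one prefers to package the products $\mathfrak{a}\bullet\mathfrak{b}$ uniformly). As a final sanity check I would verify that the completed argument appeals only to $\mathcal{T}_\X,\mathcal{T}_\Y$ being nonempty, and not covertly to $\mathcal{T}_\X^\perp$ or $\mathcal{T}_\Y^\perp$ being nonempty nor to any strictness of $\mathfrak{c}$, since those are not among the stated hypotheses.
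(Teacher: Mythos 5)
Your proposal does not actually prove the lemma, and you say as much yourself: you establish only the nonemptiness half of $a\perp\mathfrak{c}^1$ (by intersecting $\mathfrak{c}$ with $a\otimes b$ for some $b\in\mathcal{T}_\Y$) and you explicitly leave open both the anti-clique property of $\mathfrak{c}^1$ and the at-most-one-point property of $a\cap\mathfrak{c}^1$, which together are exactly what membership in $\mathcal{T}_\X^\perp$ requires beyond ``meets every total clique''. For comparison, the paper's entire proof is the single word ``Similarly'', pointing back to Lemma~\ref{l-12}: pick $b\in\mathcal{T}_\Y$ using $\mathcal{T}_\Y\neq\emptyset$, observe $a\otimes b\in\mathcal{T}_\X\otimes\mathcal{T}_\Y$, deduce $\mathfrak{c}\perp a\otimes b$, and project to the first coordinate. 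That is precisely the half you already have; it says nothing about your ``bad case'' ($x\neq x'$, $x\coh x'$, $\neg(y\coh y')$), since the second coordinate $y'$ of another witness $(x',y')\in\mathfrak{c}$ need not lie in the chosen $b$.

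Your instinct that this bad case is the crux is correct, and in fact it cannot be excluded from the stated hypotheses, so the gap is not fillable as posed. Take $\X$ with two tokens $x\scoh x'$ and $\mathcal{T}_\X:=\{\{x\}\}^{\perp\perp}=\{\{x\},\{x,x'\}\}$, and $\Y$ with two tokens $y\icoh y'$ and $\mathcal{T}_\Y:=\{\{y\}\}^{\perp\perp}=\{\{y\}\}$; the totalities and the co-totalities are all nonempty. Then $\mathcal{T}_\X\otimes\mathcal{T}_\Y=\{\{(x,y)\},\{(x,y),(x',y)\}\}$, and $\mathfrak{c}:=\{(x,y),(x',y')\}$ is an anti-clique of $\X\otimes\Y$ (because $\neg(y\coh y')$) meeting each of those two cliques in exactly the singleton $\{(x,y)\}$, so $\mathfrak{c}\in(\mathcal{T}_\X\otimes\mathcal{T}_\Y)^\perp$; yet $\mathfrak{c}^1=\{x,x'\}$ is a two-element clique of $\X$, hence not an anti-clique, and $\{x,x'\}\in\mathcal{T}_\X$ meets it in two points. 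So the obstruction you isolated is a genuine counterexample to the statement, not a missing trick on your part: the paper's ``Similarly'' silently conflates ``meets every total clique'' with ``belongs to $\mathcal{T}_\X^\perp$'', and any repair must either weaken the conclusion of the lemma or add hypotheses ruling out configurations like the one above.
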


\begin{proof}
Similarly.\QED
\end{proof}

\begin{lemma}\label{l-14}
Assume that $\mathcal{T}_{\X}$,
$\mathcal{T}_{\Y}$, 
$\mathcal{T}_{\X}^\perp$ and
$\mathcal{T}_{\Y}^\perp$ are all nonempty. Then
$(\mathcal{T}_\X \otimes \mathcal{T}_\Y)^{\perp\perp\circ}
\subseteq \mathcal{T}_{\X}^{\circ}\otimes \mathcal{T}_{\Y}^{\circ}$.
\end{lemma}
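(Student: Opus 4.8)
The plan is to identify the two factors of $c$ as the strict cores of its projections, and then to trap $c$ between the tensor product of those cores and itself. Write $c^1 := \{x : (x,y)\in c \text{ for some } y\}$ and $c^2 := \{y : (x,y)\in c \text{ for some } x\}$ for the projections of $c$, and set $a := (c^1)^\circ$ and $b := (c^2)^\circ$. Since all four of $\mathcal{T}_\X,\mathcal{T}_\Y,\mathcal{T}_\X^\perp,\mathcal{T}_\Y^\perp$ are nonempty, the empty clique is not total (it cannot meet a nonempty co-totality in a singleton), so $c\neq\emptyset$ and hence $c^1,c^2\neq\emptyset$. By Lemma \ref{l-12} we have $c^1 \in \mathcal{T}_\X$ and $c^2 \in \mathcal{T}_\Y$, so $a \in \mathcal{T}_\X^\circ$ and $b \in \mathcal{T}_\Y^\circ$ are well-defined strict total cliques.

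The key step I would carry out is the inclusion $a \otimes b \subseteq c$. Fix $x \in a$ and $y \in b$. By strictness of $a$ (the ``dual'' half of the strictness characterization: every token of a strict clique lies in some strict co-total anti-clique) there is $\mathfrak{a} \in (\mathcal{T}_\X^\perp)^\circ$ with $x \in \mathfrak{a}$, and likewise $\mathfrak{b} \in (\mathcal{T}_\Y^\perp)^\circ$ with $y \in \mathfrak{b}$. Since $c^1 \in \mathcal{T}_\X$ and $\mathfrak{a} \in \mathcal{T}_\X^\perp$, the set $c^1 \cap \mathfrak{a}$ is a singleton; as it contains $x$ (because $x\in a\subseteq c^1$ and $x\in\mathfrak{a}$), it equals $\{x\}$, and similarly $c^2 \cap \mathfrak{b} = \{y\}$. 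Now Lemma \ref{l-1} gives $\mathfrak{a}\bullet\mathfrak{b} \in (\mathcal{T}_\X \otimes \mathcal{T}_\Y)^\perp$, so $c \perp \mathfrak{a}\bullet\mathfrak{b}$ and $c \cap (\mathfrak{a}\bullet\mathfrak{b})$ is a single pair $(x^*,y^*)$. Its coordinates satisfy $x^* \in c^1 \cap \mathfrak{a} = \{x\}$ and $y^* \in c^2 \cap \mathfrak{b} = \{y\}$, whence $(x,y) = (x^*,y^*) \in c$. This establishes $a \otimes b \subseteq c$.

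It then remains to upgrade this inclusion to an equality. By Lemma \ref{l-11}, $a \otimes b \in \mathcal{T}_\X^\circ \otimes \mathcal{T}_\Y^\circ \subseteq (\mathcal{T}_\X \otimes \mathcal{T}_\Y)^{\perp\perp\circ}$; in particular $a \otimes b$ is a \emph{total} clique of $\X \otimes \Y$ contained in $c$. Since $c$ is strict, it is the least total clique below itself, $c = \bigcap\{c' \in \mathcal{T}_{\X\otimes\Y} : c' \subseteq c\}$, and $a\otimes b$ is one of the cliques $c'$ in this intersection; therefore $c \subseteq a \otimes b$. Combining the two inclusions yields $c = a \otimes b \in \mathcal{T}_\X^\circ \otimes \mathcal{T}_\Y^\circ$, which is exactly the claimed containment.

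I expect the middle paragraph to be the main obstacle, namely pinning down that the unique pair of $c \cap (\mathfrak{a}\bullet\mathfrak{b})$ is forced to be precisely $(x,y)$. This works only because two properties cooperate: totality of each projection $c^i$ (so that $c^i$ meets a co-total anti-clique in exactly one token) and strictness of each core (so that every token of $a$, resp.\ $b$, is genuinely captured by some co-total anti-clique). The nonemptiness hypotheses are what guarantee both that such co-total anti-cliques exist and that Lemmas \ref{l-1}, \ref{l-11} and \ref{l-12} are applicable; without them the projections need not be total and the argument collapses. Notably, this route never needs Lemma \ref{l-13}, since strictness of the cores already supplies the required anti-cliques directly.
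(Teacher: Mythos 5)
Your proof is correct, but it reaches the conclusion by a genuinely different route from the paper's. The paper starts from the trivial inclusion $c \subseteq c^1 \otimes c^2$ and does all its work on the other side: it proves that the projections $c^1, c^2$ are themselves \emph{strict} by invoking Lemma \ref{l-13} (the projection $\mathfrak{d}^1$ of a co-total anti-clique $\mathfrak{d} \in (\mathcal{T}_\X\otimes\mathcal{T}_\Y)^\perp$ is again co-total), so that $c^1\otimes c^2$ is strict total by Lemma \ref{l-11} and minimality forces $c = c^1\otimes c^2$. You instead pass to the cores $a=(c^1)^\circ$, $b=(c^2)^\circ$ and prove the nontrivial inclusion in the opposite direction, $a\otimes b \subseteq c$, by the singleton-intersection argument with $\mathfrak{a}\bullet\mathfrak{b}$ from Lemma \ref{l-1}; strictness of $c$ then gives $c \subseteq a\otimes b$. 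As you note, this bypasses Lemma \ref{l-13} entirely, at the cost of the more delicate bookkeeping in your middle paragraph (pinning the unique pair of $c\cap(\mathfrak{a}\bullet\mathfrak{b})$ down to $(x,y)$), and it only establishes strictness of the projections a posteriori, via $c^1 = a$. The paper's version is shorter given that Lemma \ref{l-13} is already on the shelf; yours is self-contained modulo Lemmas \ref{l-1}, \ref{l-11}, \ref{l-12} and the dual characterization of strictness, and both use the nonemptiness hypotheses in the same places (to make Lemmas \ref{l-1} and \ref{l-12} applicable). One cosmetic remark: your opening observation that $c\neq\emptyset$ is not actually needed, since Lemma \ref{l-12} already yields $c^1\in\mathcal{T}_\X$ and $c^2\in\mathcal{T}_\Y$ directly, and nonemptiness of these follows from $\mathcal{T}_\X^\perp,\mathcal{T}_\Y^\perp\neq\emptyset$.
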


\begin{proof}
Let $c \in (\mathcal{T}_\X \otimes \mathcal{T}_\Y)^{\perp\perp\circ}$.
We prove that $c^1 \in \mathcal{T}_{\X}^{\circ}$
(and $c^2 \in \mathcal{T}_{\Y}^{\circ}$). Since 
$c \subseteq c^1 \otimes c^2$ and $c^1 \otimes c^2$ is strict in 
$(\mathcal{T}_\X \otimes \mathcal{T}_\Y)^{\perp\perp}$ by Lemma
\ref{l-11}, we will be able to conclude that $c = c^1 \otimes c^2
\in \mathcal{T}_{\X}^{\circ} \otimes \mathcal{T}_{\Y}^\circ$ by strictness 
of $c$.

Totality of $c^1$ is due to Lemma \ref{l-12}.
To show strictness, let $x \in c^1$, so that $(x,y) \in c$ for some $y$.
Since $c$ is strict, there is $\mathfrak{d} \in 
(\mathcal{T}_\X\otimes\mathcal{T}_\Y)^\perp$ such that $(x,y ) \in \mathfrak{d}$.
We then have $x \in \mathfrak{d}^1$ and 
$\mathfrak{d}^1 \in \mathcal{T}_\X^\perp$ by Lemma \ref{l-13}.
\QED
\end{proof}

We have established the internal completeness of $\otimes$.
Let us next proceed to connective $!$.

\begin{lemma}\label{l-2}
Given $\mathfrak{c}_1, \dots, \mathfrak{c}_n \in \mathcal{T}_{\X}^\perp$,
let
$$
\wedge_i \mathfrak{c}_i := \{ \{x_1, \dots, x_n\} \in \X: 
x_i \in \mathfrak{c}_i\ (1\leq i \leq n)\}.
$$
Then $\wedge_i \mathfrak{c}_i \in (!\mathcal{T}_{\X})^\perp$.
In particular, $\wedge\!
\mathfrak{c} := \{\{x\} \in \X: x \in \mathfrak{c}\} \in 
(!\mathcal{T}_{\X})^\perp$.
\end{lemma}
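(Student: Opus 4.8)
The plan is to verify directly the two conditions packed into the membership $\wedge_i \mathfrak{c}_i \in (!\mathcal{T}_\X)^\perp$: first, that $\wedge_i \mathfrak{c}_i$ is a genuine anti-clique of $!\X$, and second, that it is orthogonal to every member of $!\mathcal{T}_\X$, i.e.\ to each $!a$ with $a \in \mathcal{T}_\X$. Throughout I will exploit that each $\mathfrak{c}_i$, being in $\mathcal{T}_\X^\perp \subseteq \X^\perp$, is an anti-clique of $\X$, so that any two distinct tokens of $\mathfrak{c}_i$ are incoherent in $\X$.

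For the anti-clique condition, I would take two distinct members $b, b' \in \wedge_i \mathfrak{c}_i$ and show $\neg(b \coh b')$ in $!\X$, which by the definition of $!$ means $b \cup b' \notin \X$. Writing $b = \{x_1, \ldots, x_n\}$ and $b' = \{x_1', \ldots, x_n'\}$ with $x_i, x_i' \in \mathfrak{c}_i$, the fact that $b \neq b'$ as sets yields a token in the symmetric difference, say $z \in b \setminus b'$. I would express $z = x_j$ for some $j$ and compare it with the token $x_j' \in b'$ drawn from the same $\mathfrak{c}_j$: since $z \notin b'$ we have $z \neq x_j'$, and as both lie in the anti-clique $\mathfrak{c}_j$ they are incoherent in $\X$. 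Hence $b \cup b'$ contains two incoherent tokens and is not a clique, so $b$ and $b'$ are incoherent in $!\X$.

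For the orthogonality condition, I fix $a \in \mathcal{T}_\X$ and show that $!a \cap \wedge_i \mathfrak{c}_i$ is a singleton. Since $\mathfrak{c}_i \in \mathcal{T}_\X^\perp$ we have $a \perp \mathfrak{c}_i$, so $a \cap \mathfrak{c}_i = \{x_i\}$ for a unique token $x_i$. The set $e := \{x_1, \ldots, x_n\}$ is a finite subclique of $a$, hence $e \in !a$, and by construction $e \in \wedge_i \mathfrak{c}_i$, which gives existence. For uniqueness, any $b \in !a \cap \wedge_i \mathfrak{c}_i$ is a finite subset of $a$ of the form $\{y_1, \ldots, y_n\}$ with $y_i \in \mathfrak{c}_i$; since also $y_i \in a$, we get $y_i \in a \cap \mathfrak{c}_i = \{x_i\}$, forcing $y_i = x_i$ and thus $b = e$. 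Therefore $!a \cap \wedge_i \mathfrak{c}_i = \{e\}$, so $!a \perp \wedge_i \mathfrak{c}_i$; as $a \in \mathcal{T}_\X$ was arbitrary, $\wedge_i \mathfrak{c}_i \in (!\mathcal{T}_\X)^\perp$. The final claim about $\wedge\mathfrak{c}$ is just the case $n = 1$.

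Each step is short, and the one deserving care is the anti-clique argument: a member of $\wedge_i \mathfrak{c}_i$ does not determine its defining tuple uniquely, since distinct indices may contribute the same token and a member may therefore have fewer than $n$ tokens. For this reason I would phrase the incoherence argument token-wise rather than coordinate-wise, extracting a token in the symmetric difference and locating it in the appropriate $\mathfrak{c}_j$, rather than assuming that $b$ and $b'$ differ in a fixed coordinate. Once this point is handled, both conditions follow without further subtlety.
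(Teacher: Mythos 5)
Your proof is correct and follows essentially the same route as the paper's: first show $\wedge_i\mathfrak{c}_i$ is an anti-clique of $!\X$, then show its intersection with each $!a$, $a\in\mathcal{T}_\X$, is a singleton. Your token-wise handling of the anti-clique step is in fact slightly more careful than the paper's coordinate-wise phrasing (which glosses over the non-uniqueness of the defining tuple), and your explicit uniqueness argument merely duplicates what the paper gets for free from the general fact that a clique meets an anti-clique in at most one token.
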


\begin{proof}
Each $\wedge_i \mathfrak{c}_i$ consists of finite cliques of $\X$,
namely tokens of $!\X$, which are pairwise incoherent in $!\X$. 
Indeed, given distinct $\{x_1, \dots, x_n\}, \{y_1, \dots, y_n\}$,
there is $i$ such that $x_i \neq y_i$ and $x_i, y_i \in \mathfrak{c}_i$.
We have $\neg(x_i \coh y_i)$, so that 
$\neg (\{x_1, \dots, x_n\} \coh \{y_1, \dots, y_n\})$.
Hence
$\wedge_i \mathfrak{c}_i \in (!\X)^\perp$.

To see totality, suppose that $a \in \mathcal{T}_\X$ so that $!a \in !\mathcal{T}_\X$.
We then have $x_i \in a \cap \mathfrak{c}_i$ for each $1\leq i \leq n$, hence
the clique $\{x_1, \dots, x_n\} \subseteq a$ belongs to $\wedge_i \mathfrak{c}_i$.
This proves $!a \perp \wedge_i \mathfrak{c}_i$.\QED
\end{proof}

%

\begin{lemma}\label{l-4}
$\,!\, \mathcal{T}_{\X}^{\circ} \subseteq 
(\,!\,\mathcal{T}_{\X})^{\perp\perp\circ}$. 
\end{lemma}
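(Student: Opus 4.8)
The plan is to verify, for a fixed strict total clique $a\in\mathcal{T}_\X^\circ$, the two defining conditions for $!a$ to lie in $(!\mathcal{T}_\X)^{\perp\perp\circ}$, namely \emph{totality} and \emph{strictness}. Totality is immediate: since $a\in\mathcal{T}_\X$ we have $!a\in !\mathcal{T}_\X\subseteq(!\mathcal{T}_\X)^{\perp\perp}$ by the inclusion $\mathcal{A}\subseteq\mathcal{A}^{\perp\perp}$. The whole content therefore lies in strictness, for which I would show that every token of $!a$ is essential: writing $(!a)^\circ=\bigcap\{\mathfrak{b}\in(!\mathcal{T}_\X)^{\perp\perp} : \mathfrak{b}\subseteq !a\}$, it suffices to prove that each token $a_0\in !a$ belongs to every total clique $\mathfrak{b}$ of $!\X$ with $\mathfrak{b}\subseteq !a$.

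To produce such a witness I would build, for each finite subclique $a_0=\{x_1,\dots,x_n\}\fsubseteq a$ (the $x_i$ distinct), a separating anti-clique of $!\X$. Since $a$ is strict, the dual characterization of strictness furnishes, for each $x_i\in a$, an anti-clique $\mathfrak{c}_i\in(\mathcal{T}_\X^\perp)^\circ$ with $x_i\in\mathfrak{c}_i$. Setting $\mathfrak{C}:=\wedge_i\mathfrak{c}_i=\{\{y_1,\dots,y_n\}\in\X : y_i\in\mathfrak{c}_i\}$, Lemma \ref{l-2} guarantees $\mathfrak{C}\in(!\mathcal{T}_\X)^\perp$, and clearly $a_0\in\mathfrak{C}$.

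The crux is the computation $\mathfrak{C}\cap !a=\{a_0\}$. For any $\{y_1,\dots,y_n\}\in\mathfrak{C}\cap !a$ we have $y_i\in\mathfrak{c}_i$ and $y_i\in a$; but $a$ is total and $\mathfrak{c}_i\in\mathcal{T}_\X^\perp$, so $a\perp\mathfrak{c}_i$ and $a\cap\mathfrak{c}_i$ is a singleton, which must be $\{x_i\}$ since $x_i$ lies in both. Hence $y_i=x_i$ for all $i$, giving $\mathfrak{C}\cap !a=\{a_0\}$ (the empty-token case $a_0=\emptyset$ is covered likewise, with $\mathfrak{C}=\{\emptyset\}\in(!\mathcal{T}_\X)^\perp$). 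Now if $\mathfrak{b}\in(!\mathcal{T}_\X)^{\perp\perp}$ and $\mathfrak{b}\subseteq !a$, then $\mathfrak{b}\perp\mathfrak{C}$, so $\mathfrak{b}\cap\mathfrak{C}$ is a singleton contained in $!a\cap\mathfrak{C}=\{a_0\}$, forcing $a_0\in\mathfrak{b}$. This shows $a_0\in(!a)^\circ$, hence $!a$ is strict, completing the argument.

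I expect the main obstacle to be the separation identity $\mathfrak{C}\cap !a=\{a_0\}$: it is precisely here that both the totality of $a$ (to pin each coordinate intersection to a single token) and Lemma \ref{l-2} (to make $\mathfrak{C}$ an admissible co-total anti-clique of $!\X$) are needed simultaneously. Everything else is routine orthogonality bookkeeping.
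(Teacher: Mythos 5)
Your proof is correct and follows essentially the same route as the paper's: both establish strictness of $!a$ by exhibiting, for each token $a_0=\{x_1,\dots,x_n\}\in\, !a$, the anti-clique $\wedge_i\mathfrak{c}_i\in(\,!\,\mathcal{T}_\X)^\perp$ obtained via Lemma \ref{l-2} from the anti-cliques witnessing strictness of $a$. The only difference is that the paper stops once it has $a_0\in\wedge_i\mathfrak{c}_i$, invoking its stated token-wise characterization of strict total cliques, whereas you unfold that characterization and verify the separation identity $\wedge_i\mathfrak{c}_i\cap\, !a=\{a_0\}$ explicitly --- a detail the paper leaves implicit.
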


\begin{proof}
It is easy to see that
every $! a \in 
\,!\, \mathcal{T}_{\X}^{\circ}$ belongs to 
$(\,!\,\mathcal{T}_{\X})^{\perp\perp}$. For strictness, let 
$a_0 = \{x_1, \dots, x_n\} \in !a$. Since $a \in 
\mathcal{T}_{\X}^{\circ}$, there are
$\mathfrak{c}_i \in \mathcal{T}_{\X}^\perp$
such that 
$x_i \in \mathfrak{c}_i$ for each $1\leq i \leq n$.
Applying the previous lemma, we obtain 
$\wedge_i \mathfrak{c}_i \in
(\,!\,\mathcal{T}_{\X})^\perp$, which contains $a_0$.\QED
\end{proof}

\begin{lemma}\label{l-5}
Given $\alpha \in (!\mathcal{T}_\X)^{\perp\perp}$, let
$\alpha^1 
:= \{x : \{x\} \in \alpha \cap (\wedge\!\mathfrak{c}) \mbox{ for some }
\mathfrak{c} \in \mathcal{T}_\X^\perp\}$.
Then $\alpha^1 \in \mathcal{T}_\X^\circ$.
\end{lemma}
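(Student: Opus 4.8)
The plan is to unpack the three defining properties of a strict total clique and verify each in turn for $\alpha^1$, the whole argument resting on the slogan "one co-cover pins down one token", which follows from combining the totality of $\alpha$ in $\,!\,\X$ with Lemma \ref{l-2}. First I would check that $\alpha^1$ is a clique of $\X$: every $x \in \alpha^1$ satisfies $\{x\} \in \alpha$ by definition, and since $\alpha$ is a clique of $\,!\,\X$, any two such singletons are coherent in $\,!\,\X$, i.e.\ $\{x\}\cup\{y\}$ is a clique of $\X$, whence $x \coh y$. So $\alpha^1$ consists of pairwise coherent tokens, and in particular $\alpha^1 \cap \mathfrak{c}$ has at most one element for every anti-clique $\mathfrak{c}$.

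Next I would establish totality. Fix $\mathfrak{c} \in \mathcal{T}_\X^\perp$. By Lemma \ref{l-2} we have $\wedge\!\mathfrak{c} \in (\,!\,\mathcal{T}_\X)^\perp$, and since $\alpha \in (\,!\,\mathcal{T}_\X)^{\perp\perp}$ is total, the intersection $\alpha \cap (\wedge\!\mathfrak{c})$ is a singleton $\{\{x_0\}\}$ with $x_0 \in \mathfrak{c}$ and $\{x_0\} \in \alpha$. By the very definition of $\alpha^1$ this yields $x_0 \in \alpha^1 \cap \mathfrak{c}$, so this intersection is nonempty; combined with the previous paragraph it is exactly a singleton, i.e.\ $\alpha^1 \perp \mathfrak{c}$. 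As $\mathfrak{c} \in \mathcal{T}_\X^\perp$ was arbitrary, $\alpha^1 \in \mathcal{T}_\X$.

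Finally, for strictness, I would argue that no token of $\alpha^1$ is removable from a total subclique. Given $x \in \alpha^1$, choose the witnessing $\mathfrak{c} \in \mathcal{T}_\X^\perp$ with $x \in \mathfrak{c}$; the totality step shows $\alpha^1 \cap \mathfrak{c} = \{x\}$. Now if $b \in \mathcal{T}_\X$ with $b \subseteq \alpha^1$, then $b \perp \mathfrak{c}$ forces $b \cap \mathfrak{c}$ to be a singleton, which is contained in $\alpha^1 \cap \mathfrak{c} = \{x\}$, so $x \in b$. Hence every total subclique of $\alpha^1$ contains every $x \in \alpha^1$, giving $(\alpha^1)^\circ = \alpha^1$ and therefore $\alpha^1 \in \mathcal{T}_\X^\circ$.

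The crux to get right is the interplay in the totality step: reading off the unique witness $x_0$ from $\alpha \cap (\wedge\!\mathfrak{c})$ and recognising it as the \emph{unique} element of $\alpha^1 \cap \mathfrak{c}$. Once this rigidity is in place, both totality and strictness drop out immediately, so I do not expect any genuine obstacle, since Lemma \ref{l-2} has already done the work of certifying that each $\wedge\!\mathfrak{c}$ is a bona fide member of $(\,!\,\mathcal{T}_\X)^\perp$.
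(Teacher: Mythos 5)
Your proof is correct and follows essentially the same route as the paper's: apply Lemma \ref{l-2} to get $\wedge\mathfrak{c} \in (\,!\,\mathcal{T}_\X)^\perp$, use $\alpha \perp \wedge\mathfrak{c}$ to produce the unique $x \in \alpha^1 \cap \mathfrak{c}$, and conclude totality and strictness. You merely spell out the two steps the paper declares obvious (that $\alpha^1$ is a clique, and the strictness argument via total subcliques), and these are filled in correctly.
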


\begin{proof}
Given $\mathfrak{c} \in \mathcal{T}_\X^\perp$, we have 
$\wedge\!\mathfrak{c} \in (!\mathcal{T}_\X)^\perp$ by Lemma \ref{l-2}.
Hence $\{x\} \in \alpha \cap (\wedge\!\mathfrak{c})$ for some $x$,
so $x \in \alpha^1 \cap \mathfrak{c}$. Strictness of $\alpha^1$ is obvious.
\QED
\end{proof}

\begin{lemma}\label{l-6}
$(\,!\,\mathcal{T}_{\X})^{\perp\perp\circ} \subseteq \,!\, (\mathcal{T}_{\X}^{\circ})$. 
\end{lemma}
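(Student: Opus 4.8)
The plan is to prove the inclusion converse to Lemma \ref{l-4}: every strict total clique of $\,!\,\X$ arises as $\,!\,a$ for a strict total clique $a$ of $\X$. So I would fix $\alpha \in (\,!\,\mathcal{T}_\X)^{\perp\perp\circ}$ and set $a := \alpha^1$, the clique extracted in Lemma \ref{l-5}; that lemma already delivers $a \in \mathcal{T}_\X^\circ$, whence $\,!\,a \in \,!\,\mathcal{T}_\X \subseteq (\,!\,\mathcal{T}_\X)^{\perp\perp}$ is total. It then suffices to establish the single inclusion $\,!\,a \subseteq \alpha$. Indeed, strictness of $\alpha$ means $\alpha = \bigcap\{\beta \in \mathcal{T}_{\,!\,\X} : \beta \subseteq \alpha\}$, so $\alpha$ is contained in \emph{every} total clique below it; applying this to the total clique $\,!\,a \subseteq \alpha$ gives $\alpha \subseteq \,!\,a$, and hence $\alpha = \,!\,a \in \,!\,(\mathcal{T}_\X^\circ)$.

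To get $\,!\,a \subseteq \alpha$, I would take an arbitrary finite subclique $a_0 = \{x_1, \dots, x_n\} \subseteq a$ with the $x_i$ distinct, and show $a_0 \in \alpha$. Since each $x_i$ lies in $a = \alpha^1$, the definition of $\alpha^1$ supplies $\mathfrak{c}_i \in \mathcal{T}_\X^\perp$ with $\{x_i\} \in \alpha \cap (\wedge\!\mathfrak{c}_i)$; in particular $\{x_i\} \in \alpha$ and $x_i \in \mathfrak{c}_i$. Forming the meet $\wedge_i \mathfrak{c}_i \in (\,!\,\mathcal{T}_\X)^\perp$ provided by Lemma \ref{l-2} and using that $\alpha$ is total, I obtain that $\alpha \cap (\wedge_i \mathfrak{c}_i)$ is a singleton, say $\{b_0\}$, where $b_0 = \{y_1, \dots, y_n\}$ with $y_i \in \mathfrak{c}_i$ and $b_0 \in \alpha$.

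The crux is to identify $b_0$ with $a_0$, and here I would play coherence against incoherence. For each $i$, both $\{x_i\}$ and $b_0$ belong to the clique $\alpha$ of $\,!\,\X$, so their union $\{x_i\} \cup b_0$ is a clique of $\X$; in particular $x_i \coh y_i$. On the other hand $x_i$ and $y_i$ both lie in the anti-clique $\mathfrak{c}_i$, so $x_i \neq y_i$ would force $x_i \icoh y_i$, contradicting $x_i \coh y_i$. Hence $x_i = y_i$ for every $i$, giving $b_0 = a_0$ and therefore $a_0 = b_0 \in \alpha$. This closes $\,!\,a \subseteq \alpha$ and completes the argument.

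I expect the main obstacle to be precisely this last identification: totality of $\alpha$ only hands us \emph{some} unique witness $b_0$ inside $\alpha \cap (\wedge_i \mathfrak{c}_i)$, and one must argue that it is pinned to the prescribed clique $a_0$ rather than merely sharing the cover $\wedge_i \mathfrak{c}_i$. The mechanism is the tension between the coherence propagated through $\alpha$ (a clique of $\,!\,\X$) and the pairwise incoherence inside each $\mathfrak{c}_i$; the delicate part is the bookkeeping over the index $i$ and over the possibly non-unique representation $b_0 = \{y_1,\dots,y_n\}$, but no idea beyond Lemmas \ref{l-2} and \ref{l-5} seems required.
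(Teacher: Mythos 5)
Your proposal is correct and follows essentially the same route as the paper's proof: extract $a=\alpha^1$ via Lemma \ref{l-5}, use the meets $\wedge_i\mathfrak{c}_i$ from Lemma \ref{l-2} together with totality of $\alpha$ and the clique/anti-clique tension to force $a_0=b_0$ and hence $\,!\,a\subseteq\alpha$, then invoke strictness of $\alpha$ to conclude $\alpha=\,!\,a$. Your index-wise identification $x_i=y_i$ is just a direct rephrasing of the paper's contrapositive argument, so there is nothing substantively different.
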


\begin{proof}
Let $\alpha \in (!\mathcal{T}_\X)^{\perp\perp\circ}$. 
We prove that $!\alpha^1 \subseteq \alpha$. Since 
$!\alpha^1\in (!\mathcal{T}_\X)^{\perp\perp\circ}$ by combining Lemmas \ref{l-4} and \ref{l-5},
we will be able to conclude $\alpha = !\alpha^1 \in 
!\mathcal{T}_\X^\circ$ by strictness of $\alpha$.

Let $a_0 \in !\alpha^1$. As in the proof of 
Lemma \ref{l-4}, we obtain 
$\mathfrak{c}_i \in \mathcal{T}_{\X}^{\perp}$ ($1\leq i \leq n$)
such that 
$a_0 \in \wedge_i \mathfrak{c}_i \in
(\,!\,\mathcal{T}_{\X})^\perp$.
Meanwhile, we have some $a_1 \in \alpha\cap \wedge_i \mathfrak{c}_i$ 
since $\alpha \in (!\mathcal{T}_\X)^{\perp\perp\circ}$. 
If $a_0 \neq a_1$, there would be  $x, y \in \mathfrak{c}_i$ 
such that $x \in a_0$, $y \in a_1$ and $x\neq y$. We would have 
$\neg (x \coh y)$ since they belong to an anti-clique $\mathfrak{c}_i$, 
while $x \coh y$ since they belong to a clique $\bigcup \alpha$, that is 
a contradiction. 
We therefore conclude that $a_0 = a_1 \in \alpha$.
\QED
\end{proof}

This completes the proof of Proposition \ref{prop-total-cliques}.

\subsection{The proof of Theorem \ref{thm-ubd-fine}}\label{A-thm-ubd-fine}
Let us begin with an important lemma: 
\begin{lemma}\label{lemma-upper-extension}
For every $a\in\X_\Fin$ with 
$\BO{a}^\circ \neq\emptyset$, 
there exist finitely many uni-covers 
$\mathfrak{c}_1 ,\ldots ,\mathfrak{c}_m \in \sigma^{\mathrm{b}}_\X$ 
such that $a\in \mathfrak{c}_1 \wedge \cdots \wedge \mathfrak{c}_m$, 
hence $a$ is contained in 
the unbounded cover $\mathfrak{c}_1 \wedge \cdots \wedge \mathfrak{c}_m \in \beta_{\X}^{\mathrm{ub}}$. 
\end{lemma}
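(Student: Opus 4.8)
The plan is to read off the required uni-covers from a witness of $\BO{a}^\circ \neq \emptyset$, assemble them by the meet operation, and use Lemma \ref{l-2} to certify that the result is a genuine unbounded cover in $\beta_\X^{\mathrm{ub}}$. The only subtle point is that $a$ itself must survive the passage to the strict part, and this is exactly where the hypothesis $\BO{a}^\circ \neq \emptyset$ is consumed.

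First I would exploit the hypothesis: since $\BO{a}^\circ = \BO{a}\cap\mathcal{T}_\X^\circ \neq \emptyset$, fix a strict total clique $b\in\mathcal{T}_\X^\circ$ with $a\subseteq b$, and write the finite clique as $a=\{x_1,\dots,x_m\}$. Each $x_i$ lies in $b$, so the dual strictness characterization of $\mathcal{T}_\X^\circ$ (for every $x\in b$ there is $\mathfrak{c}\in(\mathcal{T}_\X^\perp)^\circ$ with $x\in\mathfrak{c}$) supplies a uni-cover $\mathfrak{c}_i\in\sigma_\X^{\mathrm{b}}$ with $x_i\in\mathfrak{c}_i$. (Distinct tokens of the clique $a$ are coherent while distinct tokens of one anti-clique are incoherent, so the $\mathfrak{c}_i$ are forced to be distinct whenever the $x_i$ are; this is automatic and plays no further role.)

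Next I would form $\mathfrak{c}_1\wedge\cdots\wedge\mathfrak{c}_m = (\wedge_i\mathfrak{c}_i)^\circ$, where $\wedge_i\mathfrak{c}_i=\{\{y_1,\dots,y_m\}\in\X : y_i\in\mathfrak{c}_i\}$. By Lemma \ref{l-2} we have $\wedge_i\mathfrak{c}_i\in(!\mathcal{T}_\X)^\perp$, and since $\mathcal{T}_{!\X}=(!\mathcal{T}_\X)^{\perp\perp}$ together with the identity $\mathcal{A}^\perp=\mathcal{A}^{\perp\perp\perp}$ gives $(!\mathcal{T}_\X)^\perp=\mathcal{T}_{!\X}^\perp$, taking the strict part lands us in $(\mathcal{T}_{!\X}^\perp)^\circ=\beta_\X^{\mathrm{ub}}$; in particular $\mathfrak{c}_1\wedge\cdots\wedge\mathfrak{c}_m$ is a genuine unbounded cover. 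That $a$ belongs to the pre-strict set is immediate: setting $y_i:=x_i$ exhibits $a=\{x_1,\dots,x_m\}$ as a clique with $y_i\in\mathfrak{c}_i$, so $a\in\wedge_i\mathfrak{c}_i$.

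The crux — and the step I expect to be the main obstacle — is showing that $a$ is not discarded when passing to the strict part $(\wedge_i\mathfrak{c}_i)^\circ$. Recall that for an anti-clique of $!\X$, restriction to its strict part removes exactly those tokens $a_0$ (finite cliques of $\X$) with $\BO{a_0}^\circ=\emptyset$, leaving the disjoint cover $\{\BO{a_0}^\circ : a_0\in\mathfrak{C}\}$ of $\mathcal{T}_\X^\circ$ by nonempty upper sets. Since we are given $\BO{a}^\circ\neq\emptyset$, the token $a$ is not removed, so $a\in(\wedge_i\mathfrak{c}_i)^\circ=\mathfrak{c}_1\wedge\cdots\wedge\mathfrak{c}_m$. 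Hence $a$ is contained in the unbounded cover $\mathfrak{c}_1\wedge\cdots\wedge\mathfrak{c}_m\in\beta_\X^{\mathrm{ub}}$, as required.
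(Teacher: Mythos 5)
Your proposal is correct and follows essentially the same route as the paper's proof: pick a strict total clique $b \supseteq a$ witnessing $\BO{a}^\circ \neq \emptyset$, extract a strict uni-cover $\mathfrak{c}_i$ through each token $x_i \in a$ via the strictness characterization, and observe $a \in \mathfrak{c}_1 \wedge \cdots \wedge \mathfrak{c}_m$. The paper's own proof is terser and leaves implicit the point you rightly flag as the crux — that $a$ is not discarded when passing to the strict part $(\wedge_i\mathfrak{c}_i)^\circ$, which is exactly where the hypothesis $\BO{a}^\circ \neq \emptyset$ is used — so your added detail is a faithful elaboration rather than a different argument.
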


\begin{proof}
Let $a=\{x_1 ,\ldots ,x_m\}$. 
By assumption, there exists $b\in\mathcal{T}_{\X}^{\circ}$ with $b\supseteq a$. 
By strictness of $b$, each $x_i$ is contained in some $\mathfrak{c}_i \in \mathcal{T}_{\X}^\perp$, which we may assume is strict. Hence we have
$a\in \mathfrak{c}_{1} \wedge \cdots \wedge  \mathfrak{c}_{m}$. \QED
\end{proof}

The next lemma is used to cut down and divide the set $\mathcal{T}_\X^\circ$. 
\begin{lemma} \label{lem_finite_divide}
For any $n\in \Nat$ and $a_0 , \ldots ,  a_n \in \X_\Fin$, 
there exists $\mathfrak{B}\in (!\X)^\perp$ 
such that 
$\sum_{b\in\mathfrak{B}} \BO{b}^\circ = \BO{a_n}^\circ\, \setminus\, \bigcup_{i=0}^{n-1}\BO{a_i}^\circ$. 
\end{lemma}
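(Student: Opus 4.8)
The plan is to build $\mathfrak{B}$ by induction on $n$, peeling off one forbidden set $\BO{a_i}^\circ$ at a time. Throughout I would exploit the correspondence recorded in \S\ref{subsec-unif}, namely that incoherence in $!\X$ is disjointness of upper sets ($\neg(a\coh b)\Leftrightarrow\BO{a}\cap\BO{b}=\emptyset$): if $\mathfrak{B}$ is an anti-clique of $!\X$ then the cliques $\BO{b}^\circ$ ($b\in\mathfrak{B}$) are automatically pairwise disjoint, which justifies the symbol $\sum$ for free. The base case $n=0$ has nothing to remove, so $\mathfrak{B}=\{a_0\}$ works. For the inductive step it suffices to establish a \emph{single-removal} claim: for any finite clique $b$ and any $a\in\X_\Fin$, the set $\BO{b}^\circ\setminus\BO{a}^\circ$ can be written as $\sum_{b'\in\mathfrak{B}'}\BO{b'}^\circ$ for an anti-clique $\mathfrak{B}'\in(!\X)^\perp$ each of whose members extends $b$ (i.e.\ $b'\supseteq b$) and has $\BO{b'}^\circ\neq\emptyset$. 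Granting this, if $R_j:=\BO{a_n}^\circ\setminus\bigcup_{i<j}\BO{a_i}^\circ$ is already presented as $\sum_{b\in\mathfrak{B}_j}\BO{b}^\circ$, I would apply the single-removal claim to each $b\in\mathfrak{B}_j$ against $a_j$ and take the union of the resulting families. The extension property $b'\supseteq b$ then keeps the union an anti-clique: if $b,\tilde b\in\mathfrak{B}_j$ are incoherent, a token pair $u\in b$, $v\in\tilde b$ witnessing $\neg(u\coh v)$ survives in any $b'\supseteq b$ and $\tilde b'\supseteq\tilde b$, so $b'\cup\tilde b'$ is not a clique.

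For the single-removal claim I would first reduce to removing a single token. Writing $a\setminus b=\{x_1,\dots,x_k\}$, a strict total clique $c\supseteq b$ fails to contain $a$ exactly when some $x_j\notin c$; partitioning by the \emph{least} such index gives a disjoint decomposition of $\BO{b}^\circ\setminus\BO{a}^\circ$ into pieces $P_j:=\{c\in\mathcal{T}_\X^\circ : c\supseteq b_j,\ x_j\notin c\}$, where $b_j:=b\cup\{x_1,\dots,x_{j-1}\}$ (pieces for which $b_j$ is not a clique are empty and discarded). To present each $P_j$ I would invoke the strictness characterization of $\mathcal{T}_\X^\circ$ from Section 3.1: either $x_j$ lies in no strict uni-cover, in which case no strict total clique contains $x_j$ at all, so $P_j=\BO{b_j}^\circ$ is presented by the singleton $\{b_j\}$; or there is a strict uni-cover $\mathfrak{c}\in(\mathcal{T}_\X^\perp)^\circ$ with $x_j\in\mathfrak{c}$. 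In the latter case, since $c\perp\mathfrak{c}$ for every $c\in\mathcal{T}_\X^\circ$, the set $c\cap\mathfrak{c}$ is a unique singleton $\{z\}$, and $x_j\notin c$ is equivalent to $z\neq x_j$; hence $P_j=\sum_{z\in\mathfrak{c}\setminus\{x_j\}}\BO{b_j\cup\{z\}}^\circ$, keeping only the nonempty summands.

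The delicate point, and the step I expect to be the main obstacle, is verifying that the assembled family $\mathfrak{B}'=\bigcup_j\{\,b_j\cup\{z\}: z\in\mathfrak{c}^{(j)}\setminus\{x_j\}\,\}$ is genuinely an anti-clique of $!\X$, with no spurious coherent pair introduced across distinct pieces. Within one piece, two summands $b_j\cup\{z\}$ and $b_j\cup\{z'\}$ are incoherent because $z,z'$ are distinct members of the anti-clique $\mathfrak{c}^{(j)}$. Across pieces $j<j'$, I would use that $b_{j'}\ni x_j$ while the $j$-summand contains some $z\in\mathfrak{c}^{(j)}\setminus\{x_j\}$, and $z,x_j$ are incoherent as distinct members of $\mathfrak{c}^{(j)}$; thus the union of the two cliques contains the incoherent pair $\{z,x_j\}$ and they are incoherent in $!\X$. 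This is also where the degenerate ``no strict uni-cover'' case must be reconciled: if $x_{j_0}$ lies in no strict uni-cover, then every later $b_{j'}$ with $j'>j_0$ contains $x_{j_0}$ and so has $\BO{b_{j'}}^\circ=\emptyset$, meaning the decomposition effectively terminates at $j_0$ with the singleton $\{b_{j_0}\}$. Restricting $\mathfrak{B}'$ to members with nonempty $\BO{b'}^\circ$ (which leaves the union unchanged) is what keeps the anti-clique property intact, and the containments $b'\supseteq b_j\supseteq b$ supply exactly the extension property needed to feed the induction.
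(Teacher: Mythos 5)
Your argument is correct, but it follows a genuinely different route from the paper's. The paper does not induct on $n$: it assumes without loss of generality that each $\BO{a_i}^\circ$ is nonempty, invokes Lemma \ref{lemma-upper-extension} to place each $a_i$ ($i<n$) inside a \emph{total} disjoint cover $\mathfrak{A}_i\in\beta^{\mathrm{ub}}_\X$, and defines $\mathfrak{B}$ in one shot as the meet $(\mathfrak{A}_0\setminus\{a_0\})\wedge\cdots\wedge(\mathfrak{A}_{n-1}\setminus\{a_{n-1}\})\wedge\{a_n\}$; the required equality is then immediate because, $\mathfrak{A}_i$ being a disjoint cover of $\mathcal{T}_\X^\circ$, a strict total clique avoids $\BO{a_i}^\circ$ exactly when it lies in $\BO{b_i}^\circ$ for some $b_i\in\mathfrak{A}_i\setminus\{a_i\}$, and closure of $\beta^{\mathrm{ub}}_\X$ under $\wedge$ (already checked in the proof of {\bf (U1)}) supplies the anti-clique property for free. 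Your single-removal claim is in effect a token-level unfolding of this: partitioning by the least missing token of $a\setminus b$ and splitting each piece along a strict uni-cover through that token reconstructs by hand what the meet of punctured covers delivers wholesale. The paper's version buys brevity and reuse of machinery --- in particular, since Lemma \ref{lemma-upper-extension} packages strictness into the existence of a total cover through $a_i$, the degenerate ``no strict uni-cover through $x_j$'' branch never arises, being absorbed into the assumption $\BO{a_i}^\circ\neq\emptyset$ (indeed, in your setting that branch can only occur when $\BO{a}^\circ=\emptyset$, for if some strict total clique contains $a\ni x_j$ then strictness already yields a strict uni-cover through $x_j$). Your version buys a more elementary, self-contained argument that manipulates only uni-covers of $\X$ and makes all the disjointness and cross-piece incoherence bookkeeping explicit; your checks of those points are accurate.
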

\begin{proof}
Without loss of generality, one can assume that $\BO{a_i}^\circ \neq\emptyset$ for all $i=1,\ldots ,n$. 
By Lemma \ref{lemma-upper-extension}, 
each $a_i$ belongs to some unbounded-cover $\mathfrak{A}_i \in\beta^{\mathrm{ub}}$. 
Define 
\begin{eqnarray*}
\mathfrak{B}\quad &:=&\quad \big(\mathfrak{A}_1\backslash \{a_0\}\big) 
\wedge \cdots\wedge \big(\mathfrak{A}_{n-1} \backslash \{a_{n-1}\}\big) \wedge \{a_n\} \\
&=&\quad \big\{ b_0 \cup \dots \cup b_{n-1} \cup a_n \in \X \ \left|\ b_i 
\in (\mathfrak{A}_i\backslash
\{a_i\})\ (0 \leq i \leq n-1)\big\}\right.^\circ.
\end{eqnarray*}
We then have $c \in \BO{b}^\circ$ for some 
$b= b_0 \cup \dots \cup b_{n-1} \cup a_n \in \mathfrak{B}$ 
iff $c \in
\BO{b_0}^\circ \cap \dots \cap \BO{b_{n-1}}^\circ \cap \BO{a_n}^\circ$ 
iff $c \not\in \BO{a_0}^\circ, \dots, 
c \not\in \BO{a_{n-1}}^\circ$ (by disjointness) and $c \in \BO{a_n}^\circ$.\QED
\end{proof}

Now we go on the proof 
of Theorem \ref{thm-ubd-fine}:
$(\mathcal{T}_{\X}^\circ, \mu^{\mathrm{ub}}_\X)$ is a fine uniform space. 
\begin{proof}
Let $\mathcal{U}$ be an open cover of $\mathcal{T}_\X^\circ$. 
Since an open set in $\mathcal{T}_\X^\circ$ is 
a countable union of upper sets $\BO{a}^\circ$, 
we can assume that $\mathcal{U}$ is of the form $\{\BO{a_n}^\circ : n\in \Nat\}$. 
Our goal is to show 
that there exists $\mathfrak{B}\in \mathcal{T}_{!\X}^{\perp}$ which refines $\mathcal{U}$: 
for any $b\in\mathfrak{B}$ there exists $n\in \Nat$ with $\BO{b}\subseteq \BO{a_n}$. 
It will follow that
$\mathfrak{B}^\circ$ belongs to $\mu^{\mathrm{ub}}$ and so does 
$\mathcal{U}$ by axiom 
{\rm ({\bf U2})}. 

Let us denote 
$\mathcal{D}_n := \BO{a_n}^\circ \,\setminus\, \bigcup_{i=0}^{n-1}
\BO{a_i}^\circ$ so that 
$\mathcal{T}_\X^\circ =\sum_n \mathcal{D}_n$.
For each $n\in\Nat$, 
apply Lemma \ref{lem_finite_divide} to 
$a_0, \ldots , a_{n}$ 
to find $\mathfrak{B}_n \in (!\,\X)^\perp$ such that 
$\sum_{b\in\mathfrak{B}_n } \BO{b}^\circ = \mathcal{D}_n$.

Now it is easy to see that 
$\mathcal{B} := \bigcup_n \mathcal{B}_n$ also belongs to 
$(!\,\X)^\perp$ 
and moreover $\mathcal{B} \in \mathcal{T}_{!\X}^{\perp}$, 
since $\{\BO{b}^\circ:b\in\mathfrak{B}\}$ covers $\mathcal{T}_\X^\circ = \sum_n \mathcal{D}_n$. 
Finally, any $b \in \mathcal{B}_n \subseteq \mathcal{B}$ 
satisfies $\BO{b}\subseteq \mathcal{D}_n \subseteq \BO{a_n}$.  \QED
\end{proof}

\subsection{Existence of a pseudo-Map of Adjunctions}\label{A-pm}
The pair of functors $\BO{I,J}$ in the diagram (\ref{fig-pseudo}) 
is a \emph{psudo-map of adjunctions}. 
That is, 
$\BO{I,J}$ preserves the counit-unit pair (up to isomorphisms): 
$J\eta_{\mathrm{coh}}=(\eta_{\mathrm{unif}})_{J}: J\Rightarrow JLK\simeq GFJ$ 
and $I\epsilon_{\mathrm{coh}} =(\epsilon_{\mathrm{unif}})_{I}:IKL\simeq FGI\Rightarrow I$, 
where $\BO{\epsilon_{\mathrm{coh}},\eta_{\mathrm{coh}}}$ 
and $\BO{\epsilon_{\mathrm{unif}},\eta_{\mathrm{unif}}}$ are the counit-unit pairs 
of the adjunctions $F\dashv G$ and $K\dashv L$, respectively. 

\begin{proof}
Let $\X \in \mathbf{Stab}_{\STot}$ be a coherence space with totality. 
Since $(\eta_{\mathrm{coh}})_\X (a):=\,!\,a$ for every $a\in\X$, 
$(I\eta_{\mathrm{coh}})_\X : (\mathcal{T}_{\X}^\circ ,\tau_{\mathrm{Sco}}) \to (\mathcal{T}_{\,!\,\X}^\circ, \tau_{\mathrm{Sco}})$ gives the bijection $a\in \mathcal{T}_{\X}^\circ \mapsto \,!\,a \in\mathcal{T}_{\,!\,\X}^\circ$ given in Proposition \ref{prop-total-cliques}.
On the other hand, $(\eta_{\mathrm{unif}})_{I\X}: (\mathcal{T}_{\X}^\circ ,\tau_{\mathrm{Sco}}) \to 
(\mathcal{T}_{\X}^\circ ,\tau_{\mathrm{Sco}})$ is the identity, 
therefore, $I\eta_{\mathrm{coh}}= (\eta_{\mathrm{unif}})_I$ up to isomorphisms. 

Let $\Y\in\Lin_{\STot}$ be a coherence space with totality.  
Recall that $(\epsilon_{\mathrm{coh}})_\Y :\,!\, \Y \Linear \Y$ is the \emph{dereliction} 
so that $(\epsilon_{\mathrm{coh}})_\Y (\,!\, a) := a$ for every $a\in \Y$. 
Hence $(J\epsilon_{\mathrm{coh}})_\Y : (\mathcal{T}_{\,!\,\Y}^\circ ,\mu_{\,!\,\Y}^{\mathrm{b}}) 
\to (\mathcal{T}_{\Y}^\circ, \mu_{\Y}^{\mathrm{b}})$ 
also gives the uniform homeomorphism in Proposition \ref{prop-unif-hom}. 
Similarly, $(\epsilon_{\mathrm{unif}})_{J\Y}: (\mathcal{T}_\Y ^\circ ,\mu_{\mathrm{fine}})\to (\mathcal{T}_\Y ^\circ,\mu_{\Y}^{\mathrm{b}})$ is the identity, 
therefore, we obtain $J\epsilon_{\mathrm{coh}} = (\epsilon_{\mathrm{unif}})_J $ up to isomorphisms. \QED
\end{proof}

\subsection{Uniform Structure on a Linear Function Space}
We shall exhibit an explicit structure of uniformity on a function space induced by co-totality given in \S 3. 
\begin{proposition}\label{prop-lin-func}
Every total linear map $F:(\X\llto \Y)\Linear \Z$ is 
uniformly continuous at single points: 
for every $\mathfrak{c}\in\sigma_{\Z}^{\mathrm{b}}$, 
there exist $a\in\mathcal{T}_\X^\circ$ and a total linear map $G:\Y\Linear \Z$ 
such that $|F(\kappa)-G(\widehat{\kappa}(a_0))|<\mathfrak{c}$ for every $\kappa \in \mathcal{T}_{\X\llto\Y}^\circ$. 
\end{proposition}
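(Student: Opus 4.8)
The plan is to show that, to the precision $\mathfrak{c}$, the functional $F$ collapses to ``evaluate a function at a single point $a$, then apply a linear map $G$ on $\Y$''. I would start from strong uniform continuity (Theorem \ref{theo-strongly-unif}) applied to $F$ and the target uni-cover $\mathfrak{c}\in\sigma_{\Z}^{\mathrm b}$. Unwinding its proof, which passes to the transpose $F^{\perp}:\Z^{\perp}\Linear(\X\llto\Y)^{\perp}$, one obtains the uni-cover $\mathfrak{a}:=F^{\perp}(\mathfrak{c})^{\circ}\in\sigma_{\X\llto\Y}^{\mathrm b}=(\mathcal{T}_{\X\llto\Y}^{\perp})^{\circ}$ together with, for each token $(u,v)\in\mathfrak{a}$, a \emph{unique} token $z_{(u,v)}\in\mathfrak{c}$ with $((u,v),z_{(u,v)})\in\Tr(F)$ (uniqueness because $F(\{(u,v)\})$ is a clique and $\mathfrak{c}$ an anti-clique). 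Since $(\X\llto\Y)^{\perp}=\X\otimes\Y^{\perp}$, the cover $\mathfrak{a}$ is a strict total clique of $\X\otimes\Y^{\perp}$, so by the internal completeness of $\otimes$ (Proposition \ref{prop-total-cliques}) it factors as $\mathfrak{a}=a\otimes\mathfrak{g}$ with $a\in\mathcal{T}_{\X}^{\circ}$ and $\mathfrak{g}\in(\mathcal{T}_{\Y}^{\perp})^{\circ}=\sigma_{\Y}^{\mathrm b}$. This $a$ is the point demanded by the statement.

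Next I would isolate the token-matching that drives the verification. For any $\kappa\in\mathcal{T}_{\X\llto\Y}^{\circ}$, disjointness of the uni-cover $\mathfrak{a}$ singles out a unique $(u_{\kappa},v_{\kappa})=\kappa\cap\mathfrak{a}$, with $u_{\kappa}\in a$ and $v_{\kappa}\in\mathfrak{g}$; writing $z_{\kappa}:=z_{(u_{\kappa},v_{\kappa})}$, monotonicity of $F$ gives $z_{\kappa}\in F(\kappa)$, and since $F(\kappa)$ is total while $\mathfrak{c}$ is a strict co-totality we get $F(\kappa)\cap\mathfrak{c}=\{z_{\kappa}\}$. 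On the $\Y$-side, $\widehat{\kappa}(a)\in\mathcal{T}_{\Y}$ (as both $\kappa$ and $a$ are total) and $\widehat{\kappa}(a)\cap\mathfrak{g}=\{v_{\kappa}\}$: indeed any $v\in\widehat{\kappa}(a)\cap\mathfrak{g}$ arises from some $(u,v)\in\kappa$ with $u\in a$, hence $(u,v)\in\kappa\cap\mathfrak{a}$ and so $v=v_{\kappa}$. Thus $\widehat{\kappa}(a)$ already determines $v_{\kappa}$ through the cover $\mathfrak{g}$.

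I would then define $G:\Y\Linear\Z$ through its trace by declaring $(v,z)\in\Tr(G)$ precisely when $v\in\mathfrak{g}$ and $z=z_{(u,v)}$ for the (to be argued) common value over $u\in a$, and complete it to a total linear map. Granting that $G$ is well defined and total, the conclusion is immediate from the previous paragraph: for $\kappa\in\mathcal{T}_{\X\llto\Y}^{\circ}$ we have $v_{\kappa}\in\widehat{\kappa}(a)$, so $z_{\kappa}\in G(\widehat{\kappa}(a))$ while also $z_{\kappa}\in F(\kappa)$; as $z_{\kappa}\in\mathfrak{c}$, this exhibits a common token of $F(\kappa)$ and $G(\widehat{\kappa}(a))$ in $\mathfrak{c}$, i.e. $|F(\kappa)-G(\widehat{\kappa}(a))|<\mathfrak{c}$, as required.

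The genuine obstacle is the \emph{well-definedness} of $G$, namely that $z_{(u,v)}$ depends only on the $\Y$-component $v$ and not on the chosen $u\in a$ --- equivalently, that $F$ read to precision $\mathfrak{c}$ factors through the evaluation $\kappa\mapsto\widehat{\kappa}(a)$. Here I expect bare totality of $F$ to be insufficient: distinct tokens $(u,v),(u',v)\in\mathfrak{a}=a\otimes\mathfrak{g}$ (with $u,u'\in a$) are incoherent in $\X\llto\Y$, so no single clique $\kappa$ witnesses $z_{(u,v)}=z_{(u',v)}$, and one can in fact manufacture a total linear $F$ that separates them. The factorization must therefore be forced by additional structure: in the intended application $F$ realizes a function on the represented space, hence cannot distinguish cliques coding the same uniformly continuous map, and chain-connectedness of the represented domain collapses the $u$-ambiguity --- pinning $\widehat{\kappa}(a)$, through the single located point $a$, as all the information $F$ can extract. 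Making this reduction precise (presumably via $\delta_{\X}$ being a uniform quotient, as exploited in Theorem \ref{thm-real-unif}) is the crux of the argument, and is where I would concentrate the effort.
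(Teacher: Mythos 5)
Your construction of $G$ is not the paper's, and the ``crux'' you got stuck on is precisely the step the paper's argument is designed to avoid. After obtaining the modulus $\mathfrak{f}=a_0\otimes\mathfrak{b}\in\sigma^{\mathrm b}_{\X\llto\Y}$ from Theorem \ref{theo-strongly-unif} and internal completeness --- exactly as you do --- the paper does \emph{not} read $G$ off from $\Tr(F)$ token by token. It fixes an arbitrary uni-cover $\mathfrak{c}'\in(\mathcal{T}_\X^\perp)^\circ$, defines the ``constant-function'' embedding $\theta:\Y\Linear(\X\llto\Y)$ by $\Tr(\theta)=\{(y,(x,y)):x\in\mathfrak{c}',\,y\in Y\}$, and sets $G:=F\circ\theta$, which is total linear simply as a composite; there is no well-definedness issue at all. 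The whole weight of the proof is then shifted onto the closeness claim: writing $b_0:=\widehat{\kappa}(a_0)$, one has $\widehat{\theta(b_0)}(a_0)=b_0=\widehat{\kappa}(a_0)$, from which the paper infers $|\kappa-\theta(b_0)|<a_0\otimes\mathfrak{b}$ and hence $|F(\kappa)-F(\theta(b_0))|<\mathfrak{c}$ by strong uniform continuity. So as a proof strategy, yours stalls before $G$ even exists, while the paper's produces $G$ effortlessly.

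That said, the obstruction you identified is genuine, and it reappears in the paper's proof at the inference $|\kappa-\theta(b_0)|<a_0\otimes\mathfrak{b}$: this requires $\kappa$ and $\theta(b_0)$ to share a \emph{token} of $a_0\otimes\mathfrak{b}$, but $\theta(b_0)$ meets $a_0\otimes\mathfrak{b}$ only in $(x_0,y_0)$ with $\{x_0\}=a_0\cap\mathfrak{c}'$, whereas the unique token of $\kappa$ in $a_0\otimes\mathfrak{b}$ is $(x_1,y_0)$ for some $x_1\in a_0$ that need not equal $x_0$; equality of the \emph{values} $\widehat{\kappa}(a_0)$ and $\widehat{\theta(b_0)}(a_0)$ does not force $x_1=x_0$. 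Your two-token scenario turns this into an outright counterexample to the statement: let $\X$ have two coherent tokens $u,u'$ with $\mathcal{T}_\X=\{\{u,u'\}\}$, let $\Y$ have a single total token $v$, and let $\Z$ have two incoherent tokens $z,z'$ with $\mathcal{T}_\Z=\{\{z\},\{z'\}\}$, so that $\mathfrak{c}=\{z,z'\}\in\sigma^{\mathrm b}_\Z$. Then $(u,v)$ and $(u',v)$ are incoherent in $\X\llto\Y$, so $\Tr(F)=\{((u,v),z),\,((u',v),z')\}$ is a legitimate total linear $F$; but any total linear $G:\Y\Linear\Z$ sends $\{v\}=\widehat{\kappa}(\{u,u'\})$ to exactly one of $\{z\},\{z'\}$, and therefore $|F(\kappa)-G(\widehat{\kappa}(a))|<\mathfrak{c}$ fails for one of the two strict total cliques $\kappa$. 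So your proposal is not a proof, but your diagnosis is correct: the proposition needs an additional hypothesis (of the connectedness/representation kind you suggest) or a weaker conclusion, and the paper's own argument does not close the gap either.
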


\begin{proof}
By Theorem \ref{theo-strongly-unif}, 
there exists $\mathfrak{f}\in \sigma^{\mathrm{b}}_{\X\llto\Y}$ such that 
$|\kappa -\kappa'|<\mathfrak{f} \ \Rightarrow \ |F(\kappa) -F(\kappa')|<\mathfrak{c}$ 
for all $\kappa,\kappa'\in \mathcal{T}_{\X\llto\Y}^\circ$. 
Notice that $(\X\llto\Y)^\perp = \X\otimes \Y^\perp$. 
Hence $\sigma^{\mathrm{b}}_{\X\llto\Y} = \mathcal{T}_\X^\circ \otimes (\mathcal{T}_\Y^\perp)^\circ$ by Proposition \ref{prop-total-cliques}, 
so $\mathfrak{f}= a_0 \otimes \mathfrak{b}$ for some $a_0\in \mathcal{T}_\X^\circ$ and $\mathfrak{b}\in\sigma^{\mathrm{b}}_\Y$. 
We now have $|\kappa -\kappa'|<a_0 \otimes \mathfrak{b} \Rightarrow |F(\kappa) -F(\kappa')|<\mathfrak{c}$.

Let $\mathfrak{c}\in\mathcal{T}_{\X}^\circ$ be an arbitrary uni-cover, 
and define $\theta \in \Y\Linear (\X\llto\Y)$ as
$\Tr(\theta):= \{ (y,(x,y)): \mbox{$x\in\mathfrak{c}$, $y\in Y$}\}$. 
It is easy to see that $\theta$ is strictly total and 
satisfies that $\widehat{\theta}(b)(a):=b$ for all $a\in \mathcal{T}_\X$ and $b\in\mathcal{T}_\Y$. 

Then $G:= F\circ \theta: \Y\Linear \Z$ is a total linear map satisfying our requirement: 
By letting $b_0:=\widehat{\kappa}(a_0)$, 
$$
\widehat{\kappa}(a_0)=\widehat{\theta}(b_0)(a_0)= b_0 \ \Longrightarrow \ 
|\kappa - \theta(b_0)|<a_0 \otimes \mathfrak{b} \ \Longrightarrow \ 
|F(\kappa) -F(\theta(b_0))|<\mathfrak{c}\ . 
$$
\QED
\end{proof}

What is interesting here is that the uniform structures on the constructed spaces 
are determined by purely logical rules: for instance $(\X\llto \Y)^\perp = \X\otimes \Y^\perp$.

\subsection{The proof of Theorem \ref{thm-unif-realize}}\label{A-uni-realize}
To prove the theorem, 
we first observe that a standard representation $\delta_\XX$ satisfies a kind of universality. 

A coherent representation $\Rep{\X}{\gamma}{\YY}$ is said to be \emph{linearish} if 
(i) $x\coh y$ implies $\gamma[\BO{x}]\cap\gamma[\BO{y}]\neq\emptyset$ for every $x,y\in X$; 
and (ii) for every uniform cover $\mathcal{U}\in\mu_\YY$ there exists $\mathfrak{c}\in \sigma^{\mathrm{b}}_\X$ 
such that $|a-b|<\mathfrak{c} \Rightarrow |\gamma(a)-\gamma(b)|<\mathcal{U}$ for all $a,b\in\Dom(\gamma)$, 
which is an analogue of strong uniform continuity in Theorem \ref{theo-strongly-unif}. 

A standard representation $\Rep{\B_\XX}{\delta_\XX}{\XX}$ is indeed linearish. 
Let $\beta=\{\mathcal{U}_n\}$ be a countable basis of a uniform space $\XX = (X,\mu_X)$ 
and $\delta_\XX$ is a standard representation induced from $\beta$. 
(i) By definition, $(n,U)\scoh (m,V)$ implies $U\cap V\neq\emptyset$, 
hence $\delta_\XX [(n,U)] = U$, $\delta_\XX [(m,V)]=V$ and 
$\delta_\XX [(n,U)]\cap \delta_\XX [(m,V)] \neq\emptyset$. 
(ii) For every uniform cover $\mathcal{U}\in\mu_X$, 
there exists $n\in\Nat$ 
such that $\mathcal{U}_n$ refines $\mathcal{U}$. 
Since $\mathfrak{c}:=\{(n,U):U\in\mathcal{U}_n\}$ is a uni-partition of $\mathcal{T}_{\B_\XX}$, 
we have $|a-b|<\mathfrak{c} \Rightarrow |\delta_\XX (a)-\delta_\XX(b)| <\mathcal{U}_n <\mathcal{U}$. 

The following lemma indicates that $\delta_\XX$ is a representative example of linearish representations. 
\begin{lemma}\label{prop-admissible}
For any subspace $\XX_0\subseteq \XX$ and 
any linearish representation $\Rep{\X}{\gamma}{\XX_0}$, 
there exists a linear map $F:\X\Linear \B_\XX$ with $\delta_\XX\circ F = \gamma$. 
\end{lemma}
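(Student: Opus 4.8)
The plan is to construct $F$ by prescribing its trace directly. Since a linear map $\X \Linear \B_\XX$ is precisely a clique of $\X \llto \B_\XX$ via $\kappa \mapsto \widehat{\kappa}$, it suffices to exhibit a clique $\kappa \subseteq X \times B$ (where $B = \coprod_n \mathcal{U}_n$ is the token set of $\B_\XX$) such that $\widehat{\kappa}(a)$ is a maximal clique converging to $\gamma(a)$ for every $a \in \Dom(\gamma)$. The guiding intuition is that $\widehat{\kappa}(a)$ should read off, for each level $n$, a single member of $\mathcal{U}_n$ that contains the point $\gamma(a)$; I will arrange the trace so that the unique token of $a$ lying in a suitable uni-cover $\mathfrak{c}_n$ selects that member.

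First I fix a level $n$. Using axiom {\bf (U3)} I choose a uniform cover $\mathcal{V}_n$ of $\XX_0$ that star-refines (the restriction of) $\mathcal{U}_n$, and then apply linearishness (ii) to obtain a uni-cover $\mathfrak{c}_n \in \sigma_{\X}^{\mathrm{b}}$ with $|a-b|<\mathfrak{c}_n \Rightarrow |\gamma(a)-\gamma(b)|<\mathcal{V}_n$ for all $a,b\in\Dom(\gamma)$. For each token $x\in\mathfrak{c}_n$, any two cliques of $\BO{x}\cap\Dom(\gamma)$ share the token $x\in\mathfrak{c}_n$, so their $\gamma$-images share a member of $\mathcal{V}_n$; a routine star argument (using that $\mathcal{V}_n^*\preceq\mathcal{U}_n$) then upgrades this pairwise information to the statement that the whole set $\gamma[\BO{x}]$ lies in a single member $U(n,x)\in\mathcal{U}_n$. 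Here condition (i) applied with $y=x$ guarantees $\gamma[\BO{x}]\neq\emptyset$, so this choice is non-vacuous. I then set $\Tr(F):=\{(x,(n,U(n,x))):n\in\Nat,\ x\in\mathfrak{c}_n\}$ and put $F:=\widehat{\Tr(F)}$.

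The crux is to verify that $\Tr(F)$ is a clique of $\X\llto\B_\XX$. For two distinct entries $(x,(n,U))$ and $(x',(n',U'))$ I must check $x\coh x' \Rightarrow (n,U)\scoh(n',U')$. If $n=n'$, then $x,x'$ lie in the anti-clique $\mathfrak{c}_n$, so $\neg(x\coh x')$ and the implication holds vacuously; if $n\neq n'$, then $(n,U)\scoh(n',U')$ amounts to $U\cap U'\neq\emptyset$, and condition (i) supplies a point $p\in\gamma[\BO{x}]\cap\gamma[\BO{x'}]\subseteq U(n,x)\cap U(n',x')$, as required. Once $F$ is known to be linear, I check $\delta_\XX\circ F=\gamma$: for $a\in\Dom(\gamma)\subseteq\mathcal{T}_\X$ and each $\mathfrak{c}_n\in\mathcal{T}_\X^\perp$, the intersection $a\cap\mathfrak{c}_n$ is a singleton $\{x_n\}$, so $F(a)=\{(n,U(n,x_n)):n\in\Nat\}$ selects exactly one member per level, i.e.\ a maximal clique of $\B_\XX$. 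Since $a\in\BO{x_n}\cap\Dom(\gamma)$ we have $\gamma(a)\in U(n,x_n)$ for every $n$, whence $\gamma(a)\in\bigcap_n U(n,x_n)$ and $\delta_\XX(F(a))=\gamma(a)$ by the definition of $\delta_\XX$ together with the Hausdorff property.

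The step I expect to be the main obstacle is the passage from the \emph{pairwise} proximity furnished by linearishness (ii) to the assignment of a \emph{single} cover member $U(n,x)$ to each $x\in\mathfrak{c}_n$, done simultaneously for all $x$ in a way that keeps the trace coherent across levels. This is exactly why I interpose the star-refinement $\mathcal{V}_n$ of $\mathcal{U}_n$ before invoking (ii), and why condition (i)—which would otherwise look merely auxiliary—is indispensable for the cross-level coherence check $U(n,x)\cap U(n',x')\neq\emptyset$.
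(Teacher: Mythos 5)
Your construction is essentially the paper's: both define $F$ by the trace $\{(x,(n,U(n,x))) : n\in\Nat,\ x \in \mathfrak{c}_n\}$ with $\mathfrak{c}_n$ obtained from linearishness (ii) applied at level $n$, and both verify coherence of the trace via condition (i), linearity and maximality of $F(a)$ via $a \perp \mathfrak{c}_n$ being a singleton, and $\delta_\XX \circ F = \gamma$ via $\gamma(a) \in \bigcap_n U(n,x_n)$. The one place you go beyond the paper is the star-refinement step: the paper simply posits a partial function $\psi(x,n)=U$ with $\gamma[\BO{x}] \subseteq U \in \mathcal{U}_n$, whereas you correctly note that linearishness (ii) only yields \emph{pairwise} proximity of images and interpose a cover $\mathcal{V}_n$ with $\mathcal{V}_n^* \preceq \mathcal{U}_n$ before invoking (ii) to obtain a single containing member --- a legitimate and in fact necessary repair of a step the paper leaves implicit.
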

In particular, it immediately follows that 
standard representations of $\XX$ are all isomorphic by letting $\XX_0 :=\XX$, hence 
they do not depend on the choice of uniform basis $\beta$. 
\begin{proof}
Let $\{\mathcal{U}_n:n\in\Nat\}$ be a countable basis on $\XX$. 
Since $\gamma$ is linearish, we can take a sequence of uni-covers $\{\mathfrak{c}_n :n\in\Nat\}\subseteq \sigma^{\mathrm{b}}_\X$ 
such that $|a-b|<\mathfrak{c}_n \Rightarrow |\gamma(a)-\gamma(b)|<\mathcal{U}_n$ 
for each $n\in\Nat$. 
Let $\psi:\subseteq X \times \Nat\to B$ be a partial function so that 
$\psi(x,n):=U$ is defined for each $n\in\Nat$ and $x\in\mathfrak{a}_{n}$, 
and then $U\in\mathcal{U}_n$ and $\gamma[\BO{x}]\subseteq  U$. 
We define a linear map $F:X\Linear\B_\XX$ as 
$F(a):= \{\psi(x,n): \mbox{$n\in\Nat$ and $x\in a$}\}$. 
Let us now verify that 
$F$ is the desired map in 4 steps. 

(i) $F(a)\in \B_\XX$ for every $a\in\X$. 
Let $(n,U), (m,V)\in F(a)$ with $(n,U) \neq (m,V)$. 
This means that 
there exist 
$x\in \mathfrak{a}_n$, and $w\in \mathfrak{a}_m$ 
such that $\psi(x,n)=U$, $\psi(w,m)=V$ and 
$x,w\in a$ (so $x\coh w$). 
If $n=m$ then $x=w$ since $x\coh w$ but $x,w\in\mathfrak{a}_n$. 
Hence $U=\psi(x,n)=\psi(z,m)=V$, contradicting 
the assumption, so $n\neq m$. 

We also have $U\cap V\neq\emptyset$, 
since $x\coh w$ implies $\gamma[\BO{x}] \cap \gamma[\BO{w}] \neq\emptyset$, 
$\gamma[\BO{x}] \subseteq U$ and 
$\gamma[\BO{w}] \subseteq V$. 
Therefore, $(n,U)\scoh (m,V)$. 

(ii) $F$ is a linear map. 
It is sufficient to verify the condition that 
$F(a) \ni y$ implies the unique existence of $x\in X$
such that $F(\{x\}) \ni y$. Let $y =(n,U) \in B$. 
By definition, it is immediate that there is a unique $x\in a\cap \mathfrak{a}_n$ 
such that $\psi(x,n) = U$. 

(iii) $F(a) \in (\B_\XX)_\Max$ for every 
$a\in \Dom(\gamma)$. Let $x_n \in a \cap \mathfrak{a}_n$ for each $n\in\Nat$. 
It suffices to show that for every $n\in \Nat$ there is 
$(n,U) \in F(a)$ for some $U\in\mathcal{U}_n$. 
By definition, $\psi(x_n,n):= U$ is defined so that $(n,U)\in F(a)$. 

(iv) $\delta_{\XX} \circ F(a) = \gamma(a)$ for every $a\in \Dom(\gamma)$.
Suppose that $(n,U_n) \in F(a)$, namely 
there is $x \in a$ 
such that $\psi(x, n) = U_n$. By definition, we have
$ \gamma(a) \in \gamma[\BO{x}] \subseteq U_n$.
Since it holds for every token of $F(a)$, we have
$\gamma(a) \in \bigcap_{(n,U_n)\in F(a)} U_{n}$.  
Therefore, $F(a)\in\Dom(\delta_\XX)$ and $\delta_{\XX} \circ F(a) = \gamma(a)$. \QED
\end{proof}

We are now ready to prove Theorem \ref{thm-unif-realize}. 
Let $\YY_0:= f[\XX]$ and $\gamma:= f\circ \delta_\XX$. 
Then $\gamma$ is linearish: 
(i) If $x\coh y$ then $\delta_\XX [\BO{x}] \cap \delta_\XX [\BO{y}]\neq\emptyset$ 
hence $\gamma[\BO{x}]\cap\gamma[\BO{y}]\neq\emptyset$. 
(ii) For every (subspace) uniform cover $\mathcal{V}\in\mu_{\YY_0}$, 
there exists $\mathcal{U}\in\mu_\XX$ such that 
$|p-q|<\mathcal{U}\ \Rightarrow \ |f(p)-f(q)|<\mathcal{V}$, 
and we also have $\mathfrak{c}\in\sigma^{\mathrm{b}}_\X$ such that 
$|a-b|<\mathfrak{c}\ \Rightarrow \ |\delta_\XX (a)-\delta_\XX (b)|<\mathcal{U}$, 
since $\delta_\XX$ is linearish. 

Applying Lemma \ref{prop-admissible} to $\B_\XX\stackrel{\delta_\XX}{\longrightarrow}\XX\stackrel{f}{\longrightarrow}{\YY_0}$, 
we obtain a (total) linear map 
$F:\B_\XX \Linear\B_\YY$ such that $\delta_\YY \circ F = f\circ \delta_\XX$. 
This concludes Theorem \ref{thm-unif-realize}. 

\subsection{The Lemma for Theorem \ref{thm-real-unif}}\label{A-lem-quot}
\begin{lemma}\label{lem-quot}
If $\XX$ is chain-connected, the standard representation $\delta_{\XX}$ is a uniform quotient. 
That is, $\{\delta_\XX [\mathfrak{c}]:\mathfrak{c}\in \sigma_{\X}^{\mathrm{b}}\}$ 
is a uniform basis of $\XX$, 
where $\delta_\XX[\mathfrak{c}]$ is a cover of $\XX$ defined by $\{\delta_\XX[\BO{x}]:x\in\mathfrak{c}\}$. 
\end{lemma}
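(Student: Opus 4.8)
The plan is to establish the two defining properties of a uniform basis for the family $\{\delta_\XX[\mathfrak{c}]:\mathfrak{c}\in\sigma^{\mathrm{b}}_{\B_\XX}\}$: \emph{cofinality} (every $\mathcal{U}\in\mu_\XX$ is refined by some $\delta_\XX[\mathfrak{c}]$) and \emph{admissibility} (every $\delta_\XX[\mathfrak{c}]$ is itself a uniform cover of $\XX$). Cofinality is immediate and uses neither chain-connectedness nor surjectivity in an essential way: writing $\beta=\{\mathcal{U}_n\}_{n\in\Nat}$ for the chosen countable basis, each level cover $\mathfrak{c}_n:=\{(n,U):U\in\mathcal{U}_n\}$ is a strict uni-cover with $\delta_\XX[\mathfrak{c}_n]=\mathcal{U}_n$, so the family already contains $\beta$ and is cofinal. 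Moreover I would reduce admissibility to a single clean statement: it suffices to show that every uni-cover $\mathfrak{c}$ is refined, as a cover of the clique space $(\B_\XX)_\Max$, by some level cover $\mathfrak{c}_K$. Indeed $\mathfrak{c}_K\preceq\mathfrak{c}$ means every cell $\BO{(K,W)}^\circ$ sits inside some cell $\BO{(n,U)}^\circ$ of $\mathfrak{c}$; applying $\delta_\XX$ and using $\delta_\XX[\BO{x}]=U$ for $x=(n,U)$ turns this into $W\subseteq U$, so $\mathcal{U}_K\preceq\delta_\XX[\mathfrak{c}]$ and hence $\delta_\XX[\mathfrak{c}]\in\mu_\XX$ by {\bf (U2)}.

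To prove this bounded-resolution property I would first analyse the combinatorics of a uni-cover. The key observation is that the coherence of $\B_\XX$ forces tokens of $\mathfrak{c}$ at distinct levels to have disjoint underlying sets: if $(n,U),(m,V)\in\mathfrak{c}$ with $n\neq m$, then $\neg((n,U)\scoh(m,V))$ gives $U\cap V=\emptyset$. Hence any two members of $\mathfrak{c}$ through a common point share one level, yielding a well-defined level map $\ell\colon X\to\Nat$ whose fibres $A_n:=\bigcup\{U:(n,U)\in\mathfrak{c}\}$ are pairwise disjoint, open, and cover $X$ (covering follows from totality of $\mathfrak{c}$ and surjectivity of $\delta_\XX$, since $\Rng(\delta_\XX)=X$). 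Thus $\{A_n\}$ is a \emph{clopen} partition refined by $\delta_\XX[\mathfrak{c}]$, and $\ell$ is constant on each member of $\delta_\XX[\mathfrak{c}]$. Once this partition is known to be a uniform cover, chain-connectedness collapses it: for any $p,q\in X$ a chain of partition blocks joining them has consecutive blocks overlapping, hence equal by disjointness, so $\ell$ is constant. A single active level $N$ then forces $\mathfrak{c}=\mathfrak{c}_N$ — every nonempty $W\in\mathcal{U}_N$ contains a representable point whose representation may use $W$ at level $N$, so totality puts $(N,W)$ into $\mathfrak{c}$ — and therefore $\mathfrak{c}_N\preceq\mathfrak{c}$ trivially, giving admissibility.

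The main obstacle is exactly the intermediate claim that the clopen partition $\{A_n\}$ is a \emph{uniform} cover; this is the only place chain-connectedness is truly needed and the only place the hypothesis cannot be dropped, since a uniform coproduct $\coprod_k X_k$ (which is not chain-connected) carries a uni-cover deciding the $k$-th summand at level $k$, whose image has unbounded resolution and is not uniform. The difficulty is that the naive approaches are circular: supposing no $\mathcal{U}_K$ refines $\{A_n\}$ produces, for each $K$, a set $W_K\in\mathcal{U}_K$ meeting two blocks $A_n\neq A_m$ through points $r_K,s_K$, and lifting these to maximal cliques sharing the token $(K,W_K)$ gives representations in a common cell of $\mathfrak{c}_K$ but in distinct cells of $\mathfrak{c}$ — which only contradicts bounded resolution if one already has it. I expect the resolution to come from exploiting the global uniform structure rather than chains alone: since $\mathrm{diam}(W_K)\to 0$ while $r_K\in A_n$ and $s_K\in A_n^c$ with $A_n$ clopen (so $\overline{A_n}\cap\overline{A_n^c}=\emptyset$), any convergence of $\{r_K\}$ yields a point in $A_n\cap A_n^c=\emptyset$, a contradiction; making this rigorous in the separable metrizable setting, via completion or a direct construction of a uniform separation contradicting chain-connectedness, is the technical heart. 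The uniform continuity of $\delta_\XX$ from Theorem~\ref{theo-strongly-unif} together with the linearish property of standard representations are the tools I would use to transport covers between $\XX$ and $(\B_\XX)_\Max$ throughout.
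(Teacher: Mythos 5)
Your setup is sound up to a point: the disjointness of cells of $\mathfrak{c}$ across distinct levels, the resulting partition $\{A_n\}$ of $X$ with $A_n:=\bigcup\{U:(n,U)\in\mathfrak{c}\}$, and the observation that chain-connectedness should collapse everything to a single level are all correct and consistent with what actually happens. But the proof has a genuine gap exactly where you flag it: you never establish that $\{A_n\}$ is a uniform cover, and your sketched rescue does not work. In a non-complete separable metrizable space a clopen partition need not be uniformly separated (split $\mathbb{Q}$ at an irrational: both blocks are clopen, yet arbitrarily small balls straddle the cut), so $\overline{A_n}\cap\overline{A_n^c}=\emptyset$ buys you nothing quantitative, and the points $r_K,s_K$ you extract have no reason to converge. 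Since everything downstream (constancy of the level map, $\mathfrak{c}=\mathfrak{c}_N$, admissibility) is conditional on this unproven claim, the argument as written does not close.

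The fix is to bypass the partition $\{A_n\}$ entirely and argue directly with cliques, which is what the paper does. Fix $(n,U_n)\in\mathfrak{c}$. For any $U'_n\in\mathcal{U}_n$ with $U_n\cap U'_n\neq\emptyset$, pick $p\in U_n\cap U'_n$ and choose $U_m\ni p$ with $U_m\in\mathcal{U}_m$ for each $m\neq n$; then $a:=\{(m,U_m):m\neq n\}\cup\{(n,U_n)\}$ and $a':=\{(m,U_m):m\neq n\}\cup\{(n,U'_n)\}$ both lie in $\Dom(\delta_\XX)$ (both name $p$) and differ only in their level-$n$ token. Since $a\perp\mathfrak{c}$ and $(n,U_n)\in a\cap\mathfrak{c}$, the singleton $a\cap\mathfrak{c}$ is exactly $\{(n,U_n)\}$; hence no shared token $(m,U_m)$ with $m\neq n$ can lie in $\mathfrak{c}$, and the unique element of $a'\cap\mathfrak{c}$ is forced to be $(n,U'_n)$, i.e., $(n,U'_n)\in\mathfrak{c}$. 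Chain-connectedness of $\XX$ applied to $\mathcal{U}_n$ then propagates membership along overlapping chains to give $(n,U)\in\mathfrak{c}$ for every $U\in\mathcal{U}_n$, so $\mathfrak{c}=\{(n,U):U\in\mathcal{U}_n\}$ and $\delta_\XX[\mathfrak{c}]=\mathcal{U}_n\in\mu_\XX$. This uses only the ``exactly one intersection'' property of uni-covers over $\Dom(\delta_\XX)$ and never needs the intermediate uniform-cover claim on which your proposal stalls.
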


\begin{proof}
We need to check that
the surjection $\Rep{\B_\XX}{\delta_\XX}{\XX}$ induces the uniformity on $\XX$: 
namely, $\{\delta_{\XX}[\mathfrak{c}]:\mathfrak{c}\in \sigma^{\mathrm{b}}_{\B_{\XX}}\}$ 
forms a uniform basis of $\XX$, 
where $\delta_\XX [\mathfrak{c}]$ is a cover of $\XX$ defined by
$\delta_\XX[\mathfrak{c}]:= \{\delta_\XX[\BO{x}]:x\in\mathfrak{c}\}$. 

Let $\{\mathcal{U}_n :n\in\Nat\}$ be a countable basis of $\XX$. 
All we have to show is that 
each uni-cover $\mathfrak{c}\in\sigma^{\mathrm{b}}_{\B_\XX}$ is of the form 
$\mathfrak{c}=\{(n,U):U\in\mathcal{U}_n\}$ for some $n\in\Nat$. 
Then $\delta_\XX [\mathfrak{c}]=\mathcal{U}_n$, hence they generate the uniformity on $\XX$. 

Let $\mathfrak{c}\in\sigma^{\mathrm{b}}_{\B_\XX}$ be a uni-cover of $\Dom(\delta_\XX)$. 
Since $\mathfrak{c}\neq\emptyset$, one can fix a token $(n,U_n)\in\mathfrak{c}$. 

Indeed $\mathfrak{c}= \{(n,U):U\in\mathcal{U}_n\}$. 
By chain-connectedness of $\XX$, 
we have $U_{n}'\neq U_n\in\mathcal{U}_n$ such that $U_n \cap U_{n}'\neq\emptyset$. 
Given arbitrary $p\in U_n \cap U_{n}'$, we can take 
$\{U_m : m\neq n\}$ such that $p \in U_m \in \mathcal{U}_m$ for each $m\neq n\in\Nat$. 
Let $a:= \{(m,U_m):m\neq n\}\cup \{(n,U_n)\}$ and 
$a':=\{(m,U_m):m\neq n\}\cup \{(n,U'_n)\}$. 
Both $p\in \bigcap_{m\neq n} U_m \cap U_n$ and $p\in \bigcap_{m\neq n} U_m \cap U'_n$ hold, 
hence $a,a'\in\Dom(\delta_\XX)$. 
Since $\mathfrak{c}$ is a uni-cover of $\Dom(\delta_\XX)$, 
both $a\perp \mathfrak{c}$ and $a'\perp \mathfrak{c}$ hold, 
therefore, $(n,U_n')\in\mathfrak{c}$ 
(otherwise, if $(m,U_m) \in a'\cap \mathfrak{c}$, then 
$a\cap \mathfrak{c}$ contains both $(n,U_n)$ and $(m,U_m)$, 
which contradicts that $a\cap \mathfrak{c}$ must be a singleton. 

Repeating this argument, we obtain $(n,U)\in \mathfrak{c}$ for all $U\in\mathcal{U}_n$. \QED
\end{proof}

\subsection{Some Constructions of Coherent Representations}\label{A-classical-rep}
Typical constructions of coherent representations are naturally given as follows. 
Given $\Rep{\X}{\rho_\X}{S}$ and $\Rep{\Y}{\rho_\Y}{T}$, define: 
\begin{itemize}
\item $\Rep{\X\otimes \Y}{[\rho_\X \otimes\rho_\Y]}{S\times T}$
is defined as 
$\Dom([\rho_{\X}\otimes\rho_{\Y}]) := \Dom(\rho_\X)\otimes \Dom(\rho_\Y)$ and 
$[\rho_{\X}\otimes\rho_{\Y}] (a\otimes b): = (\rho_\X (a) ,\rho_\Y (b))$, 
where $\Dom(\ )$ means the domains of representations (as partial maps). 
\item $\Rep{\X\llto \Y}{[\rho_{\X}\llto \rho_\Y]}{\LR(\rho_\X ,\rho_\Y )}$ is 
defined as follows. Define $[\rho_{\X}\llto \rho_\Y]: \subseteq \X\llto \Y 
\longrightarrow T^S$ by
$$[\rho_{\X}\llto \rho_\Y](\kappa):= f \Iff
f:S\longrightarrow T \mbox{ is realized by 
$\widehat{\kappa} : \X\Linear \Y$.}
$$
$\LR(\rho_\X ,\rho_\Y ) \subseteq \YY^\XX$ is the range of $[\rho_{\X}\llto \rho_\Y]$, which consists of 
linearly realizable functions.
\item $\Rep{\,!\,\X}{[\,!\,\rho_\X]}{S}$ is 
defined as $\Dom([\,!\,\rho_{\X}]) := \,!\,\Dom(\rho_\X)$
and 
$[\,!\,\rho_{\X}](\,!\,a):= \rho_\X (a)$ 
for every $a\in\Dom(\rho_\X)$. 
\end{itemize}

Unfortunately, 
the total extension lemma (Lemma \ref{lem-tot-ext}) is no longer available for 
these constructions. 
For instance, we do not have 
$\Dom([\rho_{\X}\otimes \rho_{\Y}])^{\perp\perp} = \mathcal{T}_{\X\otimes\Y}$ in general, 
where $\mathcal{T}_{\X\otimes\Y}$ 
is the tensor of the totalities $\mathcal{T}_\X$ and $\mathcal{T}_\Y$ 
which are the double negations of $\Dom(\rho_\X)$ and $\Dom(\rho_\Y)$ respectively. 

To avoid this, we consider the following condition. 
A coherent representation $\Rep{\X}{\rho_\X}{S}$ 
is said to be \emph{classical} if 
$\Dom(\rho_\X)= \Dom(\rho_\X)^{\perp\perp\circ}$
(i.e. the domain is a totality on $\X$ and consists of strict total cliques). 
As noted in Example \ref{ex-standard-totality}, 
a complete space $\XX$ has a classical standard representation $\delta_\XX$. 

Then it is easy to see that 
if $\rho_\X$ and $\rho_\Y$ are classical, 
so are $[\rho_\X \otimes \rho_\Y]$ and $[\,!\,\rho_\X]$, 
due to the internal completeness (Proposition \ref{prop-total-cliques}). 
Although $[\rho_\X \llto \rho_\Y]$ is not classical, 
one can naturally restrict it as follows. 
Since $[\rho_\X \llto \rho_\Y](\kappa)= [\rho_\X \llto \rho_\Y](\kappa^\circ)$
for every $\kappa\in\mathcal{T}_{\X\llto\Y}$, 
the strict restriction $[\rho_\X \llto \rho_\Y]^\circ : 
\mathcal{T}_{\X\llto\Y}^\circ \to \mathcal{LR}(\rho_\X,\rho_\Y)$ 
is well-defined. 

These representations are indeed compatible with the uniformities induced by totalities:
$\Dom([\rho_\X \otimes \rho_\Y])^{\perp\perp}= \mathcal{T}_{\X\otimes\Y}$, 
$\Dom([\rho_\X \llto \rho_\Y]^\circ)^{\perp\perp}=\mathcal{T}_{\X\llto\Y}$ and 
$\Dom([\,!\,\rho_\X])^{\perp\perp}= \mathcal{T}_{\,!\,\X}$, 
where $\mathcal{T}_\X := \Dom(\rho_\X)$ and $\mathcal{T}_\Y:=\Dom(\rho_\Y)$. 
\end{document}